\documentclass[preprint,3p,times]{elsarticle}
\RequirePackage{multirow,booktabs,subfigure,color,array,hhline,makecell}
\usepackage{amssymb}
\usepackage{amsmath}
\usepackage{graphicx}
\usepackage{amsthm}
\usepackage{mathrsfs}
\usepackage{indentfirst}
\usepackage[colorlinks,citecolor=blue,urlcolor=blue]{hyperref}
\allowdisplaybreaks
\usepackage[table]{xcolor}
\usepackage{tikz-network}
\usepackage{pgf}
\usepackage{tikz}
\usepackage{subfigure}
\usepackage[T1]{fontenc}\usepackage[utf8]{inputenc}
\usetikzlibrary{arrows, decorations.pathmorphing, backgrounds, positioning, fit, petri, automata}
\usetikzlibrary{shadows,arrows,positioning}
\RequirePackage{algorithm,algpseudocode}
\allowdisplaybreaks
\usepackage{changes}

\DeclareMathOperator{\rank}{rank}   
\usepackage{setspace}
\newtheorem{thm}{Theorem}
\newtheorem{defin}{Definition}
\newtheorem{lem}{Lemma}

\newtheorem{rem}{Remark}
\newtheorem{cor}{Corollary}

\allowdisplaybreaks[4]
\DeclareMathOperator{\diag}{diag} 
\AtBeginDocument{%
	\providecommand\BibTeX{{%
			\normalfont B\kern-0.5em{\scshape i\kern-0.25em b}\kern-0.8em\TeX}}}
\journal{~}

\begin{document}

\begin{frontmatter}
\title{Exact Community Recovery in Bipartite Networks}
\author[label1]{Huan Qing\corref{cor1}}
\ead{qinghuan@u.nus.edu~\&~qinghuan@cqut.edu.cn}
\cortext[cor1]{Corresponding author.}
\address[label1]{School of Economics and Finance, Chongqing University of Technology, Chongqing, 400054, China}

\begin{abstract}
Community detection in bipartite networks is a fundamental problem in modern data analysis, with applications in recommendation systems, biological networks, and social network analysis. Unlike conventional unipartite graphs, bipartite networks consist of two distinct types of nodes with edges only connecting across types, so recovering latent communities requires estimating labels on the two node types. The stochastic co-blockmodel is a classical probabilistic framework for such networks, yet theoretical guarantees for exact community recovery in this setting remain limited, especially when the number of communities grows, the community sizes are unbalanced, or the degrees are heterogeneous. In this work, we prove that a simple spectral clustering algorithm based on the diagonal-deleted Gram matrix achieves exact recovery with high probability under mild conditions on sparsity, community balance, and the number of clusters. We further extend the result to the degree-corrected stochastic co-blockmodel, where each node carries its own degree heterogeneity parameter, and show that a row-normalized version of the same algorithm maintains the exact recovery guarantee. Extensive experiments validate our theoretical findings.
\end{abstract}

\begin{keyword}
Community detection \sep Bipartite networks \sep Exact recovery \sep Stochastic co-blockmodel \sep Spectral clustering 
\end{keyword}
\end{frontmatter}

\section{Introduction}\label{sec:intro}
Bipartite networks, also known as two-mode networks, consist of two distinct types of nodes, with edges only connecting nodes of different types \citep{everett2013dual, latapy2008basic}. Such networks arise in a wide range of applications, including voter-candidate relationships in election campaigns \citep{doreian2013partitioning, kleinnijenhuis2013adjustment}, customer-product and customer-merchant interactions in online platforms and recommendation systems \citep{huang2007analyzing, martens2016mining}, fund-stock and investor-asset networks in finance \citep{souma2003complex, marotta2015bank,squartini2017enhanced}, author-paper relationships in citation and scientific collaboration networks \citep{newman2001structure, ji2016coauthorship}, and plant-pollinator networks in ecology \citep{young2021reconstruction}. Community detection is a fundamental task in network analysis \citep{fortunato2010community,MALLIAROS201395,fortunato2016community}. In bipartite settings, the task seeks to partition nodes on both sides of the bipartition into latent groups so that nodes within the same group share similar connection patterns to the other side. Uncovering these groups helps reveal the underlying functional structure of complex systems and supports further modeling and inference.

To capture the latent community structures in bipartite networks, \cite{rohe2016co} introduced the stochastic co-blockmodel (ScBM) as a generalization of the classical stochastic blockmodel (SBM) introduced by \cite{holland1983stochastic} from undirected networks to bipartite networks. In ScBM, each node is assigned to a latent community, and the probability of an edge from a node on the sending side (rows) to a node on the receiving side (columns) depends solely on their respective community memberships. Recognizing that real-world networks exhibit degree heterogeneity, \cite{rohe2016co} also proposed the degree-corrected stochastic co-blockmodel (DCScBM), which extends the popular degree-corrected stochastic blockmodel (DCSBM) developed by \cite{karrer2011stochastic} from undirected networks to bipartite settings. In DCScBM, each node is assigned an individual scale parameter, allowing for highly heterogeneous degree distributions within communities.

A range of algorithms have been developed for these models. Spectral methods are popular due to their computational efficiency, strong empirical performance, and tractability for theoretical analysis \citep{von2007tutorial}. In the single-part setting, spectral methods have been extensively studied for SBM and DCSBM, where consistency and strong consistency have been established under various sparsity and degree-heterogeneity regimes in \cite{rohe2011spectral,lei2015consistency,qin2013regularized,jin2015fast,joseph2016impact,su2020strong,su2021jmlr,jing2022community,deng2024strong}; exact recovery guarantees have also been obtained for SBM in \cite{lei2020unified,abbeentry2020}. For a comprehensive overview of community detection in SBM, see \cite{abbe2018community}; for a statistical perspective on spectral methods, see \cite{chen2021spectral}. For ScBM, spectral clustering algorithms have been shown to achieve estimation consistency by \cite{zhou2019analysis}. For directed networks, which can be viewed as a special case of bipartite networks, \cite{wang2020spectral} proposed the D-SCORE algorithm, building upon the SCORE framework of \cite{jin2015fast}, and proved its consistency under the directed DCSBM. For DCScBM, algorithms developed in \cite{rohe2016co,qing2023community} have been shown to achieve estimation consistency. \cite{guo2023randomized} developed randomized spectral co-clustering algorithms for large-scale directed networks, establishing approximation error bounds under ScBM and DCScBM while reducing computational cost via random projection and random sampling techniques. 

However, it remains unclear whether any algorithm can achieve exact recovery---meaning zero misclassification error with high probability---in ScBM and its degree-corrected extension. Existing exact recovery results have been developed for various model variants under different assumptions. \cite{zhou2020optimal} characterized the minimax optimal weak consistency under ScBM, with exact recovery only arising in a logarithmic sparsity regime. \cite{ndaoud2022TIT} established exact recovery for one side of the bipartition in the two-community ScBM using a spectral method with iterative refinement, while accommodating moderate community imbalance. \cite{braun2023minimax} developed a generalized iterative algorithm with spectral initialization under ScBM, allowing for arbitrary numbers of communities on each side, and established exact recovery guarantees. \cite{braun2024strong} proved that a simple spectral algorithm suffices for exact recovery of the row communities in ScBM under the optimal sparsity regime, provided that the communities are approximately balanced. Despite these advances, none of these results cover the degree-corrected stochastic co-blockmodel, which accommodates individual node heterogeneity. Moreover, they do not provide exact recovery guarantees under mild conditions on number of communities, sparsity, and community balance. These observations motivate our development of spectral methods that achieve exact recovery under ScBM and DCScBM.

In this paper, we develop efficient spectral algorithms for exact community recovery in ScBM and its degree-corrected version. Under ScBM, our approach constructs the diagonal-deleted Gram matrix, extracts its leading eigenvectors, and applies k-means to recover the communities. We show that the proposed method exactly recovers the community labels in the general ScBM setting, where our conditions allow the number of communities to grow with the network size and characterize the trade-off between sparsity, community balance, and other model parameters. We then extend the analysis to DCScBM, where we employ a row-normalized variant of the same spectral procedure to accommodate degree heterogeneity, and show that the exact recovery guarantee is retained under mild conditions on the degree parameters. When the degree-correction parameters are all equal, our DCScBM results reduce to the ScBM guarantees. Numerical experiments are conducted to validate our theoretical findings.

The remainder of the paper is organized as follows. Section~\ref{sec:SCBMrecovery} formalizes ScBM, presents the spectral clustering algorithm, and states the main exact recovery theorem. Section~\ref{sec:DCScBM} extends the results to DCScBM. Section~\ref{sec:num} presents numerical experiments, and Section~\ref{sec:conclusion} concludes the paper. All technical proofs are collected in the appendix.

\paragraph*{Notation}
For a vector \(v\), \(\|v\|_2\) denotes its Euclidean norm. For any matrix \(M\), \(\|M\|\) denotes its spectral norm (largest singular value), \(\|M\|_F\) denotes its Frobenius norm, \(\|M\|_\infty\) denotes its entrywise maximum absolute value, \(\|M\|_{2,\infty} := \max_i \|M_{i,:}\|_2\) denotes its maximum row Euclidean norm, \(M^\top\) denotes its transpose, \(\sigma_k(M)\) denotes its \(k\)-th largest singular value, \(\operatorname{rank}(M)\) denotes its rank, and \(\det(M)\) denotes its determinant. \(\operatorname{diag}(a_1,a_2,\dots,a_p)\) is the diagonal matrix with entries \(a_i\), and \(I_p\) is the \(p\times p\) identity matrix. The vector \(e_k\) denotes the \(k\)-th standard basis vector. \(\mathbb{R}\) denotes the set of real numbers. For a random variable \(X\), \(\mathbb{E}[X]\) denotes its expectation. We write \(a_N \gg b_N\) if \(b_N/a_N \to 0\), \(a_N \succeq b_N\) if \(a_N \ge C b_N\) for some absolute constant \(C>0\), \(a_N \preceq b_N\) if \(a_N \le C b_N\) for some absolute constant \(C>0\), and \(a_N \asymp b_N\) if \(a_N \succeq b_N\) and \(b_N \succeq a_N\).
\section{Stochastic Co-Blockmodels}\label{sec:SCBMrecovery}

This section develops the theoretical guarantees for exact community recovery in bipartite networks. We first introduce the stochastic co-blockmodel (ScBM), which accommodates asymmetric interactions between two distinct node types. We then establish structural properties of the expected adjacency matrix that justify the use of spectral methods. A simple spectral clustering algorithm is presented, followed by our main result: explicit conditions under which the algorithm recovers the true communities exactly with high probability. 
\subsection{Model and Algorithm}

We consider a bipartite network with row nodes (senders) and column nodes (receivers). Edges exist only from row nodes to column nodes, and there are no edges within the same node type. We now formally introduce the stochastic co-blockmodel (ScBM) of \cite{rohe2016co} in the following definition.
\begin{defin}\label{def:ScBM}
A \emph{stochastic co-blockmodel} (ScBM) with \(n_y\) row nodes, \(n_z\) column nodes, \(K_y\) row communities, and \(K_z\) column communities, where \(1\le K_y\le K_z\), is parameterized by:
\begin{itemize}
\item membership matrices \(Y\in\{0,1\}^{n_y\times K_y}\) and \(Z\in\{0,1\}^{n_z\times K_z}\), where \(y_i\in[K_y]\) and \(z_j\in[K_z]\) denote the community memberships of the \(i\)-th row node and the \(j\)-th column node, respectively; each row of \(Y\) and \(Z\) contains exactly one 1, so that \(Y_{ik}=\mathbf{1}\{y_i=k\}\) and \(Z_{jl}=\mathbf{1}\{z_j=l\}\);
\item a connectivity matrix \(B\in[0,1]^{K_y\times K_z}\) satisfying \(\rank(B)=K_y\) and \(\sigma_B:=\sigma_{K_y}(B)>0\), where \(\sigma_B\) controls the separation between community connectivity profiles;
\item a sparsity parameter \(\rho\in(0,1]\).
\end{itemize}
The observed adjacency matrix \(A\in\{0,1\}^{n_y\times n_z}\) has independent entries
\begin{align*}
A_{ij}\sim \mathrm{Bernoulli}\bigl(\rho B_{y_i,z_j}\bigr),\quad 1\le i\le n_y,\;1\le j\le n_z. 
\end{align*}
The expected adjacency matrix is \(\Omega:=\mathbb{E}[A]=\rho YBZ^\top\), and \(\rank(\Omega)=K_y\).
\end{defin}

Throughout this paper, we assume that each community contains at least one node. The condition \(\rank(B)=K_y\) ensures that the \(K_y\) row communities have distinct connectivity profiles, which is necessary for the recovery of the row clusters. When \(K_y<K_z\), the column communities are still identifiable in the sense that the model parameters (including \(Z\)) are uniquely determined up to label permutations as shown in \cite{rohe2016co}. However, exact recovery of all column community labels is impossible because the row space of \(\Omega\) has dimension \(K_y<K_z\), so the observed signal only provides a \(K_y\)-dimensional projection of the column structure. We return to this point in Remark~\ref{rem:identifiability} below.

We adopt the following notations throughout. For each row community \(k\in[K_y]\), let \(n_k^y\) denote the number of row nodes in row community \(k\); similarly, for each column community \(l\in[K_z]\), let \(n_l^z\) be the number of column nodes in column community \(l\). Define
\begin{align*}
&n_{\min}^y:=\min_k n_k^y,\quad n_{\max}^y:=\max_k n_k^y,\quad
n_{\min}^z:=\min_l n_l^z,\quad n_{\max}^z:=\max_l n_l^z,\quad N:=n_y+n_z,\\
&\beta_{y}:=\frac{K_y n_{\min}^y}{n_y}, \quad 
\beta_{z}:=\frac{K_z n_{\min}^z}{n_z},\quad
\tau_y := \frac{n_{\max}^y}{n_{\min}^y},\quad
\tau_z := \frac{n_{\max}^z}{n_{\min}^z},\quad 
\kappa_B := \frac{\sigma_1(B)}{\sigma_{K_y}(B)}.
\end{align*}
The quantities \(\beta_y\) and \(\beta_z\) measure how balanced the community sizes are on each side of the bipartite graph. If all communities on a given side have equal size, the corresponding \(\beta\) equals \(1\); smaller values indicate greater imbalance. The ratios \(\tau_y\) and \(\tau_z\) capture the disparity between the largest and smallest communities on each side, with \(\tau_y=\tau_z=1\) corresponding to perfect balance. The parameter \(\kappa_B\) is the condition number of the connectivity matrix \(B\). All logarithmic factors in this paper are taken with respect to \(N\); when we specialize to the balanced regime, we write \(n_y\asymp n_z\asymp n\), so that \(N\asymp n\). 

Let \(\Omega=U\Sigma V^\top\) be the compact singular value decomposition, with \(U\in\mathbb{R}^{n_y\times K_y}\), \(V\in\mathbb{R}^{n_z\times K_y}\), and \(\Sigma=\diag(\sigma_1,\dots,\sigma_{K_y})\) where \(\sigma_1\ge\cdots\ge\sigma_{K_y}>0\), $U^\top U=I_{K_y}$, and $V^\top V=I_{K_y}$. Define
\begin{align*}
\sigma_{\min}:=\sigma_{K_y}(\Omega),\quad \sigma_1:=\sigma_1(\Omega),\quad \kappa_{\Omega}:=\frac{\sigma_1}{\sigma_{\min}}. 
\end{align*}

The spectral approach relies on the key observation that the singular vectors of \(\Omega\) are constant on communities. The following lemma formalizes this property and provides explicit distances between community centroids. These distances will be used later to guarantee that the rows of the estimated singular vectors are sufficiently  separated, enabling exact recovery via \(k\)-means.

\begin{lem}\label{lem:SVstructure}
Under Definition~\ref{def:ScBM}, let \(\Omega=U\Sigma V^\top\) be the compact SVD. Then there exists an invertible matrix \(X\in\mathbb{R}^{K_y\times K_y}\) and a full column rank matrix \(W\in\mathbb{R}^{K_z\times K_y}\) such that
\begin{align*}
U=YX,\quad V=ZW. 
\end{align*}
Moreover, for any distinct row communities \(k\ne l\), we have
\begin{align*}
\|X_{k,:}-X_{l,:}\|_2 = \sqrt{\frac{1}{n_k^y}+\frac{1}{n_l^y}}. 
\end{align*}
If \(K_y=K_z\), then \(W\) is square and invertible, and for any distinct column communities \(k\ne l\), we have
\begin{align*}
\|W_{k,:}-W_{l,:}\|_2 = \sqrt{\frac{1}{n_k^z}+\frac{1}{n_l^z}}. 
\end{align*}
\end{lem}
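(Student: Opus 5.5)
We plan to derive both factorizations from the column and row spaces of \(\Omega=\rho YBZ^\top\), and then read off the centroid distances by combining the orthonormality of \(U\) (resp.\ \(V\)) with the block structure of \(Y\) (resp.\ \(Z\)).

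\textbf{Step 1: the factorizations.} Let \(D_y:=Y^\top Y=\diag(n_1^y,\dots,n_{K_y}^y)\), which is invertible since every community is nonempty. Because \(\Omega=U\Sigma V^\top\) with \(\Sigma\) invertible, we have \(U=\Omega V\Sigma^{-1}=Y\bigl(\rho BZ^\top V\Sigma^{-1}\bigr)\). So \(U=YX\) with \(X:=\rho BZ^\top V\Sigma^{-1}\in\mathbb{R}^{K_y\times K_y}\). Likewise \(V=\Omega^\top U\Sigma^{-1}=Z\bigl(\rho B^\top Y^\top U\Sigma^{-1}\bigr)=ZW\) with \(W\in\mathbb{R}^{K_z\times K_y}\).

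Since \(U\) has rank \(K_y\), the product \(YX\) has rank \(K_y\), so the square matrix \(X\) is invertible. For the same reason \(W\) has full column rank. When \(K_y=K_z\), the matrix \(W\) is square with full column rank, hence invertible.

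\textbf{Step 2: orthonormality of the rows of \(D_y^{1/2}X\).} The key identity is
\[
I_{K_y}=U^\top U=X^\top Y^\top YX=X^\top D_yX .
\]
Hence \(Q:=D_y^{1/2}X\) satisfies \(Q^\top Q=I\). Since \(Q\) is square, this gives \(QQ^\top=I\), i.e.
\[
XX^\top=D_y^{-1}.
\]
Therefore \(\langle X_{k,:},X_{l,:}\rangle=\mathbf 1\{k=l\}/n_k^y\), and for \(k\ne l\),
\[
\|X_{k,:}-X_{l,:}\|_2^2=\frac{1}{n_k^y}+\frac{1}{n_l^y}.
\]

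\textbf{Step 3: the column side.} When \(K_y=K_z\), the same argument applies with \(D_z:=Z^\top Z\). From \(V^\top V=I\) we get \(W^\top D_zW=I\). Since \(W\) is square, \(D_z^{1/2}W\) is orthogonal, so \(WW^\top=D_z^{-1}\), which yields the stated distance for the column communities.

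\textbf{Main obstacle.} The only delicate point is the passage from \(Q^\top Q=I\) to \(QQ^\top=I\). This step needs \(Q\) to be square, which is exactly why the column-side distance formula is claimed only when \(K_y=K_z\). When \(K_y<K_z\), the matrix \(D_z^{1/2}W\) has orthonormal columns but its rows need not be orthogonal, so the formula can fail.
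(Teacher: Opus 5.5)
Your proof is correct and follows essentially the same route as the paper: write $U=YX$, use $U^\top U=X^\top D_yX=I_{K_y}$ to get $XX^\top=D_y^{-1}$, read off the centroid distances, and do the same for $W$ when $K_y=K_z$. The differences are cosmetic. You obtain $X=\rho BZ^\top V\Sigma^{-1}$ explicitly rather than through the paper's column-space containment argument, and you get invertibility and $XX^\top=D_y^{-1}$ from a rank count and the squareness of $D_y^{1/2}X$, whereas the paper uses $(\det X)^2\det D_y=1$ and then inverts $X^\top D_yX=I_{K_y}$.
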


The distance formula \(\sqrt{\frac{1}{n_k^y}+\frac{1}{n_l^y}}\) implies that the minimum separation between distinct row community centroids is at least \(\sqrt{2/n_{\max}^y}\). This lower bound determines the margin required for the nearest-centroid classifier to succeed.

We now describe the core spectral algorithm. To eliminate the bias caused by noise, we remove the diagonal entries of the sample Gram matrix. Define
\[
G_y := \mathcal{P}_{\mathrm{off-diag}}(AA^\top),
\]
where $\mathcal{P}_{\mathrm{off-diag}}(\cdot)$ zeroes out all diagonal entries. The deletion ensures that the expectation of $G_y$ is free of noise variance, and its population version differs from $\Omega\Omega^\top$ only by the diagonal bias $\mathcal{P}_{\mathrm{diag}}(\Omega\Omega^\top)$. Under the conditions of our main theorem, Lemma~1 of \cite{cai2021subspace} guarantees that this bias is spectrally negligible relative to the signal strength, so the leading eigenspace of $G_y$ is provably close to the column space of $\Omega$ spanned by its left singular vectors $U$. This justifies using the leading eigenvectors of $G_y$ for subspace estimation. We term this procedure \emph{Spectral Co-clustering with Diagonal Deletion} (SCDD), summarized in Algorithm~\ref{alg:SCDD}. The algorithm primarily recovers the row communities by extracting the leading $K_y$ eigenvectors of $G_y$ and applying $k$-means to their rows. When $K_y=K_z$, the column communities are recovered symmetrically by applying the identical procedure to $G_z := \mathcal{P}_{\mathrm{off-diag}}(A^\top A)$.
\begin{algorithm}[!ht]
\caption{Spectral Co-clustering with Diagonal Deletion (SCDD) for bipartite networks}
\label{alg:SCDD}
\begin{algorithmic}[1]
\Require Adjacency matrix \(A\in\{0,1\}^{n_y\times n_z}\), number of row communities \(K_y\) (and optionally \(K_z\) if column recovery is desired).
\Ensure Row labels \(\hat y\in[K_y]^{n_y}\).
\State Compute the diagonal-deleted Gram matrix \(G_y:=\mathcal{P}_{\mathrm{off-diag}}(AA^\top)\in\mathbb{R}^{n_y\times n_y}\).
\State Let \(\widehat U\widehat\Lambda\widehat U^\top\) be the top-\(K_y\) eigen-decomposition of \(G_y\), where \(\widehat U\in\mathbb{R}^{n_y\times K_y}\) has orthonormal columns.
\State Run the \(k\)-means algorithm on the rows of \(\widehat U\) with \(K_y\) clusters and return the resulting estimated labels \(\hat y\).
\State \textbf{Optional (column recovery):} If \(K_y=K_z\), compute \(G_z:=\mathcal{P}_{\mathrm{off-diag}}(A^\top A)\), take its leading \(K_y\) eigenvectors \(\widehat V\), run \(k\)-means on the rows of \(\widehat V\) with \(K_y\) clusters, and return \(\hat z\).
\end{algorithmic}
\end{algorithm}

\subsection{Exact Recovery Guarantee}
Our goal is to give a set of conditions under which the SCDD algorithm recovers the true row communities exactly, with probability tending to one as the network grows. These conditions are stated directly in terms of the model parameters and a few incoherence quantities associated with the expected adjacency matrix \(\Omega\). Define

\[
\mu_0:=\frac{n_y n_z\|\Omega\|_\infty^2}{\|\Omega\|_F^2},\quad
\mu_1:=\frac{n_y}{K_y}\|U\|_{2,\infty}^2,\quad
\mu_2:=\frac{n_z}{K_y}\|V\|_{2,\infty}^2,\quad
\mu:=\max\{\mu_0,\mu_1,\mu_2\},
\]
where $\mu_0$ controls the maximum entrywise magnitude of $\Omega$, while $\mu_1$ and $\mu_2$ characterize how evenly the energy of its left and right singular vectors is distributed across rows and columns, respectively. Such incoherence measures are standard in the subspace estimation literature, as they guarantee that the signal is diffuse---avoiding concentrated entries or singular vectors.

To formalize the notion of successful recovery, we define a clustering loss that counts the minimum number of row nodes whose estimated community labels differ from the true ones, after optimally matching the community labels. Let

\[
\ell(\hat y,y):=\min_{\pi\in\mathcal{S}_{K_y}}\sum_{i=1}^{n_y}\mathbf{1}\{\hat y_i\neq \pi(y_i)\},
\]
where \(\mathcal{S}_{K_y}\) denotes the set of all permutations of the \(K_y\) community labels. Thus, \(\ell(\hat y,y)=0\) means that the estimated labels agree with the true ones up to a global relabeling, which is exactly the notion of exact community recovery.

The following theorem provides explicit conditions, expressed through the model parameters and the incoherence measures above, under which the SCDD algorithm recovers the true row communities exactly with high probability.

\begin{thm}\label{thm:mainFinal}
Under ScBM, suppose the following conditions hold:
\begin{align}
&n_y n_z \succeq \mu^2 \kappa_B^8 \tau_y^4 \tau_z^4 K_y^2 \log^4 N, \label{cond:1}\\
&n_z \succeq\mu \kappa_B^8 \tau_y^4 \tau_z^4 K_y \log^2 N, \label{cond:2}\\
&\rho \sigma_B^2 \sqrt{n_y n_z} \succeq\frac{\kappa_B^2 \tau_y \tau_z K_y K_z}{\beta_y \beta_z} \log N, \label{cond:3}\\
&\rho \sigma_B^2 n_z \succeq\frac{\kappa_B^6 \tau_y^3 \tau_z^3 K_y K_z}{\beta_y \beta_z} \log N, \label{cond:4}\\
&n_y \succeq \frac{\kappa_B^4 \tau_y^2 \tau_z^2 K_y}{\beta_y}, \label{cond:5}\\
&\rho \min\{n_y,n_z\} \succeq\log N, \label{cond:6}\\
&n_y \gg \frac{\kappa_B^4 \tau_y^{5/2} \tau_z^2 \sqrt{\mu}\,K_y}{\sqrt{\beta_y}}, \label{cond:7}\\
&\rho \sigma_B^2 \sqrt{n_y n_z} \gg \frac{\kappa_B^2 \tau_y^{3/2} \tau_z \sqrt{\mu}\,K_y K_z \log N}{\sqrt{\beta_y}\,\beta_z}, \label{cond:8}\\
&\sqrt{\rho}\,\sigma_B \sqrt{n_z} \gg \frac{\kappa_B^3 \tau_y^2 \tau_z^{3/2}\sqrt{\mu K_yK_z \log N}}{\sqrt{\beta_z}}. \label{cond:9}
\end{align}
Then Algorithm~\ref{alg:SCDD} returns row labels satisfying \(\ell(\hat y,y)=0\) (i.e., exact recovery up to label permutation) with probability at least \(1-O(N^{-10})\).
\end{thm}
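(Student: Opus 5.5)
The argument reduces exact recovery to a row-wise ($\ell_{2,\infty}$) perturbation bound for the leading eigenvectors of $G_y$. By Lemma~\ref{lem:SVstructure}, $U=YX$, so the rows of $U$ take only $K_y$ distinct values. Two distinct centroids are at distance at least $\sqrt{2/n_{\max}^y}$. Suppose we can show that, with probability $1-O(N^{-10})$, some orthogonal $O\in\mathbb{R}^{K_y\times K_y}$ satisfies
\[
\|\widehat U O - U\|_{2,\infty} \le \varepsilon, \qquad \varepsilon \ll \sqrt{1/n_{\max}^y}.
\]
A standard argument then shows that any (exact, or constant-factor approximate) $k$-means solution has zero misclassification. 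The $k$-means objective at the true partition is at most $n_y\varepsilon^2$. If some true cluster received no center within distance $\tfrac12\sqrt{2/n_{\max}^y}$, the objective would be at least of order $n_{\min}^y/n_{\max}^y = 1/\tau_y$, which contradicts $n_y\varepsilon^2\ll 1/\tau_y$. So every true cluster has its own nearby center, and each point is assigned to the center of its own cluster.

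The main step is therefore the $\ell_{2,\infty}$ eigenvector bound. I will invoke the results of \cite{cai2021subspace} for the diagonal-deleted Gram matrix (their Theorem~1/Lemma~1, heteroskedastic PCA with diagonal deletion). Their bound has the form
\[
\|\widehat U O - U\|_{2,\infty} \preceq \kappa_\Omega^2\sqrt{\frac{\mu K_y}{n_y}}\Bigl(\frac{\sigma^2\sqrt{n_yn_z}\log N+\sigma\sigma_1\sqrt{n_y}\log N}{\sigma_{\min}^2}\cdots\Bigr) + \kappa_\Omega^2\frac{\mu K_y}{n_y}.
\]
Here $\sigma^2\asymp\rho\max B$ is the noise variance bound and the Bernoulli entries are bounded by $1$. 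The last term is the contribution of the diagonal bias $\mathcal{P}_{\mathrm{diag}}(\Omega\Omega^\top)$. Their sample-size/incoherence hypotheses are satisfied via conditions \eqref{cond:1}--\eqref{cond:2}, and the entry-magnitude requirement via \eqref{cond:6}. Rather than rederive the leave-one-out analysis, I will verify that ScBM satisfies those hypotheses.

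Next I translate the spectral quantities into model parameters. Writing $\Omega=\rho Y B Z^\top = \rho (Y D_y^{-1/2})(D_y^{1/2} B D_z^{1/2})(Z D_z^{-1/2})^\top$, where $D_y=\mathrm{diag}(n_k^y)$ and $D_z=\mathrm{diag}(n_l^z)$, we get
\[
\sigma_{\min}\ge \rho\sigma_B\sqrt{n^y_{\min}n^z_{\min}} = \rho\sigma_B\sqrt{\beta_y\beta_z n_yn_z/(K_yK_z)}, \qquad
\sigma_1\le\rho\sigma_1(B)\sqrt{n^y_{\max}n^z_{\max}}.
\]
Hence $\kappa_\Omega\le\kappa_B\sqrt{\tau_y\tau_z}$. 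This explains the powers $\kappa_B^{2j}\tau_y^j\tau_z^j$ in the conditions. Substituting, the signal-to-noise requirement that the Davis--Kahan/eigengap term be at most a small constant becomes \eqref{cond:3}--\eqref{cond:5}. The separate requirement that each term of the $\ell_{2,\infty}$ bound be $\ll 1/\sqrt{\tau_y n_{\max}^y}\asymp\sqrt{\beta_y/(\tau_y K_y n_y)}$ after using $n_{\max}^y\le\tau_y n_y/K_y$ splits as follows: the bias term gives \eqref{cond:7}, the $\sigma^2$ term gives \eqref{cond:8}, and the cross term gives \eqref{cond:9}.

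The main obstacle is the bookkeeping in this last step. The bound of \cite{cai2021subspace} is stated with generic $\sigma,\mu,\kappa$, so I must make sure that the variance proxy $\sigma^2\le\rho$ and the incoherence are used consistently. In particular, I must confirm that \eqref{cond:6} gives the $\log N$-scale concentration of the Bernoulli noise (matrix Bernstein with $R=1$, variance $\rho\max\{n_y,n_z\}$), so that the $R\log N$ term is dominated by the $\sigma\sqrt{n\log N}$ term. Finally, a union bound over the $O(1)$ high-probability events gives the stated $1-O(N^{-10})$.
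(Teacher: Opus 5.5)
Your overall route matches the paper's: apply Theorem~1 of \cite{cai2021subspace}, translate through $\sigma_{\min}\ge\rho\sigma_B\sqrt{\beta_y\beta_zn_yn_z/(K_yK_z)}$ and $\kappa_\Omega\le\kappa_B\sqrt{\tau_y\tau_z}$, check its hypotheses with \eqref{cond:1}--\eqref{cond:6}, and control the three error terms with \eqref{cond:7}--\eqref{cond:9}. The gap is in the clustering step.

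Conditions \eqref{cond:7}--\eqref{cond:9} are calibrated so that each error term is $o\bigl(\sqrt{K_y/(\tau_y\beta_yn_y)}\bigr)=o\bigl(1/\sqrt{n^y_{\max}}\bigr)$, using $n^y_{\max}=\tau_y\beta_yn_y/K_y$. Hence they give $o(\delta_{r,\min})$, and nothing more. Your $k$-means argument instead compares the \emph{total} objective $n_y\varepsilon^2$ with the cost $\asymp 1/\tau_y$ of one uncovered cluster. That needs $\varepsilon\ll 1/\sqrt{\tau_y n_y}$, which is smaller than $1/\sqrt{n^y_{\max}}$ by the factor $\sqrt{K_y/\beta_y}\ge\sqrt{K_y}$. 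So ``$\varepsilon\ll\sqrt{1/n^y_{\max}}$'' does not imply ``$n_y\varepsilon^2\ll 1/\tau_y$''. The contradiction fails exactly in the growing-$K_y$ regime the theorem and Corollary~\ref{cor:balancedFinal} target; closing it your way would require multiplying the right-hand sides of \eqref{cond:7}--\eqref{cond:9} by $\sqrt{K_y/\beta_y}$.

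Your bookkeeping paragraph does not fix this. First, $1/\sqrt{\tau_y n^y_{\max}}=\sqrt{K_y/(\tau_y^2\beta_y n_y)}$, not $\sqrt{\beta_y/(\tau_yK_yn_y)}$. Second, with either target the three terms do not reduce to \eqref{cond:7}, \eqref{cond:8}, \eqref{cond:9}; they do so only for the target $\sqrt{K_y/(\tau_y\beta_yn_y)}$. Third, the bias term of the bound is $\kappa_\Omega^2\sqrt{\mu K_y/n_y}\cdot\mu_1\kappa_\Omega^2K_y/n_y$, not $\kappa_\Omega^2\mu K_y/n_y$. You need $\mu_1=1/\beta_y$ (Lemma~\ref{lem:mu1ScBM}) both to reach \eqref{cond:7} and to turn \eqref{cond:5} into $K_y\preceq n_y/(\mu_1\kappa_\Omega^4)$.

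The paper ends with a row-wise statement: $\|\widehat U\mathcal O-U\|_{2,\infty}=o(\delta_{r,\min})$, so each row is closer to its own centroid than to any other. To make the $k$-means step rigorous at that scale, replace the global comparison with per-cluster accounting at the exact $k$-means optimum:
\begin{itemize}
\item Call cluster $k$ covered if some center lies within $r_k:=\tfrac14\min_{l\ne k}\|X_{k,:}-X_{l,:}\|_2$ of $X_{k,:}$. These balls are disjoint.
\item Since $\|X_{k,:}-X_{l,:}\|_2^2=1/n_k^y+1/n_l^y$ (Lemma~\ref{lem:SVstructure}), $n_k^y r_k^2\ge 1/16$. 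So an uncovered cluster pays its one-center scatter plus roughly $1/16$.
\item A cluster with exactly one nearby center pays at least its scatter, once $r_k>3\varepsilon$.
\item A cluster with several nearby centers saves at most its scatter, which is at most $n^y_{\max}\varepsilon^2$.
\item Counting centers, there are at most as many multiply covered clusters as uncovered ones.
\end{itemize}
Optimality therefore forces every cluster to be singly covered once $n^y_{\max}\varepsilon^2$ is below a small constant, and the nearest-center assignment is then the true one. This needs only $\varepsilon\ll 1/\sqrt{n^y_{\max}}$, which \eqref{cond:7}--\eqref{cond:9} provide.

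For a $(1+\epsilon)$-approximate $k$-means solution, the slack $\epsilon\, n_y\varepsilon^2$ can pay for merging a whole cluster. So your parenthetical ``or constant-factor approximate'' cannot be kept under the stated conditions.
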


Theorem~\ref{thm:mainFinal} establishes a theoretical guarantee for exact community recovery in bipartite networks under mild conditions. The number of communities \(K_y\) is allowed to grow with the network size, and the communities on either side of the bipartite graph may be unbalanced, as reflected by the presence of the balance parameters \(\beta_y,\beta_z\) and the ratios \(\tau_y,\tau_z\) in the conditions. Moreover, the sparsity level \(\rho\) is required only to satisfy explicit lower bounds that scale with the network parameters. The theorem thus ensures that as long as the network is large and the signal meets these conditions, the SCDD algorithm recovers the true community structure exactly—even when communities are imbalanced and their number grows with the network.

\begin{rem}[On the conditions of Theorem~\ref{thm:mainFinal}]\label{remConScBM}
A careful inspection reveals that three of the nine conditions are redundant. Specifically, condition~(\ref{cond:5}) is implied by condition~(\ref{cond:7}), condition~(\ref{cond:3}) is implied by condition~(\ref{cond:8}), and condition~(\ref{cond:4}) is implied by condition~(\ref{cond:9}), as the latter conditions have strictly stronger right-hand sides up to constants. Thus, conditions~(\ref{cond:3}),~(\ref{cond:4}), and~(\ref{cond:5}) may be removed without weakening the theorem.

The remaining six conditions are not generally implied by one another. We retain all nine in the statement for two reasons. First, each condition corresponds to a distinct step in the proof, where it controls a specific error term in the perturbation analysis. Second, keeping the full set of conditions enhances readability by allowing the reader to directly match each assumption to its role in the argument, without the need to derive intermediate implications.
\end{rem}
\begin{rem}\label{rem:identifiability}
When \(K_y<K_z\), the model parameters remain identifiable (up to label permutations) as established by \cite{rohe2016co}. However, exact recovery of all column community labels is impossible because the row space of \(\Omega\) has dimension \(K_y<K_z\), so the observed signal only provides a \(K_y\)-dimensional projection of the column structure. Consequently, our theorem focuses on exact row recovery in this case. When \(K_y=K_z\), the column communities are also fully recoverable, and the following corollary provides the analogous exact recovery guarantee.
\end{rem}

\begin{cor}\label{cor:colRecovery}
Under ScBM with \(K_y=K_z\), suppose the conditions of Theorem~\ref{thm:mainFinal} hold after interchanging the roles of \(y\) and \(z\) (i.e., each occurrence of \(y\) is replaced by \(z\) and vice versa) (and consequently \(\mu_1\leftrightarrow\mu_2\)). Then Algorithm~\ref{alg:SCDD} returns column labels satisfying \(\ell(\hat z,z)=0\) with probability at least \(1-O(N^{-10})\).
\end{cor}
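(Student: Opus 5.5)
The plan is a symmetry reduction to Theorem~\ref{thm:mainFinal}, applied to the transposed network. Set $\tilde A:=A^\top\in\{0,1\}^{n_z\times n_y}$. Its entries are independent with $\tilde A_{ji}\sim\mathrm{Bernoulli}(\rho (B^\top)_{z_j,y_i})$. So $\tilde A$ is itself generated by an ScBM with the following data:
\begin{itemize}
\item row nodes are the original column nodes, with membership matrix $Z$ and $K_z$ row communities;
\item column nodes are the original row nodes, with membership matrix $Y$ and $K_y$ column communities;
\item the connectivity matrix is $B^\top$ and the sparsity parameter is $\rho$.
\end{itemize}
Because $K_y=K_z$, the requirement $1\le K_z\le K_y$ of Definition~\ref{def:ScBM} holds for the transposed model. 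Moreover $\rank(B^\top)=K_z$ and $\sigma_{K_z}(B^\top)=\sigma_B$, so $\tilde A$ is a valid ScBM.

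Next I check that every quantity in the conditions transforms by the swap $y\leftrightarrow z$.
\begin{itemize}
\item The expected matrix is $\tilde\Omega=\rho ZB^\top Y^\top=\Omega^\top$, with compact SVD $\tilde\Omega=V\Sigma U^\top$. Its left singular vectors are $V$ and its right singular vectors are $U$.
\item The incoherence parameters therefore become $\tilde\mu_0=\mu_0$, $\tilde\mu_1=\frac{n_z}{K_z}\|V\|_{2,\infty}^2=\mu_2$ and $\tilde\mu_2=\mu_1$, using $K_y=K_z$ in the normalizations. Hence $\tilde\mu=\mu$.
\item Since $\kappa_{B^\top}=\kappa_B$, the swapped quantities are $(n_y,n_z,\beta_y,\beta_z,\tau_y,\tau_z,K_y,K_z)\mapsto(n_z,n_y,\beta_z,\beta_y,\tau_z,\tau_y,K_z,K_y)$.
\item $N=n_y+n_z$ is unchanged, so the logarithmic factors and the probability bound $1-O(N^{-10})$ are the same.
\end{itemize}
Consequently, the hypothesis of the corollary is exactly the statement that conditions~(\ref{cond:1})--(\ref{cond:9}) hold for the transposed model.

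Finally, Algorithm~\ref{alg:SCDD} applied to $\tilde A$ produces the same output as the optional column-recovery step on $A$. It computes $\mathcal{P}_{\mathrm{off-diag}}(\tilde A\tilde A^\top)=\mathcal{P}_{\mathrm{off-diag}}(A^\top A)=G_z$, takes the top $K_z=K_y$ eigenvectors $\widehat V$, and runs $k$-means on the rows of $\widehat V$. Theorem~\ref{thm:mainFinal} then gives $\ell(\hat z,z)=0$ with probability at least $1-O(N^{-10})$.

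The main point requiring care is the bookkeeping of the incoherence parameters. The normalizations of $\mu_1$ and $\mu_2$ both use $K_y$, and the symmetry $\tilde\mu_1=\mu_2$ relies on $K_y=K_z$, so that $\frac{n_z}{K_z}=\frac{n_z}{K_y}$. It is also worth confirming that the analogue of Lemma~\ref{lem:SVstructure} for the column side, $V=ZW$ with $W$ invertible, follows from the row version applied to $\tilde\Omega$; no separate argument is needed.
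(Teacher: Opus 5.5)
Your proof is correct and uses the same symmetry argument as the paper. The only difference is presentation: the paper re-runs the verification of the conditions of Theorem~1 of \cite{cai2021subspace} for $G_z$ with $y\leftrightarrow z$, while you apply Theorem~\ref{thm:mainFinal} as a black box to the transposed ScBM $A^\top$ (exactly as the paper does for Corollary~\ref{cor:colDC}), and you also make explicit the detail, left implicit in the paper, that $\tilde\mu_1=\mu_2$ and $\tilde\mu_2=\mu_1$ require $K_y=K_z$.
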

\subsection{Implications in the Balanced Regime}
In many practical scenarios, the communities are balanced and the parameters are of constant order. Under these
simplifications, the conditions in Theorem~\ref{thm:mainFinal} simplify considerably.
\begin{cor}\label{cor:balancedFinal}
Assume the balanced setting \(n_y \asymp n_z \asymp n\) (so \(N\asymp n\)), \(K_y=K_z=K\), and the quantities \(\kappa_B,\sigma_B,\beta_y,\beta_z,\tau_y,\tau_z,\mu\) are all \(O(1)\). Then, the conditions of Theorem~\ref{thm:mainFinal} simplify to
\[
K = o\!\left(\min\left\{\frac{n}{\log^2 N},\sqrt{\frac{\rho n}{\log N}}\right\}\right). 
\]

Consequently, we have
\begin{enumerate}
\item[(i)] \textbf{Sparse regime.} If \(\rho = c\log N/n\) for some absolute constant \(c>0\), then the sufficient conditions of Theorem~\ref{thm:mainFinal} fail (specifically, condition~(\ref{cond:9}) is violated). Consequently, Theorem~\ref{thm:mainFinal} does not apply in this regime.

\item[(ii)] \textbf{Dense regime.} If \(\rho=O(1)\), then exact row recovery holds provided
\[
K = o\!\left(\min\left\{\frac{n}{\log^2 N},\sqrt{\frac{n}{\log N}}\right\}\right). \label{eq:condFinalCorrect}
\]
Under this same condition, column recovery also holds whenever \(K_y=K_z\).
\end{enumerate}
\end{cor}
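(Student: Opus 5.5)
This corollary is a direct substitution into Theorem~\ref{thm:mainFinal} and Corollary~\ref{cor:colRecovery}. We set every $O(1)$ quantity to a constant and simplify each of the nine conditions.

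\textbf{Step 1: drop the redundant conditions.} By Remark~\ref{remConScBM}, conditions (\ref{cond:3}), (\ref{cond:4}) and (\ref{cond:5}) follow from (\ref{cond:8}), (\ref{cond:9}) and (\ref{cond:7}). Only the remaining six need to be checked.

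\textbf{Step 2: simplify the remaining six.} With $n_y\asymp n_z\asymp n$, $N\asymp n$, $K_y=K_z=K$, and all structural constants bounded:
\begin{itemize}
\item Condition (\ref{cond:1}) becomes $n^2\succeq K^2\log^4 N$, i.e.\ $K\preceq n/\log^2 N$.
\item Condition (\ref{cond:2}) becomes $K\preceq n/\log^2 N$.
\item Condition (\ref{cond:6}) becomes $\rho n\succeq\log N$.
\item Condition (\ref{cond:7}) becomes $K=o(n)$.
\item Condition (\ref{cond:8}) becomes $\rho n\gg K^2\log N$, i.e.\ $K=o\bigl(\sqrt{\rho n/\log N}\bigr)$.
\item Condition (\ref{cond:9}) becomes $\sqrt{\rho n}\gg K\sqrt{\log N}$, which is the same requirement $K=o\bigl(\sqrt{\rho n/\log N}\bigr)$.
\end{itemize}

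Now compare these. Condition (\ref{cond:7}) is implied by $K=o(n/\log^2N)$. Condition (\ref{cond:6}) is implied by (\ref{cond:9}): since $K\ge1$, (\ref{cond:9}) forces $\rho n\gg\log N$. Conditions (\ref{cond:1}) and (\ref{cond:2}) as stated only need $\preceq$, so requiring $o(\cdot)$ is merely sufficient. All conditions are therefore implied by the displayed condition $K=o\bigl(\min\{n/\log^2N,\sqrt{\rho n/\log N}\}\bigr)$. Conversely, (\ref{cond:8}) and (\ref{cond:9}) require the $\sqrt{\rho n/\log N}$ term. This is the sense in which the conditions ``simplify to'' the display.

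\textbf{Step 3: the two regimes.}
\begin{itemize}
\item \emph{Sparse regime (i).} Take $\rho=c\log N/n$. Then $\sqrt{\rho n/\log N}=\sqrt c$ is a constant. Condition (\ref{cond:9}) would require $\sqrt{c\log N}\gg K\sqrt{\log N}$, i.e.\ $K=o(\sqrt c)$. This is impossible since $K\ge1$, so the theorem does not apply.
\item \emph{Dense regime (ii).} Take $\rho\asymp1$ and substitute into the display to obtain the stated condition. Row recovery then follows from Theorem~\ref{thm:mainFinal}.
\end{itemize}

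For column recovery, the conditions with the roles of $y$ and $z$ interchanged are symmetric in the balanced setting. Here $\mu_1$ and $\mu_2$ are both bounded by $\mu=O(1)$, and $n_y\asymp n_z$. So the same simplification applies, and Corollary~\ref{cor:colRecovery} gives $\ell(\hat z,z)=0$.

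The only delicate point is bookkeeping: making sure the $\succeq$ conditions and the $\gg$ conditions are consistently absorbed into the $o(\cdot)$ statement, and recognising that (\ref{cond:9}) is the binding constraint in the sparse case. There is no genuine analytic obstacle.
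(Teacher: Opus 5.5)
Your proposal is correct and follows essentially the same route as the paper: substitute the balanced $O(1)$ regime into the nine conditions and reduce them to $K\log^2 N=o(n)$ and $\rho n\gg K^2\log N$, with (\ref{cond:8}) and (\ref{cond:9}) both collapsing to the latter; then read off the two regimes and invoke Corollary~\ref{cor:colRecovery} for the columns. Your bookkeeping is slightly tighter than the paper's: you note that (\ref{cond:1})--(\ref{cond:2}) only need $\preceq$, that the sparse case fails for every $K\ge 1$ (not just for growing $K$), and that part (ii) implicitly requires $\rho\asymp 1$ rather than merely $\rho=O(1)$.
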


The assumptions \(\kappa_B=O(1)\), \(\sigma_B=O(1)\), \(\beta_y=O(1)\), \(\beta_z=O(1)\), \(\tau_y=O(1)\), \(\tau_z=O(1)\), and \(\mu=O(1)\) are regularity conditions. To interpret these quantities, recall that \(\sigma_B\) denotes the smallest singular value of the connectivity matrix \(B\); since \(B\in[0,1]^{K_y\times K_z}\), the bound \(\sigma_B=O(1)\) is automatic and, together with \(\sigma_B>0\), guarantees a non-vanishing signal. The condition number \(\kappa_B=\sigma_1(B)/\sigma_{K_y}(B)\) being bounded prevents \(B\) from being ill-conditioned. The parameters \(\beta_y=\frac{K_y n_{\min}^y}{n_y}\) and \(\beta_z=\frac{K_z n_{\min}^z}{n_z}\) measure the balance of community sizes, so \(\beta_y,\beta_z=O(1)\) means the smallest community is a constant fraction of the average size, ruling out extreme imbalance. The ratios \(\tau_y=\frac{n_{\max}^y}{n_{\min}^y}\) and \(\tau_z=\frac{n_{\max}^z}{n_{\min}^z}\) quantify the disparity between the largest and smallest communities, and \(\tau_y,\tau_z=O(1)\) excludes severe size heterogeneity. Finally, the condition \(\mu=O(1)\) ensures that the singular vectors of \(\Omega\) are sufficiently spread out and that the energy of \(\Omega\) is not concentrated on a few rows or columns. This incoherence assumption is mild and is guaranteed under weak conditions on \(B\) and the community sizes. For example, Lemma~\ref{lem:muBoundFinal} provides explicit conditions under which \(\mu=O(1)\) holds uniformly.

The balanced-setting corollary reveals a simplification: the nine conditions of Theorem~\ref{thm:mainFinal} reduce to two inequalities. These two conditions jointly characterize the trade-off among the number of communities \(K\), the network size \(n\), and the sparsity level \(\rho\): one limits the growth of \(K\) relative to \(n\), while the other ensures that the average degree is sufficiently large to overcome the noise induced by the community structure. When both hold, Theorem~\ref{thm:mainFinal} guarantees exact recovery.

The sparse regime in part (i) lies beyond the scope of our current proof technique, as the assumed sparsity level fails to satisfy the sufficient conditions required by our perturbation analysis. In particular, condition~(\ref{cond:9}) is violated, and consequently our theoretical guarantee does not apply in this regime. This does not necessarily reflect a fundamental limitation of the algorithm, but rather indicates that the current analytical framework requires stronger signal-to-noise conditions to establish exact recovery. Extending the guarantee to this sparser setting would require more refined perturbation arguments.

In contrast, the dense regime in part (ii) demonstrates that, under the mild regularity conditions stated above, our conditions reduce to the joint scaling 
\(K=o\!\left(\min\left\{\frac{n}{\log^2 N},\sqrt{\frac{n}{\log N}}\right\}\right)\),which follows directly from the conditions of Theorem~\ref{thm:mainFinal}. This condition allows the number of communities \(K\) to grow with the network size \(n\), but at a rate limited by the more restrictive of the two terms, typically the second one when \(n\) is large. This accommodates the common scenario in modern network analysis where the community structure is not fixed but rather increases with the number of nodes.

\begin{rem}[Polylogarithmically sparse regime]\label{rem:Polylogarithmically}
The sparse regime in Corollary~\ref{cor:balancedFinal}(i) corresponds to the logarithmic order \(\rho \asymp \log N/n\), which our current sufficient conditions do not cover. Nevertheless, our theory does cover the slightly denser but still very sparse regime
\[
\rho \asymp \frac{\log^{1+\varepsilon}N}{n},\qquad \varepsilon>0.
\]
Indeed, substituting this into the simplified condition of Corollary~\ref{cor:balancedFinal} gives
\[
\sqrt{\frac{\rho n}{\log N}} \asymp \log^{\varepsilon/2}N.
\]
Since \(n/\log^2 N \gg \log^{\varepsilon/2}N\) for large \(N\), the minimum is attained by the second term, so the condition becomes \(K = o(\log^{\varepsilon/2}N)\). Hence all sufficient conditions are satisfied, and exact recovery is guaranteed even when the average degree grows only polylogarithmically faster than the logarithmic order \(\log N\).
\end{rem}
\section{Degree-Corrected Stochastic Co-Blockmodels}\label{sec:DCScBM}

The stochastic co-blockmodel of Section~\ref{sec:SCBMrecovery} assumes that all nodes within the same community have identical expected degrees. While this simplifying assumption is useful for theoretical development, it is often violated in real-world bipartite networks, where nodes exhibit substantial heterogeneity in their connectivity levels—for instance, in a user-item network, some users are highly active while others interact only rarely. To address this limitation, we extend our analysis to the \emph{degree-corrected stochastic co-blockmodel} (DCScBM), originally proposed by \cite{karrer2011stochastic} for undirected graphs and adapted to bipartite settings by \cite{rohe2016co}. In this model, each node is endowed with an individual scale parameter that controls its overall connectivity. We demonstrate that the spectral methodology developed for the ScBM can be adapted to the DCScBM by incorporating a row-normalization step. Under mild conditions, we prove that the resulting algorithm achieves exact community recovery with high probability under DCScBM.

\subsection{Model and Algorithm}

We retain all notation from Section~\ref{sec:SCBMrecovery} for community sizes (\(n_k^y, n_l^z\)), balance parameters (\(\beta_y, \beta_z\)), imbalance ratios (\(\tau_y, \tau_z\)), the condition number of \(B\) (\(\kappa_B\)), the minimal singular value \(\sigma_B\), and the global network size \(N\). The incoherence parameters \(\mu_0, \mu_1, \mu_2, \mu\), and the singular value bounds \(\sigma_{\min}\) and \(\kappa_{\Omega}\) are also defined exactly as in Section~\ref{sec:SCBMrecovery}. We introduce only the new quantities needed to describe the degree heterogeneity. We now turn to a formal description of the model.

\begin{defin}\label{def:DCScBM}
A \emph{degree-corrected stochastic co-blockmodel} (DCScBM) with \(n_y\) row nodes, \(n_z\) column nodes, \(K_y\) row communities, and \(K_z\) column communities, where \(1\le K_y\le K_z\), is parameterized by:
\begin{itemize}
\item membership matrices \(Y\in\{0,1\}^{n_y\times K_y}\) and \(Z\in\{0,1\}^{n_z\times K_z}\), each row containing exactly one 1, with \(Y_{ik}=\mathbf{1}\{y_i=k\}\) and \(Z_{jl}=\mathbf{1}\{z_j=l\}\);
\item a connectivity matrix \(B\in[0,1]^{K_y\times K_z}\) satisfying \(\rank(B)=K_y\) and \(\sigma_B:=\sigma_{\min}(B)>0\);
\item degree heterogeneity parameters \(\theta_y\in\mathbb{R}^{n_y}_{>0}\) and \(\theta_z\in\mathbb{R}^{n_z}_{>0}\), with \(\Theta_y:=\diag(\theta_y)\) and \(\Theta_z:=\diag(\theta_z)\).
\end{itemize}
The observed adjacency matrix \(A\in\{0,1\}^{n_y\times n_z}\) has independent entries
\begin{align*}
A_{ij}\sim \mathrm{Bernoulli}\bigl(\theta_y(i) B_{y_i,z_j} \theta_z(j)\bigr),\quad 1\le i\le n_y,\;1\le j\le n_z,
\end{align*}
where the Bernoulli probabilities satisfy
\(\theta_y(i) B_{y_i,z_j} \theta_z(j)\in[0,1]\) for all \(i,j\). The expected adjacency matrix is \(\Omega:=\mathbb{E}[A]=\Theta_y Y B Z^\top \Theta_z\), and \(\rank(\Omega)=K_y\).
\end{defin}

To summarize the degree heterogeneity, we define
\begin{align*}
&\theta_{y,\min}:=\min_i \theta_y(i),\quad \theta_{y,\max}:=\max_i \theta_y(i),\quad
\theta_{z,\min}:=\min_j \theta_z(j),\quad \theta_{z,\max}:=\max_j \theta_z(j),\\
&\eta_y:=\frac{\theta_{y,\max}}{\theta_{y,\min}},\quad 
\eta_z:=\frac{\theta_{z,\max}}{\theta_{z,\min}},\quad
\theta_{\min}:=\theta_{y,\min}\theta_{z,\min},\quad
\theta_{\max}:=\theta_{y,\max}\theta_{z,\max},
\end{align*}
where the parameters \(\eta_y\) and \(\eta_z\) measure the spread of the degree parameters within each side of the bipartite graph.

Under DCScBM, although the singular vectors of \(\Omega\) are no longer constant on communities due to the multiplicative degree parameters, their row-normalized versions are. This observation, formalized in the following lemma, is the foundation of our spectral algorithm under DCScBM.
\begin{lem}\label{lem:SVstructureDC}
Under Definition~\ref{def:DCScBM}, let $\Omega=U\Sigma V^\top$ be the compact SVD.  Define the row-normalized matrix $U_*\in\mathbb{R}^{n_y\times K_y}$ by $(U_*)_{i,:}:=U_{i,:}/\|U_{i,:}\|_2$ for $1\le i\le n_y$. Then there exists an orthogonal matrix $Q$ such that $U_*=YQ$; consequently, for any distinct row communities $k\ne l$, we have $\|Q_{k,:}-Q_{l,:}\|_2=\sqrt{2}$. If additionally $K_y=K_z$, define $V_*$ by normalizing each row of $V$ to unit length. Then the same conclusion holds: there exists an orthogonal matrix $R$ such that $V_*=ZR$, and for distinct column communities $k\ne l$, $\|R_{k,:}-R_{l,:}\|_2=\sqrt{2}$.
\end{lem}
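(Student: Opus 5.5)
We would prove Lemma~\ref{lem:SVstructureDC} by reducing the degree-corrected structure to the block structure already handled in Lemma~\ref{lem:SVstructure}. First note that $\Omega=\Theta_y Y B Z^\top\Theta_z$ has rank $K_y$, so its column space equals that of $\Theta_y Y$. Indeed, $\Theta_y Y$ has $K_y$ linearly independent columns, since each community is nonempty and $\theta_y>0$. Moreover, $BZ^\top\Theta_z$ is $K_y\times n_z$ of rank $K_y$, because $B$ has full row rank $K_y$ and $Z^\top\Theta_z$ has full row rank $K_z$.

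Hence $U=\Theta_y Y M$ for some invertible $M\in\mathbb{R}^{K_y\times K_y}$. Concretely, since $U=\Omega V\Sigma^{-1}$, we may take $M=BZ^\top\Theta_z V\Sigma^{-1}$, which is invertible by rank count. It follows that for row node $i$ in community $k$ we have $U_{i,:}=\theta_y(i) M_{k,:}$, with $M_{k,:}\neq 0$. Normalizing gives $(U_*)_{i,:}=M_{k,:}/\|M_{k,:}\|_2$, because $\theta_y(i)>0$ cancels. Thus $U_*=YQ$, where $Q:=D^{-1}M$ and $D:=\diag(\|M_{1,:}\|_2,\dots,\|M_{K_y,:}\|_2)$.

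The key step, and the only real obstacle, is showing that $Q$ is orthogonal, i.e.\ that the rows $M_{k,:}$ are mutually orthogonal. For this we use $U^\top U=I_{K_y}$. It gives
\[
I_{K_y}=M^\top Y^\top\Theta_y^2 Y M=M^\top \Delta^2 M,
\]
where $\Delta^2:=Y^\top\Theta_y^2Y=\diag\bigl(\sum_{i:y_i=k}\theta_y(i)^2\bigr)_{k}$ is diagonal and positive. Hence $\Delta M$ is orthogonal, so $(\Delta M)(\Delta M)^\top=I$, i.e.\ $MM^\top=\Delta^{-2}$ is diagonal. This shows that the rows of $M$ are orthogonal with $\|M_{k,:}\|_2=\Delta_{kk}^{-1}$. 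Therefore $D=\Delta^{-1}$ and $Q=\Delta M$ is orthogonal.

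Consequently the rows $Q_{k,:}$ are orthonormal, so for $k\neq l$,
\[
\|Q_{k,:}-Q_{l,:}\|_2^2=1+1-0=2.
\]
For the column statement with $K_y=K_z$, we repeat the argument with $V=\Omega^\top U\Sigma^{-1}=\Theta_z Z B^\top Y^\top\Theta_y U\Sigma^{-1}=\Theta_z Z M'$. Here $M'$ is square and invertible, because $B$ is now $K_y\times K_y$ of full rank. Using $V^\top V=I$ and $\Delta_z^2:=Z^\top\Theta_z^2Z$, we get $R=\Delta_z M'$ orthogonal with $V_*=ZR$, and the distance $\sqrt2$ follows identically.
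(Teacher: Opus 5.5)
Your proposal is correct and follows essentially the paper's argument: the paper writes $U=\widetilde U Q$ with $\widetilde U=\Theta_y Y(Y^\top\Theta_y^2Y)^{-1/2}$ an orthonormal basis of $\operatorname{col}(\Omega)$, so its orthogonal $Q$ is exactly your $\Delta M$, and row normalization cancels the factor $\theta_y(i)\Lambda_{kk}^{-1/2}$ just as in your computation. The only differences are cosmetic: you obtain orthogonality of $\Delta M$ from $U^\top U=I$ (as in the paper's proof of Lemma~\ref{lem:SVstructure}) rather than from the fact that two orthonormal bases of the same subspace differ by an orthogonal matrix, and you spell out the $K_y=K_z$ case instead of appealing to symmetry.
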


We now describe the algorithm. As in the ScBM, we first remove the diagonal bias from the Gram matrix; then we row-normalize the estimated singular vectors before clustering. We call this procedure \emph{Normalized Spectral Co‑clustering with Diagonal Deletion} (NSCDD), summarized in Algorithm~\ref{alg:NSCDD}.

\begin{algorithm}[!ht]
\caption{Normalized Spectral Co-clustering with Diagonal Deletion (NSCDD)}
\label{alg:NSCDD}
\begin{algorithmic}[1]
\Require Adjacency matrix \(A\in\{0,1\}^{n_y\times n_z}\), number of row communities \(K_y\) (and optionally \(K_z\) if column recovery is desired).
\Ensure Row labels \(\hat y\in[K_y]^{n_y}\).
\State Compute \(G_y:=\mathcal{P}_{\mathrm{off-diag}}(AA^\top)\in\mathbb{R}^{n_y\times n_y}\).
\State Let \(\widehat U\widehat\Lambda\widehat U^\top\) be the top-\(K_y\) eigen-decomposition of \(G_y\), where \(\widehat U\in\mathbb{R}^{n_y\times K_y}\) has orthonormal columns.
\State Row-normalize \(\widehat U\) to obtain \(\widehat U_*\): \((\widehat U_*)_{i,:}=\widehat U_{i,:}/\|\widehat U_{i,:}\|_2\).
\State Run \(k\)-means on the rows of \(\widehat U_*\) with \(K_y\) clusters and return the estimated labels \(\hat y\).
\State \textbf{Optional:} If \(K_y=K_z\), compute \(G_z:=\mathcal{P}_{\mathrm{off-diag}}(A^\top A)\), take its leading \(K_y\) eigenvectors \(\widehat V\),  row-normalize \(\widehat V\) to obtain \(\widehat V_*\), run \(k\)-means on the rows of \(\widehat V_*\) with \(K_y\) clusters, and return \(\hat z\).
\end{algorithmic}
\end{algorithm}
\subsection{Theoretical Guarantees}

We now present the main result of this section, which guarantees that NSCDD achieves exact row community recovery under DCScBM. 
\begin{thm}\label{thm:mainDC}
Under DCScBM, suppose the following conditions hold:
\begin{align}
& n_y n_z \succeq \mu^2 \kappa_B^8 \eta_y^8\eta_z^8 \tau_y^4\tau_z^4 K_y^2 \log^4 N, \label{cond:DC1}\\
& n_z \succeq \mu \kappa_B^8 \eta_y^8\eta_z^8 \tau_y^4\tau_z^4 K_y \log^2 N, \label{cond:DC2}\\
& \theta_{\min}\sigma_B^2 \sqrt{n_y n_z} \succeq
\frac{\kappa_B^2 \eta_y^3\eta_z^3 \tau_y \tau_z K_y K_z}{\beta_y \beta_z} \log N, \label{cond:DC3}\\
& \theta_{\min}\sigma_B^2 n_z \succeq
\frac{\kappa_B^6 \eta_y^7\eta_z^7 \tau_y^3\tau_z^3 K_y K_z}{\beta_y \beta_z} \log N, \label{cond:DC4}\\
& n_y \succeq \frac{\kappa_B^4 \eta_y^6\eta_z^4 \tau_y^2\tau_z^2 K_y}{\beta_y}, \label{cond:DC5}\\
& \theta_{\min} \min\{\sqrt{n_y n_z},\, n_z\} \succeq\frac{\log N}{\eta_y\eta_z}, \label{cond:DC6}\\
& n_y \gg \frac{\kappa_B^4 \eta_y^7 \eta_z^4 \tau_y^{5/2} \tau_z^2 \sqrt{\mu}\,K_y}{\sqrt{\beta_y}}, \label{cond:DC7}\\
& \theta_{\min}\sigma_B^2 \sqrt{n_y n_z}
\gg \frac{\kappa_B^2 \eta_y^4 \eta_z^3 \tau_y^{3/2} \tau_z \sqrt{\mu}\,K_y K_z \log N}{\beta_z\sqrt{\beta_y}}, \label{cond:DC8}\\
&\sigma_B \sqrt{\theta_{\min}n_z}
\gg \frac{\kappa_B^3 \eta_y^{9/2} \eta_z^{7/2} \tau_y^2 \tau_z^{3/2}\sqrt{\mu K_y K_z \log N}}{\sqrt{\beta_z}}. \label{cond:DC9}
\end{align}
Then Algorithm~\ref{alg:NSCDD} returns row labels satisfying \(\ell(\hat y,y)=0\) with probability at least \(1-O(N^{-10})\).
\end{thm}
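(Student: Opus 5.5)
The proof of Theorem~\ref{thm:mainDC} follows the ScBM argument behind Theorem~\ref{thm:mainFinal}, with two new ingredients: the model constants are re-expressed in terms of the degree parameters, and the row-normalization step is controlled. The plan has four steps.

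\textbf{Step 1: spectral quantities of $\Omega$.} Write $\Omega=\Theta_yYBZ^\top\Theta_z$. Let $D_y:=(Y^\top\Theta_y^2Y)^{1/2}$ and $D_z:=(Z^\top\Theta_z^2Z)^{1/2}$. These are diagonal, with entries between $\theta_{y,\min}\sqrt{n_k^y}$ and $\theta_{y,\max}\sqrt{n_k^y}$, and similarly on the column side. Then
\[
\Omega=(\Theta_yYD_y^{-1})\,(D_yBD_z)\,(D_z^{-1}Z^\top\Theta_z),
\]
where both outer factors have orthonormal columns. Hence
\[
\sigma_{\min}\ge \theta_{\min}\,\sigma_B\sqrt{n_{\min}^yn_{\min}^z},\qquad \sigma_1\le\theta_{\max}\,\sigma_1(B)\sqrt{n_{\max}^yn_{\max}^z},
\]
so that $\kappa_\Omega\le\kappa_B\eta_y\eta_z\sqrt{\tau_y\tau_z}$. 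For the variance proxy I will use
\[
\max_{i,j}\mathbb{E}A_{ij}\le\theta_{\max}=\eta_y\eta_z\theta_{\min}.
\]
These substitutions, namely $\rho\to\theta_{\min}$ together with extra powers of $\eta_y,\eta_z$, are precisely what converts conditions (\ref{cond:1})--(\ref{cond:9}) into (\ref{cond:DC1})--(\ref{cond:DC9}).

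\textbf{Step 2: two-to-infinity perturbation.} As in the ScBM proof, I apply Lemma~1 of \cite{cai2021subspace} (diagonal-deleted PCA for a heteroskedastic, independent-entry noise matrix $A-\Omega$) to $G_y$. This yields, with probability $1-O(N^{-10})$, an orthogonal $O$ with
\[
\|\widehat UO-U\|_{2,\infty}\le \varepsilon .
\]
Here $\varepsilon$ is the sum of the usual terms:
\[
\frac{\kappa_\Omega^2\mu K_y}{\sqrt{n_y}}\cdot\frac{\text{noise}}{\sigma_{\min}^2},\qquad
\frac{\text{noise}}{\sigma_{\min}}\sqrt{\frac{\mu K_y\log N}{n_y}},\qquad
\text{the diagonal bias }\frac{\|\mathcal P_{\rm diag}(\Omega\Omega^\top)\|}{\sigma_{\min}^2}.
\]
The noise level is $\sqrt{\theta_{\max}n_z\log N}$ and $\sqrt{\theta_{\max}\sqrt{n_yn_z}}\log N$. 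Conditions (\ref{cond:DC1}), (\ref{cond:DC2}) and (\ref{cond:DC6}) supply the sample-size and sparsity hypotheses of that lemma, and (\ref{cond:DC3})--(\ref{cond:DC5}) ensure the resulting error is at most a small constant times $\sigma_{\min}^2$ in the spectral norm. This step is essentially a rerun of the ScBM proof with the new constants.

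\textbf{Step 3: row normalization.} This is the main obstacle. The rows of $U$ are not of order $1/\sqrt{n_y}$ uniformly; they scale with $\theta_y(i)$. Using the factorization of Step 1, $U_{i,:}=\theta_y(i)(D_y^{-1})_{y_iy_i}\,e_{y_i}^\top M$ for an orthogonal $M$. This gives the lower bound
\[
\min_i\|U_{i,:}\|_2\ge \frac{\theta_{y,\min}}{\theta_{y,\max}\sqrt{n_{\max}^y}}=\frac{1}{\eta_y\sqrt{\tau_y n^y_{\min}}},
\]
which in turn is at least $\sqrt{\beta_y/(\eta_y^2\tau_y K_y n_y)}$. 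I will then use the elementary inequality
\[
\Bigl\|\frac{a}{\|a\|}-\frac{b}{\|b\|}\Bigr\|\le\frac{2\|a-b\|}{\|b\|}
\]
row by row to obtain
\[
\|\widehat U_*O-U_*\|_{2,\infty}\le 2\varepsilon\,\eta_y\sqrt{\tau_yK_yn_y/\beta_y}.
\]
The delicate point is bookkeeping: this extra factor $\eta_y\sqrt{\tau_y K_y n_y/\beta_y}$ must be absorbed precisely by conditions (\ref{cond:DC7})--(\ref{cond:DC9}). Those conditions are the analogues of (\ref{cond:7})--(\ref{cond:9}), with the additional $\eta_y$ powers arising from this normalization. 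I will verify term by term that each summand of $\varepsilon$ times this factor is $o(1)$. If a term does not close, I will first try bounding the normalization error with $\min_i\|U_{i,:}\|$ replaced by the actual row norm $\|U_{i,:}\|$, since the perturbation bound also scales with $\theta_y(i)$.

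\textbf{Step 4: exact recovery by $k$-means.} By Lemma~\ref{lem:SVstructureDC}, $U_*=YQ$ with centroids separated by exactly $\sqrt2$. Since $\|\widehat U_*O-U_*\|_{2,\infty}=o(1)$, every row of $\widehat U_*O$ lies within $o(1)$ of its true centroid. A standard argument for $k$-means (or for any $(1+\epsilon)$-approximate solution) then applies: whenever the row-wise perturbation is below a constant fraction of the minimum separation, each estimated cluster contains exactly one true community. This gives $\ell(\hat y,y)=0$ on the high-probability event of Step 2, with probability $1-O(N^{-10})$.
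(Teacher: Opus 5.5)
Your plan follows the paper's proof step for step. The paper applies Theorem~1 of \cite{cai2021subspace} with variance proxy $\sigma^2=\theta_{\max}=\eta_y\eta_z\theta_{\min}$, bounds $\sigma_{\min}$ and $\kappa_\Omega$ through the orthonormal factorization of $\Omega$, lower-bounds the row norms of $U$, uses $\|a/\|a\|_2-b/\|b\|_2\|_2\le 2\|a-b\|_2/\|b\|_2$, and finishes with the $\sqrt2$ separation of Lemma~\ref{lem:SVstructureDC}. However, Step~3, which you rightly call the crux, does not close as written. There are two bookkeeping problems.

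\textbf{The row-norm simplification.} Since $n^y_{\min}=\beta_y n_y/K_y$,
\[
\frac{1}{\eta_y\sqrt{\tau_y n^y_{\min}}}=\frac{1}{\eta_y}\sqrt{\frac{K_y}{\tau_y\beta_y n_y}},
\]
not $\sqrt{\beta_y/(\eta_y^2\tau_yK_yn_y)}$; you inverted $\beta_y/K_y$. Your version is still a valid lower bound, because $\beta_y\le 1\le K_y$, but it is weaker by the factor $K_y/\beta_y$. The normalization factor should be $2\eta_y\sqrt{\tau_y\beta_y n_y/K_y}$.

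\textbf{The form of the perturbation bound.} You need the exact form delivered by the cited theorem:
\[
\|\widehat U\mathcal O-U\|_{2,\infty}\preceq\kappa_\Omega^2\sqrt{\frac{\mu K_y}{n_y}}\,\mathcal E_{\rm gen},\qquad \mathcal E_{\rm gen}=\frac{\mu_1\kappa_\Omega^2K_y}{n_y}+\frac{\sigma^2\sqrt{n_yn_z}\log N}{\sigma_{\min}^2}+\frac{\sigma\kappa_\Omega\sqrt{n_y\log N}}{\sigma_{\min}}.
\]
Here the common prefactor multiplies all three terms, including the diagonal bias. Your list instead treats the bias as the bare ratio $\|\mathcal P_{\rm diag}(\Omega\Omega^\top)\|/\sigma_{\min}^2$ and gives noise terms of a different shape.

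Only with this form does the $\sqrt{K_y/n_y}$ cancel against the correct normalization factor, leaving
\[
\|\widehat U_*\mathcal O-U_*\|_{2,\infty}\preceq\eta_y\kappa_\Omega^2\sqrt{\tau_y\beta_y\mu}\,\mathcal E_{\rm gen}.
\]
Now plug in:
\begin{itemize}
\item $\kappa_\Omega\le\kappa_B\eta_y\eta_z\sqrt{\tau_y\tau_z}$;
\item $\sigma_{\min}\ge\theta_{\min}\sigma_B\sqrt{\beta_y\beta_z n_yn_z/(K_yK_z)}$;
\item $\sigma^2=\eta_y\eta_z\theta_{\min}$;
\item $\mu_1\le\eta_y^2/\beta_y$.
\end{itemize}
Your Step~1 omits the last bound, although it is also what (\ref{cond:DC5}) needs for condition (iii) of that theorem. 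With these substitutions, the three products are exactly the quantities that (\ref{cond:DC7}), (\ref{cond:DC8}) and (\ref{cond:DC9}) force to be $o(1)$.

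With your factor, each product is inflated by $K_y/\beta_y$. For instance, the bias term would require $n_y\gg\kappa_B^4\eta_y^7\eta_z^4\tau_y^{5/2}\tau_z^2\sqrt{\mu}\,K_y^2/\beta_y^{3/2}$. Condition (\ref{cond:DC7}) does not give this once $K_y$ grows or $\beta_y\to 0$.

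Your fallback of normalizing by the actual row norms is neither needed nor available. The cited theorem gives only a uniform $\|\cdot\|_{2,\infty}$ bound, not one that scales with $\theta_y(i)$.
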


The conditions in Theorem~\ref{thm:mainDC} quantify the interplay between sparsity, degree heterogeneity, community balance, and the number of communities. When the degree parameters are all equal, i.e., \(\Theta_y=\sqrt{\rho}I_{n_y}\) and \(\Theta_z=\sqrt{\rho}I_{n_z}\), DCScBM reduces to ScBM with sparsity \(\rho\), and the theorem reduces to Theorem~\ref{thm:mainFinal} exactly. Thus Theorem~\ref{thm:mainDC} directly generalizes Theorem~\ref{thm:mainFinal} and contains it as a special case.

\begin{rem}[On the conditions of Theorem~\ref{thm:mainDC}]
As noted in Remark~\ref{remConScBM}, three of the nine conditions are redundant: condition~(\ref{cond:DC5}) follows from~(\ref{cond:DC7}), condition~(\ref{cond:DC3}) follows from~(\ref{cond:DC8}), and condition~(\ref{cond:DC4}) follows from~(\ref{cond:DC9}). The remaining six conditions are not generally implied by one another. All nine are retained for the same two reasons given in Remark~\ref{remConScBM}: each condition controls a separate error term in the proof, and presenting the full set allows each assumption to be tied directly to its corresponding error term.
\end{rem}

The following corollary provides the symmetric guarantee when \(K_y=K_z\).

\begin{cor}\label{cor:colDC}
Under DCScBM, suppose \(K_y=K_z\). If the conditions of Theorem~\ref{thm:mainDC} hold with the roles of \(y\) and \(z\) interchanged (and \(\mu_1\leftrightarrow\mu_2\)), then Algorithm~\ref{alg:NSCDD} returns column labels satisfying \(\ell(\hat z,z)=0\) with probability at least \(1-O(N^{-10})\).
\end{cor}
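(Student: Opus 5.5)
The statement is Corollary~\ref{cor:colDC}, and the plan is to deduce it from Theorem~\ref{thm:mainDC} by symmetry. No new perturbation analysis should be needed.

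We transpose the model. Set \(A':=A^\top\in\{0,1\}^{n_z\times n_y}\). Its entries are independent with
\[
A'_{ji}\sim\mathrm{Bernoulli}\bigl(\theta_z(j)B^\top_{z_j,y_i}\theta_y(i)\bigr),
\]
so \(A'\) follows a DCScBM with row nodes the original column nodes. Its parameters are membership matrices \((Z,Y)\), connectivity matrix \(B^\top\in[0,1]^{K_z\times K_y}\), and degree parameters \((\theta_z,\theta_y)\). Since \(K_y=K_z\), the requirement \(1\le K_z\le K_y\) of Definition~\ref{def:DCScBM} holds. Also \(\rank(B^\top)=K_z\), because \(B\) is square of rank \(K_y=K_z\).

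Next we track how every quantity in Theorem~\ref{thm:mainDC} transforms. The expected matrix is \(\Omega'=\Omega^\top=V\Sigma U^\top\), so its left and right singular vectors are \(V\) and \(U\). Hence \(\mu_1'=\mu_2\) and \(\mu_2'=\mu_1\); \(\mu_0\) is unchanged, since \(\|\Omega\|_\infty\), \(\|\Omega\|_F\) and \(n_yn_z\) are symmetric; so \(\mu'=\mu\). Transposition preserves singular values, so \(\sigma_{B^\top}=\sigma_B\) and \(\kappa_{B^\top}=\kappa_B\). The heterogeneity quantities swap: \(\eta_y'=\eta_z\) and \(\eta_z'=\eta_y\). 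The products \(\theta_{\min}=\theta_{y,\min}\theta_{z,\min}\) and \(\theta_{\max}\) are unchanged. The size quantities swap as well: \(n_y\leftrightarrow n_z\), \(\beta_y\leftrightarrow\beta_z\), \(\tau_y\leftrightarrow\tau_z\). Finally, \(N\) and \(\log N\) are invariant.

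Thus the hypothesis ``conditions (\ref{cond:DC1})--(\ref{cond:DC9}) with \(y\) and \(z\) interchanged'' is exactly the hypothesis of Theorem~\ref{thm:mainDC} for the transposed model. We also check that Lemma~\ref{lem:SVstructureDC} applies to \(\Omega'\), giving \(V_*=ZR\). This is the ``if additionally \(K_y=K_z\)'' clause of that lemma, or equivalently its first part applied to \(\Omega'\).

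It remains to identify the algorithm. NSCDD applied to \(A'\) computes
\[
\mathcal{P}_{\mathrm{off-diag}}(A'A'^\top)=\mathcal{P}_{\mathrm{off-diag}}(A^\top A)=G_z,
\]
then takes its top-\(K_y\) eigenvectors, row-normalizes them, and runs \(k\)-means. This is precisely the optional column-recovery step of Algorithm~\ref{alg:NSCDD}. Theorem~\ref{thm:mainDC} applied to \(A'\) therefore yields \(\ell(\hat z,z)=0\) with probability at least \(1-O(N^{-10})\).

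The only place requiring care is the bookkeeping of invariances, in particular confirming that \(\mu_0\), \(\theta_{\min}\) and \(N\) are genuinely symmetric. One also needs the hidden absolute constants in Theorem~\ref{thm:mainDC} to be model-independent, so that they transfer unchanged to the transposed model. I expect no substantive obstacle beyond this.
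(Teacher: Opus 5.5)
Your proposal is correct and follows the paper's own proof: apply Theorem~\ref{thm:mainDC} to \(A^\top\), which is a DCScBM with parameters \((Z,Y,B^\top,\Theta_z,\Theta_y)\), and note that NSCDD on \(A^\top\) is exactly the column-recovery step of Algorithm~\ref{alg:NSCDD}. Your bookkeeping of invariances makes explicit what the paper asserts in one line. This includes \(\mu_0\), \(\theta_{\min}\), \(\kappa_B\), \(\sigma_B\) and \(N\); note that \(\mu_1'=\mu_2\) also relies on \(K_y=K_z\), because \(\mu_2\) is normalized by \(K_y\).
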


Finally, to illustrate the implications of Theorem~\ref{thm:mainDC}, we specialize to the balanced regime. Under the same regularity assumptions as in Corollary~\ref{cor:balancedFinal} and with the additional mild assumptions that the degree heterogeneity parameters are bounded ($\eta_y,\eta_z=O(1)$) and that the minimum degree scale is of the same order as a reference sparsity parameter ($\theta_{\min}\asymp\rho$), the nine conditions of Theorem~\ref{thm:mainDC} simplify dramatically. The following corollary shows that the degree-corrected model does not alter the fundamental scaling requirements for exact recovery; the heterogeneity parameters only affect the constants.

\begin{cor}\label{cor:balancedDC}
Assume the balanced setting $n_y \asymp n_z \asymp n$ (so $N\asymp n$), $K_y=K_z=K$, and the quantities $\kappa_B,\sigma_B,\beta_y,\beta_z,\tau_y,\tau_z,\mu$ are all $O(1)$. Suppose additionally that $\eta_y,\eta_z=O(1)$ and $\theta_{\min}\asymp\rho$ for some reference sparsity parameter $\rho$. Then the conditions of Theorem~\ref{thm:mainDC} simplify to
\[
K = o\!\left(\min\left\{\frac{n}{\log^2 N},\sqrt{\frac{\theta_{\min}n}{\log N}}\right\}\right).
\]
Consequently:
\begin{enumerate}
\item[(i)] \textbf{Sparse regime.} If $\rho=c\log N/n$ for some absolute constant $c>0$, then the sufficient conditions of Theorem~\ref{thm:mainDC} fail, and exact recovery is not guaranteed by our theorem.

\item[(ii)] \textbf{Dense regime.} If $\rho=O(1)$, then $\theta_{\min}=O(1)$ as well, and exact row and column recovery holds provided
\[
K = o\!\left(\min\left\{\frac{n}{\log^2 N},\sqrt{\frac{n}{\log N}}\right\}\right).
\]
\end{enumerate}
\end{cor}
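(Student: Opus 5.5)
The statement to prove is Corollary~\ref{cor:balancedDC}. It is a specialization of Theorem~\ref{thm:mainDC}, so the whole argument is bookkeeping. I will substitute the balanced assumptions into each of the nine conditions (\ref{cond:DC1})--(\ref{cond:DC9}). Then I reduce them to the two displayed inequalities, and finally read off parts (i) and (ii). Throughout, $\eta_y,\eta_z,\kappa_B,\tau_y,\tau_z,\beta_y,\beta_z,\mu$ and $\sigma_B$ are treated as constants, and $n_y\asymp n_z\asymp n$, $N\asymp n$ and $K_y=K_z=K$. One point needs care: $\sigma_B$ is assumed bounded above and also bounded away from zero, i.e.\ $\sigma_B\asymp 1$. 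This is needed for it to drop out of (\ref{cond:DC8})--(\ref{cond:DC9}) as a constant, and it is implicit in the phrase ``quantities are $O(1)$'' as used in Corollary~\ref{cor:balancedFinal}.

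\textbf{Term-by-term reduction.} Under these assumptions the conditions become:
\begin{itemize}
\item (\ref{cond:DC1}) reads $n^2\succeq K^2\log^4N$, i.e.\ $K\preceq n/\log^2N$.
\item (\ref{cond:DC2}) reads $n\succeq K\log^2N$, the same condition.
\item (\ref{cond:DC5}) and (\ref{cond:DC7}) read $K\preceq n$ and $K=o(n)$, both implied by $K=o(n/\log^2N)$.
\item (\ref{cond:DC3}), (\ref{cond:DC4}) and (\ref{cond:DC8}) all read $\theta_{\min}n\succeq K^2\log N$, with (\ref{cond:DC8}) in the form $\theta_{\min}n\gg K^2\log N$. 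So each amounts to $K=o(\sqrt{\theta_{\min}n/\log N})$.
\item (\ref{cond:DC9}) reads $\sqrt{\theta_{\min}n}\gg K\sqrt{\log N}$, which is the same condition.
\item (\ref{cond:DC6}) reads $\theta_{\min}n\succeq\log N$. This is implied by the previous item, since $K\ge1$.
\end{itemize}
Taking the strict (little-$o$) versions of the two surviving conditions gives exactly
\[
K=o\bigl(\min\{n/\log^2N,\ \sqrt{\theta_{\min}n/\log N}\}\bigr).
\]
Conversely, this displayed condition implies all nine, so Theorem~\ref{thm:mainDC} applies. Row recovery then follows from Theorem~\ref{thm:mainDC}. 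Column recovery follows from Corollary~\ref{cor:colDC}, because the conditions are symmetric in $y\leftrightarrow z$ in this balanced setting.

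\textbf{Regimes.} For (i), set $\theta_{\min}\asymp\rho=c\log N/n$. Then $\sqrt{\theta_{\min}n/\log N}\asymp 1$, so (\ref{cond:DC9}) would require $K=o(1)$. This is impossible since $K\ge1$, so the theorem gives no guarantee. For (ii), $\rho=O(1)$ gives $\theta_{\min}=O(1)$ by $\theta_{\min}\asymp\rho$. If additionally $\rho\asymp1$, substituting yields the stated condition $K=o(\min\{n/\log^2N,\sqrt{n/\log N}\})$.

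\textbf{Main obstacle.} No step is deep. The delicate points are the implicit use of $\sigma_B\asymp1$, and, in (ii), the tacit requirement that $\theta_{\min}$ be bounded below as well as above. If only $\theta_{\min}=O(1)$ were known, the displayed condition would be weaker than needed. I would therefore state explicitly that ``dense'' means $\rho\asymp1$.
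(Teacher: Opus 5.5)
Your proposal is correct and follows the same route as the paper. Like the paper, you substitute the balanced scalings into (\ref{cond:DC1})--(\ref{cond:DC9}), collapse them to $K\log^2N=o(n)$ and $\theta_{\min}n\gg K^2\log N$, get (i) from the failure of (\ref{cond:DC9}), and get (ii), with column recovery via Corollary~\ref{cor:colDC}. Your explicit caveats are exactly what the paper's one-line reduction uses tacitly: $\sigma_B$ (and likewise $\beta_y,\beta_z$) must be bounded away from zero, and part (ii) needs $\rho\asymp 1$ rather than merely $\rho=O(1)$.
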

This corollary demonstrates that, in the balanced setting, the degree-corrected model does not alter the fundamental scaling requirements for exact recovery; the heterogeneity parameters $\eta_y,\eta_z$ and the minimum degree scale $\theta_{\min}$ only affect the constants. Thus, as long as the degree heterogeneity is not extreme and the effective average degree $\theta_{\min} n$ is sufficiently large, the algorithm achieves exact recovery with the same qualitative scaling as in the homogeneous case. Under the conditions stated in Lemma~\ref{lem:muBoundDC}, the incoherence parameter $\mu$ is guaranteed to be $O(1)$ under DCScBM, so the above simplification is justified within the framework of our theorem. Finally, as in the ScBM case, the polylogarithmically sparse regime
\(\theta_{\min} \asymp \log^{1+\varepsilon}N/n\) with \(\varepsilon>0\)
is also covered, provided that \(K=o(\log^{\varepsilon/2}N)\); this follows by the same substitution as in Remark~\ref{rem:Polylogarithmically}.
\section{Numerical Experiments}
\label{sec:num}

This section presents numerical studies to empirically validate the theoretical guarantees established in Theorems~\ref{thm:mainFinal} and~\ref{thm:mainDC}. For every experimental configuration, we generate \(100\) independent data sets and report the \emph{exact recovery proportion}, defined as the fraction of repetitions in which an algorithm simultaneously achieves perfect recovery of both row and column communities, i.e., \(\ell(\hat y,y)=0\) and \(\ell(\hat z,z)=0\).

We operate under the \emph{balanced setting}: \(K_y=K_z=K\), \(n_y\asymp n_z\asymp n\), and the quantities \(\mu,\kappa_B,\tau_y,\tau_z,\eta_y,\eta_z\) are all \(O(1)\). Furthermore, we consider the scenario where the two balance parameters are of the same order, i.e., \(\beta_y \asymp\beta_z\asymp\beta\). Under this setting, the nine conditions in Theorems~\ref{thm:mainFinal} and~\ref{thm:mainDC} simplify to the following two dominant requirements (see Corollaries~\ref{cor:balancedFinal} and~\ref{cor:balancedDC}). For the ScBM, the conditions are
\begin{align}
& n \gg K \log^2 N, \label{eq:C1}\\
& \rho n \gg \frac{K^2 \log N}{\beta^2}. \label{eq:C2}
\end{align}
For the DCScBM, \(\rho\) is replaced by \(\theta_{\min} = \min_{i,j}\theta_y(i)\theta_z(j)\), yielding the analogous conditions with \(\rho\) replaced by \(\theta_{\min}\) in condition \eqref{eq:C2}; condition \eqref{eq:C1} remains unchanged. Condition~\eqref{eq:C1} imposes only a mild restriction on \(K\), as it requires \(K \ll n / \log^2 N\), which is easily satisfied in practice for moderate to large \(n\). Condition~\eqref{eq:C2} captures the joint effect of sparsity and community imbalance and is typically the dominant constraint. In our experiments, we choose \(n\) and \(K\) such that condition~\eqref{eq:C1} is always satisfied, and focus on verifying condition~\eqref{eq:C2}, which determines the phase transition of exact recovery.
\subsection{Data Generation and Competing Methods}
For ScBM, the expected adjacency matrix is \(\Omega = \rho Y B Z^\top\), where \(Y\in\{0,1\}^{n_y\times K}\) and \(Z\in\{0,1\}^{n_z\times K}\) are membership matrices. To generate \(Y\), we choose an integer \(m_y\) with \(1 \le m_y \le \lfloor n_y/K \rfloor\), which serves as the minimum row community size \(n_{\min}^y = m_y\). We first assign \(m_y\) row nodes to each of the \(K\) row communities, then randomly distribute the remaining \(n_y - K m_y\) row nodes among the first \(K-1\) row communities. Thus the \(K\)-th row community has exactly \(m_y\) row nodes and all others have at least \(m_y\), so the minimum size is indeed \(m_y\). The matrix \(Z\) is generated analogously with \(m_z\) (\(1 \le m_z \le \lfloor n_z/K \rfloor\)), so \(n_{\min}^z = m_z\). The balance parameters are then
\[
\beta_y = \frac{K m_y}{n_y}, \qquad \beta_z = \frac{K m_z}{n_z}.
\]
Choosing \(m_y/n_y \asymp m_z/n_z\) yields \(\beta_y \asymp \beta_z\).  When \(m_y = O(n_y/K)\) and \(m_z = O(n_z/K)\), we have \(\tau_y,\tau_z = O(1)\), consistent with the balanced setting. In all experiments, the connectivity matrix is fixed as a non-symmetric matrix \(B\in[0,1]^{K\times K}\) with diagonal entries \(B_{kk}=0.9\), upper triangular entries \(B_{kl}=0.2\) for \(k<l\), and lower triangular entries \(B_{kl}=0.1\) for \(k>l\). This matrix has full rank and its entries are bounded, ensuring \(\sigma_K(B)=O(1)\) and \(\kappa_B=O(1)\), thereby satisfying the theoretical assumptions. For DCScBM, we set \(\Omega = \Theta_y Y B Z^\top \Theta_z\), where \(\Theta_y = \mathrm{diag}(\theta_y(1),\ldots,\theta_y(n_y))\), \(\Theta_z = \mathrm{diag}(\theta_z(1),\ldots,\theta_z(n_z))\), with
\[
\theta_y(i), \theta_z(j) \stackrel{\mathrm{i.i.d.}}{\sim} \mathrm{Uniform}(\frac{\rho^{1/2}}{3}, 1), \quad i=1,\ldots,n_y,\ j=1,\ldots,n_z.
\]
This gives \(\theta_{\min} = \min_{i,j}\theta_y(i)\theta_z(j) = \rho/9\) and \(\eta_y = \eta_z = 3\rho^{-1/2}\). In the dense regime where \(\rho=O(1)\), we have \(\theta_{\min}=O(\rho)\) and \(\eta_y,\eta_z=O(1)\), satisfying the theoretical assumptions for DCScBM. The same \(Y, B, Z\), and \(\rho\) are used in both models, with only the additional degree heterogeneity parameters \(\Theta_y,\Theta_z\) distinguishing DCScBM from ScBM. In each experimental setting described below, the parameters \(n_y, n_z, K, \rho, m_y\) (i.e., \(\beta_y\)), and \(m_z\) (i.e., \(\beta_z\)) will be independently specified according to the specific goal of that experiment.

We compare our proposed algorithms SCDD (designed for ScBM) and NSCDD (designed for DCScBM) against four representative spectral methods: \emph{DI-SIM}~\citep{rohe2016co}, a co-clustering method based on a regularized Laplacian that achieves consistency under DCScBM; \emph{D-SCORE}~\citep{wang2020spectral}, a spectral algorithm originally developed for directed networks, which we adapt to the bipartite setting and which also achieves consistency under DCScBM; and \emph{BiSC}~\citep{qing2023community}, a spectral co-clustering algorithm for bipartite networks that is consistent under ScBM, along with its degree-corrected counterpart \emph{nBiSC}~\citep{qing2023community}, which achieves consistency under DCScBM. In our ScBM experiments, BiSC serves as the primary baseline. In our DCScBM experiments, the primary baselines are DI-SIM, D-SCORE, and nBiSC, all of which enjoy consistency guarantees under DCScBM.
\subsection{Experimental Setup}

We conduct four experiments to verify the theoretical conditions. In Experiments 1–3, we set \(\beta_y=\beta_z=1\) by choosing \(m_y= \lfloor n_y/K \rfloor\) and \(m_z= \lfloor n_z/K \rfloor\). In Experiment 4, we vary \(\beta\) with \(\beta_y=\beta_z=\beta\). 

\paragraph{Experiment 1: Varying the Number of Communities \(K\)}
We fix \(n_y=800\), \(n_z=1200\), and \(\rho=0.3\), and let \(K\) range from 2 to 8.
Figure~\ref{fig:ex1} shows the exact recovery proportions under ScBM (left) and DCScBM (right).
For ScBM, all methods attain perfect recovery for \(K\le 7\), and the recovery rates decrease only marginally at \(K=8\).
For DCScBM, every algorithm achieves \(100\%\) recovery for \(K\le 4\), but performance drops substantially when \(K\ge 5\).
This earlier transition in DCScBM is expected because the degree heterogeneity reduces the effective signal to \(\theta_{\min}=\rho/9\), which is weaker than \(\rho\).
Overall, these observations are consistent with the theoretical condition~\eqref{eq:C2}: in ScBM the threshold is \(K \ll \sqrt{\rho n/\log N}\approx 6.22\), while in DCScBM the same condition with \(\rho\) replaced by \(\theta_{\min}\) yields a correspondingly smaller permissible \(K\), matching the observed phase shifts.

\begin{figure}[!htbp]
\centering
\resizebox{\columnwidth}{!}{
{\includegraphics[width=3\textwidth]{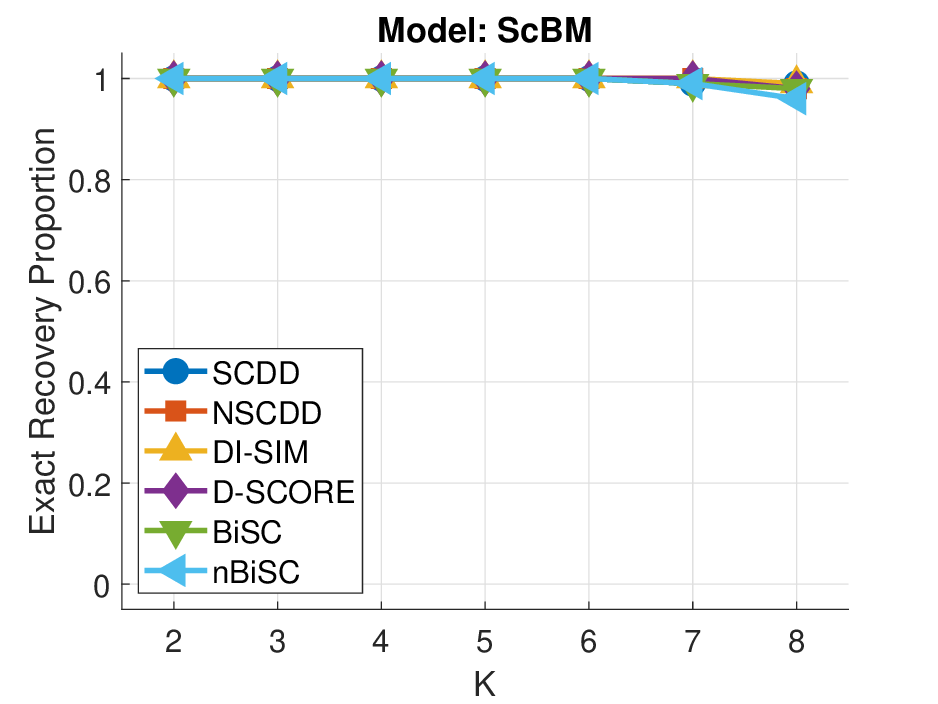}}
{\includegraphics[width=3\textwidth]{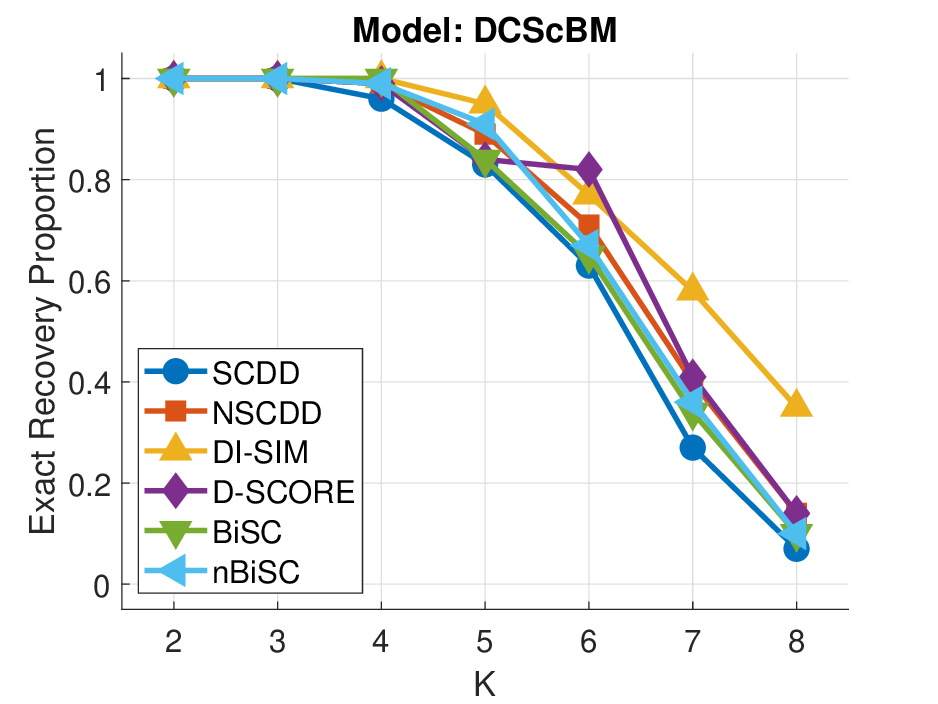}}
}
\caption{Numerical results of Experiment 1.}
\label{fig:ex1} 
\end{figure}

\paragraph{Experiment 2: Varying the Sparsity Parameter \(\rho\)}
We fix \(n_y=800\), \(n_z=1200\), and \(K=4\), and let \(\rho\) vary from \(0.05\) to \(0.50\). Figure~\ref{fig:ex2} reports the exact recovery proportions for ScBM (left) and DCScBM (right). In ScBM, all methods attain perfect recovery for \(\rho \ge 0.20\), while at \(\rho=0.15\) the rates are very close to but not exactly \(100\%\). For DCScBM, every algorithm achieves exact recovery for \(\rho \ge 0.35\), with near-perfect performance at \(\rho=0.30\) that falls slightly short. Condition~\eqref{eq:C2} for ScBM requires \(\rho \gg K^2\log N / n \approx 0.124\); the observed threshold at \(0.20\) is marginally above this asymptotic bound, as expected for finite samples. In DCScBM, the degree heterogeneity reduces the effective signal to \(\theta_{\min}=\rho/9\), so the same condition becomes \(\theta_{\min} \gg 0.124\), which necessitates a larger nominal \(\rho\). This explains the shift of the recovery threshold from \(0.20\) in ScBM to \(0.35\) in DCScBM, in full agreement with our theoretical predictions. 
\begin{figure}[!htbp]
\centering
\resizebox{\columnwidth}{!}{
{\includegraphics[width=3\textwidth]{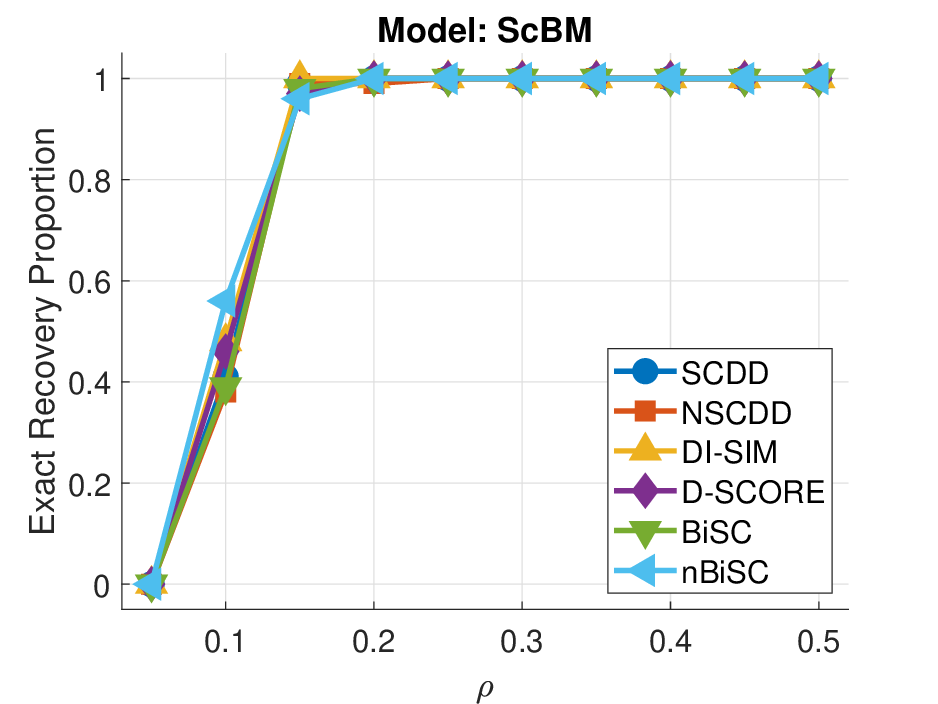}}
{\includegraphics[width=3\textwidth]{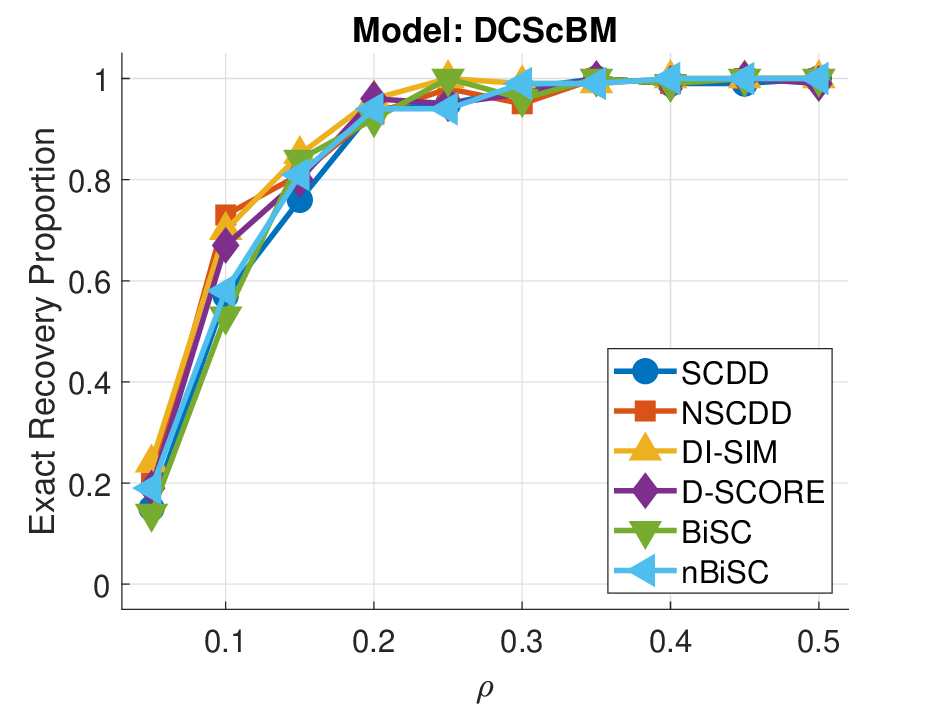}}
}
\caption{Numerical results of Experiment 2.}
\label{fig:ex2} 
\end{figure}

\paragraph{Experiment 3: Varying the Network Size \(n\)}
We fix \(K=4\) and \(\rho=0.5\), set \(n_y=n_z=n\), and vary \(n\) from \(100\) to \(1000\) with \(N=2n\); although \(n_y=n_z\), the membership matrices \(Y\) and \(Z\) are independently generated, so \(Y\ne Z\) with probability one. Figure~\ref{fig:ex3} reports the exact recovery proportions for ScBM (left) and DCScBM (right). In ScBM, all methods are very close to \(100\%\) recovery at \(n=200\) but do not attain exact recovery, and they achieve perfect recovery for all \(n\ge 300\). For DCScBM, every algorithm approaches exact recovery at \(n=600\) but does not quite reach it, and attains perfect recovery when \(n\ge 700\). Condition~\eqref{eq:C1} requires \(n \gg 4\log^2(2n)\), which is satisfied for \(n\ge 200\), while condition~\eqref{eq:C2} demands \(0.5n \gg 16\log(2n)\), i.e., \(n \gg 32\log(2n)\), placing the theoretical threshold between \(200\) and \(300\), consistent with the ScBM results. In DCScBM, the degree heterogeneity reduces the effective signal to \(\theta_{\min}=\rho/9\), which is weaker than \(\rho\); therefore, the same condition with \(\rho\) replaced by \(\theta_{\min}\) requires a larger \(n\), explaining why the recovery threshold shifts to between \(600\) and \(700\). 
\begin{figure}[!htbp]
\centering
\resizebox{\columnwidth}{!}{
{\includegraphics[width=3\textwidth]{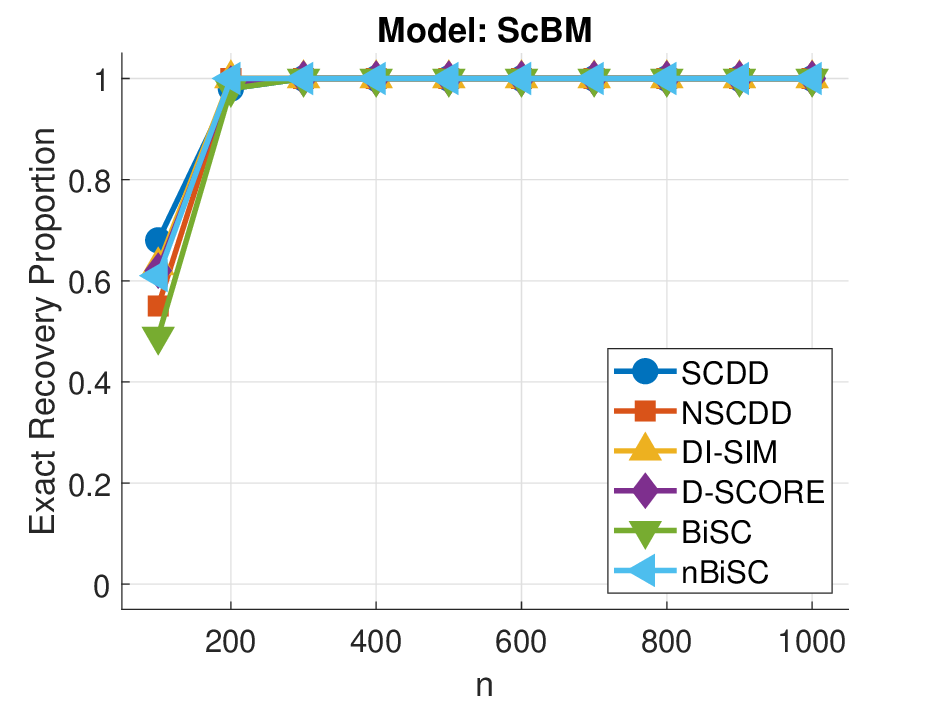}}
{\includegraphics[width=3\textwidth]{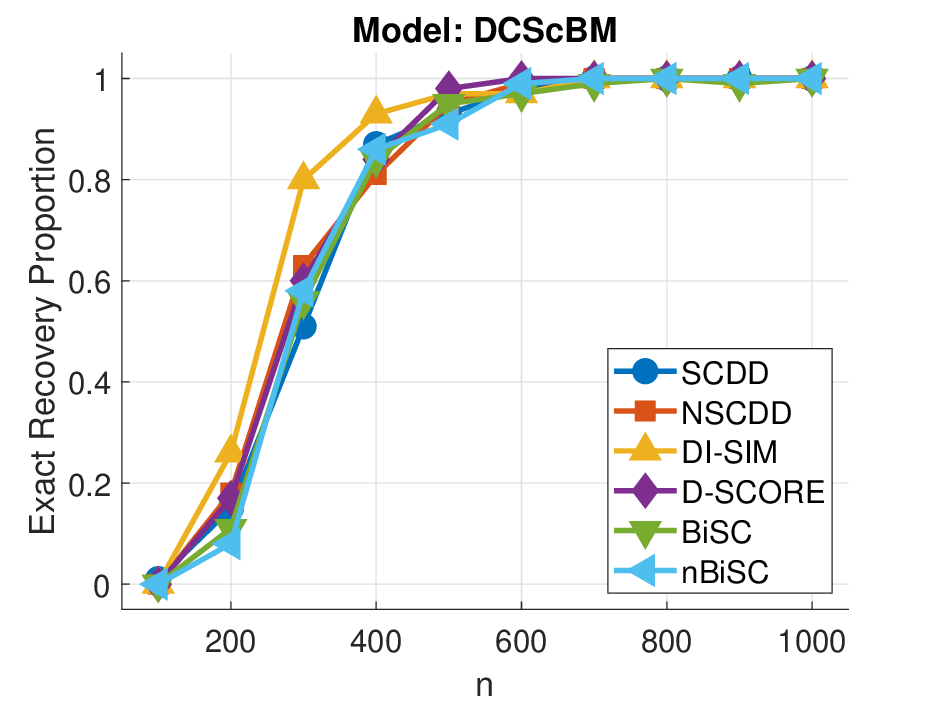}}
}
\caption{Numerical results of Experiment 3.}
\label{fig:ex3} 
\end{figure}

\paragraph{Experiment 4: Effects of Community Imbalance (Varying \(\beta\))}
We now investigate the effect of community imbalance by varying \(\beta\), with \(\beta_y=\beta_z=\beta\). We set \(n_y=2000\) and \(n_z=3000\) (so \(n_z=1.5 n_y\)), with \(K=4\) and \(\rho=0.2\). To achieve \(\beta_y=\beta_z=\beta\), we choose the minimum community sizes as \(m_y = 500\beta\) and \(m_z = 1.5 m_y = 750\beta\). Condition~\eqref{eq:C1} is clearly satisfied since \(n = \sqrt{2000 \times 3000} \approx 2449 \gg 4\log^2(5000) \approx 290\). Condition~\eqref{eq:C2} requires
\[
\beta^2 \gg \frac{K^2 \log N}{\rho \sqrt{n_y n_z}}.
\]
Substituting \(K=4,\ \rho=0.2,\ n_y=2000,\ n_z=3000,\ N=5000\), we obtain
\[
\beta \gg 
\sqrt{
\frac{4^2 \log(5000)}{0.2 \sqrt{2000 \times 3000}}
}
\approx 0.5274.
\]
We therefore choose \(\beta\) from \(\{0.1, 0.2, \ldots, 1.0\}\), corresponding to \(m_y = \{50, 100, \ldots, 500\}\) and \(m_z = \{75, 150, \ldots, 750\}\). Figure~\ref{fig:ex4} reports the exact recovery rates for ScBM (left) and DCScBM (right). In ScBM, all methods are very close to \(100\%\) recovery at \(\beta=0.5\) but do not attain exact recovery, and they achieve perfect recovery for all \(\beta\ge0.6\). In DCScBM, NSCDD, DI-SIM, and nBiSC attain near-perfect recovery at \(\beta=0.4\) yet fall slightly short, and reach exact recovery when \(\beta\ge0.5\); by contrast, SCDD, D-SCORE, and BiSC require larger \(\beta\) values to achieve perfect recovery. This overall later phase transition in DCScBM is expected because the degree heterogeneity reduces the effective signal to \(\theta_{\min}=\rho/9\), which is weaker than \(\rho\), so the same condition with \(\rho\) replaced by \(\theta_{\min}\) necessitates a larger \(\beta\), matching the theoretical prediction. 
\begin{figure}[!htbp]
\centering
\resizebox{\columnwidth}{!}{
{\includegraphics[width=3\textwidth]{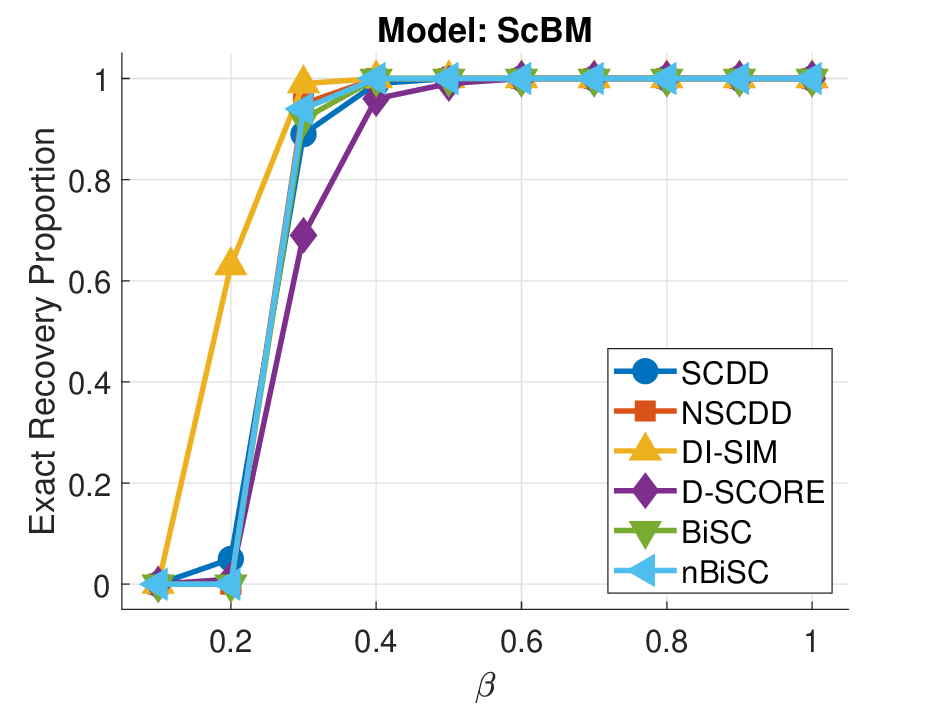}}
{\includegraphics[width=3\textwidth]{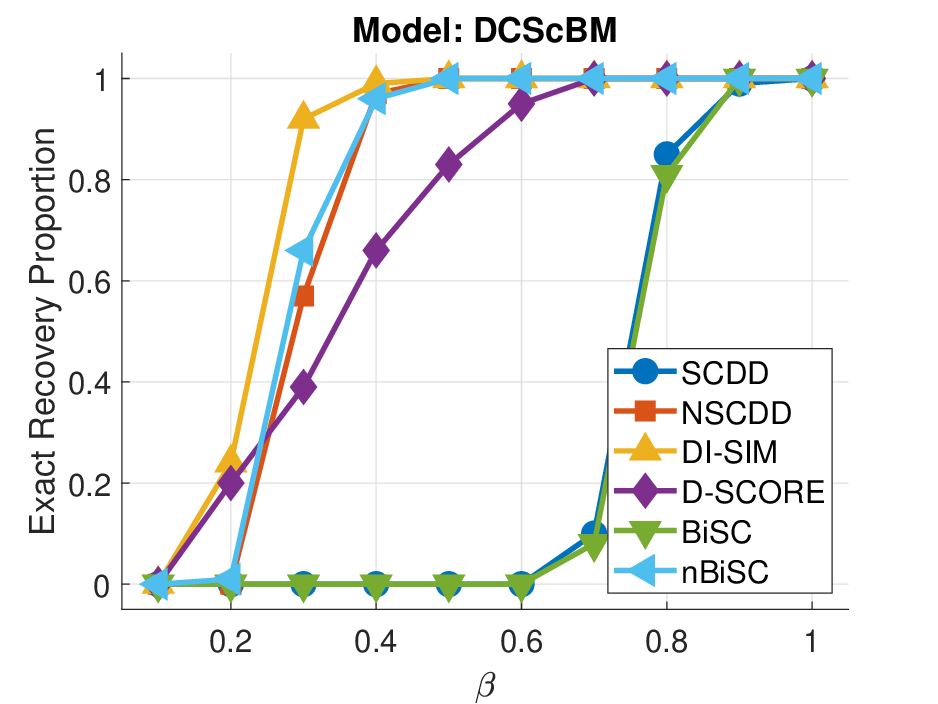}}
}
\caption{Numerical results of Experiment 4.}
\label{fig:ex4} 
\end{figure}
\section{Conclusion}\label{sec:conclusion}
In this paper, we have developed spectral methods for exact community recovery in bipartite networks under the stochastic co-blockmodel and its degree-corrected extension. The proposed algorithms, SCDD and NSCDD, are easy to implement: they construct a diagonal-deleted Gram matrix from the observed bi-adjacency matrix, extract its leading eigenvectors, and then apply $k$-means (with an optional row-normalization step in the degree-corrected case) to recover the node labels on both sides of the bipartition. Our main theoretical results give explicit conditions under which these algorithms recover the true row and column communities exactly with high probability. These conditions are stated directly in terms of the network sparsity, the number of communities, the condition number of the connectivity matrix, and the balance of community sizes, and they allow the number of communities to grow with the network size. In the balanced and homogeneous setting, the conditions reduce to two inequalities that reveal the trade-off between the number of communities, the network size, and the sparsity level. The analysis also covers the degree-corrected model, where each node has an individual scale parameter, and shows that the same spectral approach, with row normalization, retains the exact recovery guarantee. Numerical experiments support our theoretical findings.

Beyond the present scope, several directions remain for future investigation. It would be of interest to extend the analysis to weighted bipartite networks, where edge weights carry additional information beyond binary interactions, and to multi-layer bipartite networks, where multiple relations or temporal snapshots are observed simultaneously. Our current analysis requires the average degree to grow at least polylogarithmically in the network size, and whether the proposed algorithm can achieve exact recovery under sparser conditions remains to be examined; this would likely require a different proof technique. Our balanced-setting corollary gives a rate for the growth of the number of communities, and it would be interesting to determine whether this rate can be improved. A related question is whether one can achieve exact recovery by working directly with the singular vectors of the bi-adjacency matrix, bypassing the Gram matrix; this would be computationally more efficient but would require tools for entrywise eigenvector analysis of non-square random matrices. A particularly important question concerns the role of Laplacian normalization in exact community recovery under ScBM and its degree-corrected variant. \cite{rohe2016co} introduced these models and designed a spectral method based on the normalized Laplacian, establishing consistency in the sense of vanishing misclassification proportion. A natural next step is to ask whether Laplacian-based spectral methods can be strengthened to achieve exact recovery with zero misclassification error. This question remains unresolved. The main difficulty is that the Laplacian-based approach lacks the diagonal-deletion step that removes bias in our analysis. Moreover, the normalization factors introduce complex statistical dependencies that make the additive noise framework of \cite{cai2021subspace} inapplicable. Consequently, the existing perturbation theory cannot be directly used, and a new analytical framework is needed. Resolving this question would clarify whether Laplacian-based spectral methods can achieve exact recovery under the ScBM and DCScBM models, thereby bridging the Gram-matrix-based approach employed in this work and the Laplacian-based method of \cite{rohe2016co}. This would likely require a new perturbation theory for random matrices with multiplicative normalization, which remains to be developed.
\appendix

\section{Technical proofs under ScBM}\label{app:A}

\subsection{Proof of Lemma~\ref{lem:SVstructure}}\label{app:A.1}
\begin{proof}
Since every community is non-empty by definition, the diagonal matrices
\begin{align*}
D_y := Y^\top Y = \operatorname{diag}(n_1^y,\dots,n_{K_y}^y) \in \mathbb{R}^{K_y \times K_y}, 
\end{align*}
and
\begin{align*}
D_z := Z^\top Z = \operatorname{diag}(n_1^z,\dots,n_{K_z}^z) \in \mathbb{R}^{K_z \times K_z},
\end{align*}
are both positive definite and hence invertible.

We first establish the structural representation of the singular vectors. Since $\Omega = \rho Y B Z^\top$, the column space of $\Omega$ is a subspace of the column space of $Y$. Because $\operatorname{rank}(\Omega) = K_y$ (as $\operatorname{rank}(B) = K_y$ and both $Y$ and $Z$ have full column rank) and $Y \in \mathbb{R}^{n_y \times K_y}$ has full column rank, the column space of $\Omega$ coincides with the column space of $Y$. Hence the left singular vectors $U$, which form an orthonormal basis for the column space of $\Omega$, can be expressed as a linear combination of the columns of $Y$. Therefore, there exists a matrix $X \in \mathbb{R}^{K_y \times K_y}$ such that
\begin{align*}
U = Y X. 
\end{align*}

Analogously, the row space of $\Omega$ (equivalently, the column space of $\Omega^\top$) is contained in the column space of $Z$. Because the row space of $\Omega$ has dimension $K_y$, which is at most $K_z$, there exists a matrix $W \in \mathbb{R}^{K_z \times K_y}$ satisfying
\begin{align*}
V = Z W.
\end{align*}

Moreover, since $V \in \mathbb{R}^{n_z \times K_y}$ has full column rank $K_y$ (as $V^\top V = I_{K_y}$) and $Z \in \mathbb{R}^{n_z \times K_z}$ has full column rank, the representation $V = Z W$ implies that $W$ must have full column rank $K_y$; otherwise, $\operatorname{rank}(V) \le \operatorname{rank}(W) < K_y$, a contradiction. Thus $W$ is full column rank. This establishes the existence of matrices $X$ and $W$.

We now prove that $X$ is invertible and derive the explicit distance formula for its rows. From the orthonormality of $U$ and the representation $U = Y X$, we obtain
\begin{align}
I_{K_y} = U^\top U = (Y X)^\top (Y X) = X^\top (Y^\top Y) X = X^\top D_y X. \label{eq:XorthScBM}
\end{align}

Taking determinants on both sides of Equation \eqref{eq:XorthScBM} yields
\begin{align*}
1 = \det(I_{K_y}) = \det(X^\top D_y X) = (\det X)^2 \det(D_y). 
\end{align*}

Because $\det(D_y) = \prod_{k=1}^{K_y} n_k^y > 0$, it follows that $(\det (X))^2 = 1 / \det(D_y) \neq 0$, so $\det (X) \neq 0$; hence $X$ is invertible.

Next, we derive the precise relation between $X$ and $D_y$. Since $X$ is invertible, we may invert both sides of Equation \eqref{eq:XorthScBM} to obtain
\begin{align}
X^{-1} D_y^{-1} X^{-\top} = I_{K_y}, \label{eq:XinvD}
\end{align}
where $X^{-\top} := (X^{-1})^\top$. Left-multiplying Equation \eqref{eq:XinvD} by $X$ and right-multiplying by $X^\top$ yields
\begin{align*}
X X^{-1} D_y^{-1} X^{-\top} X^\top = X I_{K_y} X^\top,
\end{align*}
which simplifies to $D_y^{-1} = X X^\top$. This identity is the key to the row-distance calculation.

For any distinct indices $k, l \in \{1,\dots,K_y\}$, let $e_k \in \mathbb{R}^{K_y}$ denote the $k$-th standard basis vector. We have
\begin{align*}
\|X_{k,:} - X_{l,:}\|_2^2= (e_k - e_l)^\top X X^\top (e_k - e_l)= (e_k - e_l)^\top D_y^{-1} (e_k - e_l)= \frac{1}{n_k^y} + \frac{1}{n_l^y}.
\end{align*}

Since both $n_k^y$ and $n_l^y$ are strictly positive, the right-hand side is positive for $k \neq l$, and taking square roots yields  \(\|X_{k,:}-X_{l,:}\|_2 = \sqrt{\frac{1}{n_k^y}+\frac{1}{n_l^y}}\).

For the right singular vectors when $K_y=K_z$, $W$ is square. Repeating the above analysis completes the proof.
\end{proof}

\subsection{Proof of Theorem~\ref{thm:mainFinal}}\label{app:B}
\begin{proof}
We apply Theorem~1 of \cite{cai2021subspace} to the bi-adjacency matrix \(A\), viewed as a noisy observation of \(\Omega\). In the notation of that theorem, we set \(d_1=n_y\), \(d_2=n_z\), \(r=K_y\), and \(p_{\rm samp}=1\) (full observation). To invoke the theorem, we need to verify its three conditions in Equation~(15) as well as its Assumption~2.

For this purpose, we shall repeatedly use the following three estimates, which follow directly from Lemmas~\ref{lem:sigmaMinLower}, \ref{lem:kappaUpperScBM}, and \ref{lem:mu1ScBM}, respectively:
\[
\sigma_{\min}\ge \rho\sigma_B\sqrt{\frac{\beta_y\beta_z n_y n_z}{K_y K_z}},\quad
\kappa_{\Omega}\le \kappa_B\sqrt{\tau_y\tau_z},\quad
\mu_1=\frac{1}{\beta_y}.
\]

Together with the explicit assumptions in Theorem~\ref{thm:mainFinal}, these bounds allow us to verify all required conditions. The detailed verification is as follows.
\begin{itemize}
\item From conditions~(\ref{cond:1}) and~(\ref{cond:2}), together with \(\kappa_{\Omega}\le \kappa_B\sqrt{\tau_y\tau_z}\), we have
\[
n_y n_z \succeq \mu^2(\kappa_B\sqrt{\tau_y\tau_z})^8 K_y^2 \log^4 N
       \ge \mu^2 \kappa_{\Omega}^8 K_y^2 \log^4 N,
\]
and
\[
n_z \succeq \mu \kappa_B^8 \tau_y^4 \tau_z^4 K_y \log^2 N
      \ge \mu \kappa_{\Omega}^8 K_y \log^2 N.
\]
These directly imply the first condition in Equation~(15) of \cite{cai2021subspace}:
\[
1 \succeq \max\left\{
\frac{\mu \kappa_{\Omega}^4 K_y \log^2 N}{\sqrt{n_y n_z}},
\frac{\mu \kappa_{\Omega}^8 K_y \log^2 N}{n_z}
\right\}.
\]
\item Let \(M:=A-\Omega\). The entries of \(M\) are independent, zero-mean, bounded in absolute value by \(1\), with variance at most \(\rho\). By conditions~(\ref{cond:3}) and~(\ref{cond:4}), together with the lower bound on \(\sigma_{\min}\) and the bound on \(\kappa_{\Omega}\), we have
\[
\sigma_{\min}\ge \rho\sigma_B\sqrt{\frac{\beta_y\beta_z n_y n_z}{K_y K_z}}
\succeq \kappa_B\sqrt{\tau_y\tau_z}\,(n_y n_z)^{1/4}\sqrt{\rho\log N}
\ge \kappa_{\Omega}(n_y n_z)^{1/4}\sqrt{\rho\log N},
\]
and
\[
\sigma_{\min}\ge \rho\sigma_B\sqrt{\frac{\beta_y\beta_z n_y n_z}{K_y K_z}}
\succeq \kappa_B^3(\tau_y\tau_z)^{3/2}\sqrt{\rho n_y\log N}
\ge \kappa_{\Omega}^3\sqrt{\rho n_y\log N}.
\]
These directly yield the second condition in Equation~(15) of \cite{cai2021subspace}:
\[
\frac{\sqrt{\rho}}{\sigma_{\min}}
\preceq \min\left\{
\frac{1}{\kappa_{\Omega}(n_y n_z)^{1/4}\sqrt{\log N}},
\frac{1}{\kappa_{\Omega}^3\sqrt{n_y\log N}}
\right\}.
\]
\item From condition~(\ref{cond:5}), together with \(\mu_1=1/\beta_y\) and \(\kappa_{\Omega}\le \kappa_B\sqrt{\tau_y\tau_z}\), we have
\[
n_y \succeq \frac{\kappa_B^4 \tau_y^2 \tau_z^2 K_y}{\beta_y}
= \mu_1 \kappa_B^4 \tau_y^2 \tau_z^2 K_y
\ge \mu_1 \kappa_{\Omega}^4 K_y.
\]
This directly yields the third condition in Equation~(15) of \cite{cai2021subspace}:
\[
K_y \preceq \frac{n_y}{\mu_1 \kappa_{\Omega}^4}.
\]
\end{itemize}

Condition~(\ref{cond:6}) ensures that Assumption~2 of \cite{cai2021subspace} is satisfied. Having verified all three conditions in Equation~(15) together with Assumption~2, we now apply Theorem~1 of \cite{cai2021subspace}. Consequently, with probability at least \(1-O(N^{-10})\), there exists an orthogonal matrix \(\mathcal{O}\in\mathbb{R}^{K_y\times K_y}\) such that
\[
\|\widehat U\mathcal{O} - U\|_{2,\infty}
\preceq \kappa_{\Omega}^2 \sqrt{\frac{\mu K_y}{n_y}}\, \mathcal{E}_{\rm gen},
\]
where
\[
\mathcal{E}_{\rm gen}
= \frac{\mu_1\kappa_{\Omega}^2 K_y}{n_y}
+ \frac{\rho\sqrt{n_y n_z}\log N}{\sigma_{\min}^2}
+ \frac{\sqrt{\rho}\,\kappa_{\Omega}\sqrt{n_y\log N}}{\sigma_{\min}}.
\]

For the diagonal bias term, by condition (\ref{cond:7}),  we have
\begin{align*}
\kappa^2_{\Omega} \sqrt{\frac{\mu K_y}{n_y}} \cdot \frac{\mu_1\kappa^2_{\Omega} K_y}{n_y}
&=\kappa^4_{\Omega} \sqrt{\frac{\mu K_y}{n_y}} \cdot \frac{K_y}{\beta_yn_y}\leq\kappa^4_{B}\tau^2_y\tau^2_z \sqrt{\frac{\mu K_y}{n_y}} \cdot \frac{K_y}{\beta_yn_y}\ll\sqrt{\frac{K_y}{\tau_y\beta_yn_y}}. 
\end{align*}

For the quadratic noise term, by condition (\ref{cond:8}), we have
\begin{align*}
\kappa^2_{\Omega} \sqrt{\frac{\mu K_y}{n_y}} \cdot \frac{\rho\sqrt{n_y n_z}\log N}{\sigma_{\min}^2}
&\leq \kappa^2_B\tau_y\tau_z\sqrt{\frac{\mu K_y}{n_y}} \cdot \frac{\rho K_yK_z\sqrt{n_y n_z}\log N}{\rho^2\sigma^2_B\beta_y\beta_zn_yn_z}\ll\sqrt{\frac{K_y}{\tau_y\beta_yn_y}}.
\end{align*}

For the linear noise term, by condition (\ref{cond:9}), we have
\begin{align*}
\kappa^2_{\Omega} \sqrt{\frac{\mu K_y}{n_y}} \cdot \frac{\sqrt{\rho}\,\kappa_{\Omega}\sqrt{n_y\log N}}{\sigma_{\min}}
&\leq \kappa^3_{B}(\tau_y\tau_z)^{\frac{3}{2}} \sqrt{\frac{\mu K_y}{n_y}} \cdot \frac{\sqrt{\rho}\sqrt{K_yK_zn_y\log N}}{\rho\sigma_B\sqrt{\beta_y\beta_zn_yn_z}}\ll\sqrt{\frac{K_y}{\tau_y\beta_yn_y}}.
\end{align*}

Since $\delta_{r,\mathrm{min}}=\mathrm{min}_{k\neq l}\|X_{k,:}-X_{l,:}\|_{2}\geq\sqrt{\frac{2}{n^{y}_{\max}}}=\sqrt{\frac{2}{\tau_{y} n^{y}_{\min}}}=\sqrt{\frac{2K_y}{\tau_{y}\beta_{y}n_{y}}}$ by Lemma~\ref{lem:SVstructure}, we obtain
\begin{align*}
\|\widehat U\mathcal{O} - U\|_{2,\infty}
= o(\delta_{r,\mathrm{min}}). 
\end{align*}

Hence every row of \(\widehat U\mathcal O\) is strictly closer to its own class centroid than to any other. Therefore, \(k\)-means on \(\widehat U\mathcal O\) recovers the true row labels exactly. Since \(\mathcal O\) is orthogonal, it only permutes labels, so \(k\)-means on \(\widehat U\) gives the same clustering up to permutation; thus \(\ell(\hat y,y)=0\). 
\end{proof}
\subsection{Proof of Corollary \ref{cor:colRecovery}}
\begin{proof}
Apply Theorem~1 of \cite{cai2021subspace} to the matrix \(G_z=\mathcal{P}_{\mathrm{off-diag}}(A^\top A)\), viewed as a noisy observation of \(\bar G_z=\mathcal{P}_{\mathrm{off-diag}}(\Omega^\top\Omega)\). Since \(K_y=K_z\), the model and algorithm are symmetric in the two node types: the left singular vectors of \(\Omega^\top\) directly encode the column community structure. Hence, the entire verification of the three conditions in Equation~(15) and Assumption~2 of \cite{cai2021subspace} is identical to that in the proof of Theorem~\ref{thm:mainFinal} after interchanging \(y\leftrightarrow z\) and \(\mu_1\leftrightarrow\mu_2\). The same argument yields an approximation error for the column eigenvectors that is \(o(\delta_{c,\min})\), where \(\delta_{c,\min}\) is the minimum distance between distinct column community centers (provided by Lemma~\ref{lem:SVstructure}). The nearest-centroid argument then guarantees exact column recovery, i.e., \(\ell(\hat z,z)=0\).
\end{proof}
\subsection{Proof of Corollary~\ref{cor:balancedFinal}}\label{app:D}
\begin{proof}
Under the balanced setting and the assumed \(O(1)\) regularity conditions, the nine conditions of Theorem~\ref{thm:mainFinal} reduce to the two conditions \(K\log^2 N=o(n)\) and \(\rho n\gg K^2\log N\). Indeed, conditions~(\ref{cond:1}),~(\ref{cond:2}),~(\ref{cond:5}) and~(\ref{cond:7}) all reduce to \(K\log^2 N=o(n)\); conditions~(\ref{cond:3}),~(\ref{cond:4}),~(\ref{cond:8}) and~(\ref{cond:9}) reduce to \(\rho n\gg K^2\log N\) (with the last being implied by it since \(K\ge1\)); and condition~(\ref{cond:6}) is implied by \(\rho n\succeq\log N\), which follows from \(\rho n\gg K^2\log N\).

For part (i), substituting \(\rho=c\log N/n\) into \(\rho n\gg K^2\log N\) yields \(c\log N\gg K^2\log N\), i.e., \(K=o(1)\), which is impossible for growing \(K\). Hence the theorem's conditions fail.

For part (ii), when \(\rho=O(1)\), the condition \(\rho n\gg K^2\log N\) becomes \(n\gg K^2\log N\), i.e., \(K=o(\sqrt{n/\log N})\). Together with \(K\log^2 N=o(n)\), this yields \(K=o(\min\{n/\log^2 N,\sqrt{n/\log N}\})\). The theorem then guarantees exact row recovery, and column recovery follows from Corollary~\ref{cor:colRecovery} when \(K_y=K_z\).
\end{proof}
\subsection{Auxiliary lemmas under ScBM}\label{app:E}
\begin{lem}\label{lem:sigmaMinLower}
Under ScBM, we have $\sigma_{\min}\ge \rho\,\sigma_B\,\sqrt{\frac{\beta_{y}\beta_{z}n_y n_z}{K_y K_z}}$. 
\end{lem}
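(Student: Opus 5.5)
We need $\sigma_{K_y}(\rho YBZ^\top)\ge \rho\sigma_B\sqrt{n^y_{\min}n^z_{\min}}$, since $\beta_y n_y/K_y=n^y_{\min}$ and $\beta_z n_z/K_z=n^z_{\min}$. The plan is to factor out the diagonal scalings. Write $Y = \widetilde Y D_y^{1/2}$ and $Z=\widetilde Z D_z^{1/2}$, where $D_y=Y^\top Y$ and $D_z=Z^\top Z$ are the diagonal size matrices from the proof of Lemma~\ref{lem:SVstructure}. Then $\widetilde Y:=YD_y^{-1/2}$ and $\widetilde Z:=ZD_z^{-1/2}$ have orthonormal columns. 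Hence
\[
\Omega=\rho\,\widetilde Y\,\bigl(D_y^{1/2}BD_z^{1/2}\bigr)\,\widetilde Z^\top .
\]
Multiplying on the left by a matrix with orthonormal columns and on the right by the transpose of such a matrix preserves nonzero singular values. So $\sigma_{K_y}(\Omega)=\rho\,\sigma_{K_y}(D_y^{1/2}BD_z^{1/2})$.

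Next I bound $\sigma_{K_y}(D_y^{1/2}BD_z^{1/2})$ from below using the inequality $\sigma_{K_y}(PMQ)\ge \sigma_{\min}(P)\,\sigma_{K_y}(M)\,\sigma_{\min}(Q)$. This holds when $P$ is a square invertible $K_y\times K_y$ matrix and $Q$ is a square invertible $K_z\times K_z$ matrix. I will justify it via the variational characterization, taking care that $B$ is $K_y\times K_z$ with full row rank. Since $\mathrm{rank}=K_y$, the $K_y$-th singular value equals the smallest singular value of the map restricted to the row space. An equivalent route is $\sigma_{K_y}(M)^2=\lambda_{\min}(MM^\top)$, because $MM^\top$ is $K_y\times K_y$ and invertible. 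Then
\[
PMQQ^\top M^\top P^\top\succeq \sigma_{\min}(Q)^2\,PMM^\top P^\top\succeq \sigma_{\min}(Q)^2\sigma_{\min}(P)^2\sigma_{K_y}(M)^2 I .
\]
Here the first step uses $QQ^\top\succeq\sigma_{\min}(Q)^2 I$. The second step uses $MM^\top\succeq\sigma_{K_y}(M)^2I$ together with $PP^\top\succeq\sigma_{\min}(P)^2I$.

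With $\sigma_{\min}(D_y^{1/2})=\sqrt{n^y_{\min}}$ and $\sigma_{\min}(D_z^{1/2})=\sqrt{n^z_{\min}}$, we get $\sigma_{\min}\ge\rho\sigma_B\sqrt{n^y_{\min}n^z_{\min}}$. Substituting the definitions of $\beta_y,\beta_z$ gives the claim. The only point needing care is the rectangular shape of $B$; working with $MM^\top$, which is $K_y\times K_y$ and positive definite, avoids the issue entirely, so there is no real obstacle.
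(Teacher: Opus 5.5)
Your proof is correct and follows the same route as the paper. The paper dismisses this step as ``basic algebra,'' but in the proof of Lemma~\ref{lem:kappaUpperScBM} it carries out exactly your factorization $\Omega=\rho\,\widetilde Y(D_y^{1/2}BD_z^{1/2})\widetilde Z^\top$ and your bound $\sigma_{K_y}(D_y^{1/2}BD_z^{1/2})\ge\sqrt{n^y_{\min}n^z_{\min}}\,\sigma_{K_y}(B)$. The only difference is that the paper justifies the last inequality via Lemma~\ref{lem:svdRightMultiply} ($\sigma_k(WX)\ge\sigma_k(W)/\|X^{-1}\|$), while you use an equivalent Loewner-order comparison on $MM^\top$, which handles the rectangular shape of $B$ cleanly.
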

\begin{proof}
Since \(\Omega=\rho YBZ^\top\) under ScBM, we have $\sigma_{\min}\ge \rho\sigma_B\sqrt{n_{\min}^y n_{\min}^z}$ by basic algebra. Thus, the balanced bound follows by substituting $n_{\min}^y=\beta_{y} n_y/K_y$ and $n_{\min}^z=\beta_{z} n_z/K_z$.
\end{proof}
\begin{lem}\label{lem:kappaUpperScBM}
Under ScBM, we have $\kappa_{\Omega} := \frac{\sigma_1(\Omega)}{\sigma_{K_y}(\Omega)}
\;\le\; \kappa_B \sqrt{\tau_y\,\tau_z}$.
\end{lem}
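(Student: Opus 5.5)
We need to bound $\sigma_1(\Omega)$ from above and $\sigma_{K_y}(\Omega)$ from below, both in terms of the singular values of $B$ and the community sizes, and then take the ratio.

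The natural device is to factor out the community-size scalings. Write $D_y=Y^\top Y=\diag(n^y_1,\dots,n^y_{K_y})$ and $D_z=Z^\top Z$. The matrices $Y D_y^{-1/2}$ and $Z D_z^{-1/2}$ have orthonormal columns. Hence
\[
\Omega=\rho\,(YD_y^{-1/2})\,\bigl(D_y^{1/2} B D_z^{1/2}\bigr)\,(ZD_z^{-1/2})^\top .
\]
Since the outer factors are partial isometries with orthonormal columns, the nonzero singular values of $\Omega$ coincide with those of $\rho D_y^{1/2}BD_z^{1/2}$. Concretely, $\Omega\Omega^\top$ and $\rho^2 (YD_y^{-1/2}) M M^\top (YD_y^{-1/2})^\top$, with $M:=D_y^{1/2}BD_z^{1/2}$, share their nonzero eigenvalues with $\rho^2 MM^\top$.

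With this reduction, $\kappa_\Omega=\sigma_1(M)/\sigma_{K_y}(M)$, and the sparsity factor $\rho$ cancels. The two bounds then follow from standard inequalities for singular values of products:
\begin{itemize}
\item For the top singular value, submultiplicativity gives $\sigma_1(M)\le \|D_y^{1/2}\|\,\sigma_1(B)\,\|D_z^{1/2}\|=\sigma_1(B)\sqrt{n^y_{\max}n^z_{\max}}$.
\item For the bottom singular value, use $\sigma_{K_y}(PBQ)\ge\sigma_{\min}(P)\,\sigma_{K_y}(B)\,\sigma_{\min}(Q)$ for invertible square $P,Q$. This gives $\sigma_{K_y}(M)\ge\sigma_B\sqrt{n^y_{\min}n^z_{\min}}$, which is also the bound used in Lemma~\ref{lem:sigmaMinLower}.
\end{itemize}
Dividing the first bound by the second yields
\[
\kappa_\Omega\le\kappa_B\sqrt{\frac{n^y_{\max}n^z_{\max}}{n^y_{\min}n^z_{\min}}}=\kappa_B\sqrt{\tau_y\tau_z},
\]
as claimed.

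The main point requiring care is the lower bound on $\sigma_{K_y}$, because $B$ is rectangular ($K_y\times K_z$) with full row rank. I would justify it through Courant–Fischer applied to $MM^\top=D_y^{1/2}BD_zB^\top D_y^{1/2}$. For any unit vector $x$, set $u=D_y^{1/2}x$. Then
\[
x^\top MM^\top x=u^\top B D_z B^\top u\ge n^z_{\min}\|B^\top u\|_2^2\ge n^z_{\min}\sigma_B^2\|u\|_2^2\ge n^z_{\min}n^y_{\min}\sigma_B^2 .
\]
Here $\|B^\top u\|_2\ge\sigma_{K_y}(B)\|u\|_2$ holds because $B^\top$ has full column rank $K_y$. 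Taking the minimum over unit $x$ gives $\lambda_{\min}(MM^\top)\ge n^y_{\min}n^z_{\min}\sigma_B^2$, so $\sigma_{K_y}(M)\ge\sigma_B\sqrt{n^y_{\min}n^z_{\min}}$. This completes the argument.
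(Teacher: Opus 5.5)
Your proof is correct and follows the paper's argument: the same factorization $\Omega=\rho\,(YD_y^{-1/2})(D_y^{1/2}BD_z^{1/2})(ZD_z^{-1/2})^\top$, the same submultiplicativity bound on $\sigma_1$, and the same lower bound $\sigma_{K_y}\ge\sigma_B\sqrt{n^y_{\min}n^z_{\min}}$. The only difference is cosmetic: the paper gets the lower bound by applying Lemma~\ref{lem:svdRightMultiply} once on each side of $B$, whereas you verify it directly with a Rayleigh-quotient bound on $MM^\top$, which is equally valid and slightly more self-contained.
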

\begin{proof}
Define the diagonal matrices of class sizes:
\[
D_y := Y^\top Y = \operatorname{diag}(n_1^y,\dots,n_{K_y}^y),\quad
D_z := Z^\top Z = \operatorname{diag}(n_1^z,\dots,n_{K_z}^z).
\]

Let
\[
\widetilde{Y} := Y D_y^{-1/2},\quad
\widetilde{Z} := Z D_z^{-1/2},\quad
\widetilde{B} := D_y^{1/2} B D_z^{1/2}.
\]

Then $\widetilde{Y}^\top \widetilde{Y} = I_{K_y}$ and $\widetilde{Z}^\top \widetilde{Z} = I_{K_z}$, and we have the key factorization
\[
\Omega = \rho \, Y B Z^\top
      = \rho \, (Y D_y^{-1/2}) (D_y^{1/2} B D_z^{1/2}) (D_z^{-1/2} Z^\top)
      = \rho \,\widetilde{Y} \widetilde{B} \widetilde{Z}^\top.
\]

We first verify that $\Omega$ and $\widetilde{B}$ share the same non-zero singular values up to the scalar $\rho$. Indeed,
\[
\Omega \Omega^\top = \rho^2 \widetilde{Y} \widetilde{B} \widetilde{Z}^\top \widetilde{Z} \widetilde{B}^\top \widetilde{Y}^\top = \rho^2 \widetilde{Y} (\widetilde{B} \widetilde{B}^\top) \widetilde{Y}^\top.
\]

Since $\widetilde{Y}^\top \widetilde{Y}=I_{K_y}$, the matrix $\widetilde{Y} (\widetilde{B} \widetilde{B}^\top) \widetilde{Y}^\top$ has the same non-zero eigenvalues as $\widetilde{B} \widetilde{B}^\top$. Hence the non-zero singular values of $\Omega$ are exactly $\rho$ times those of $\widetilde{B}$. Since $\operatorname{rank}(\Omega)=\operatorname{rank}(\widetilde{B})=K_y$, the $K_y$-th singular value is the smallest positive singular value in both cases, and the common factor $\rho$ cancels in the ratio. Therefore, we have
\[
\kappa_{\Omega} = \kappa_{\widetilde{B}} := \frac{\sigma_1(\widetilde{B})}{\sigma_{K_y}(\widetilde{B})}.
\]

It remains to bound $\kappa(\widetilde{B})$. Let $P := D_y^{1/2}$ and $Q := D_z^{1/2}$, so that $\widetilde{B} = P B Q$. For the largest singular value, the sub-multiplicativity of the spectral norm gives
\begin{align}\label{ScBMBBL}
\sigma_1(\widetilde{B}) \le \|P\|\,\|B\|\,\|Q\|
= \sqrt{n_{\max}^y n_{\max}^z} \;\sigma_1(B). 
\end{align}

For the smallest positive singular value, we use the following elementary fact: 

For any $k\leq\operatorname{rank}(B)$, by Lemma~\ref{lem:svdRightMultiply}, we have
\[
\sigma_k(B Q) \ge \frac{\sigma_k(B)}{\|Q^{-1}\|},
\]
and
\[
\sigma_k(B^\top P^\top) \ge \frac{\sigma_k(B^\top)}{\|(P^\top)^{-1}\|} = \frac{\sigma_k(B)}{\|P^{-1}\|}.
\]

Let $k=K_y$. Combining these two inequalities yields
\[
\sigma_{K_y}(\widetilde{B}) = \sigma_{K_y}(P B Q)
\ge \frac{\sigma_{K_y}(B)}{\|P^{-1}\|\|Q^{-1}\|}.
\]

Since $P$ and $Q$ are positive diagonal matrices, we have
\[
\|P^{-1}\|= \frac{1}{\sqrt{n_{\min}^y}},\quad
\|Q^{-1}\|= \frac{1}{\sqrt{n_{\min}^z}}.
\]

Hence, we get
\begin{align}\label{ScBMBBS}
\sigma_{K_y}(\widetilde{B}) \ge \sqrt{n_{\min}^y n_{\min}^z} \;\sigma_{K_y}(B). 
\end{align}

Combining Equations~(\ref{ScBMBBL}) and (\ref{ScBMBBS}) yields
\[
\kappa_{\widetilde{B}} \le
\frac{
\sqrt{n_{\max}^y n_{\max}^z} \;\sigma_1(B)
}{
\sqrt{n_{\min}^y n_{\min}^z} \;\sigma_{K_y}(B)
}
= \frac{\sigma_1(B)}{\sigma_{K_y}(B)}
\sqrt{
\frac{n_{\max}^y n_{\max}^z}{n_{\min}^y n_{\min}^z}
}
= \kappa_B \sqrt{\tau_y \tau_z}.
\]

Since $\kappa_{\Omega} = \kappa_{\widetilde{B}}$, the desired bound follows. 
\end{proof}

\begin{lem}\label{lem:mu1ScBM}
Under ScBM, we have $\mu_1 = \frac{1}{\beta_y}$.
\end{lem}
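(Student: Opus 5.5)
The plan is to compute \(\|U\|_{2,\infty}\) exactly from the structural representation \(U=YX\) of Lemma~\ref{lem:SVstructure}, together with the identity \(XX^\top=D_y^{-1}\) established in its proof, where \(D_y=Y^\top Y=\operatorname{diag}(n_1^y,\dots,n_{K_y}^y)\).

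Since each row of \(Y\) contains exactly one 1, the \(i\)-th row of \(U\) equals \(X_{y_i,:}\). Hence
\[
\|U\|_{2,\infty}^2=\max_{k\in[K_y]}\|X_{k,:}\|_2^2 .
\]
The diagonal of \(XX^\top=D_y^{-1}\) gives \(\|X_{k,:}\|_2^2=e_k^\top XX^\top e_k=1/n_k^y\) for every \(k\). Every community is non-empty, so the maximum over \(k\) is attained at the smallest community, and
\[
\|U\|_{2,\infty}^2=\frac{1}{n_{\min}^y}.
\]

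Substituting into the definition of \(\mu_1\), and using \(\beta_y=K_yn_{\min}^y/n_y\), yields
\[
\mu_1=\frac{n_y}{K_y}\cdot\frac{1}{n_{\min}^y}=\frac{1}{\beta_y},
\]
which is the claim.

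No step is a genuine obstacle. The only point needing care is that \(XX^\top=D_y^{-1}\) relies on the invertibility of \(X\), which Lemma~\ref{lem:SVstructure} already provides, so that \(X^\top D_yX=I_{K_y}\) can be inverted.
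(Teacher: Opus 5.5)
Your proposal is correct and matches the paper's proof: both use $U=YX$ and $XX^\top=D_y^{-1}$ from Lemma~\ref{lem:SVstructure} to get $\|U_{i,:}\|_2^2=1/n_{y_i}^y$. Both then maximize over $i$ to obtain $1/n_{\min}^y$ and substitute $\beta_y=K_y n_{\min}^y/n_y$.
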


\begin{proof}
By Lemma~\ref{lem:SVstructure}, there exists an invertible matrix \(X\in\mathbb{R}^{K_y\times K_y}\) such that
\begin{align}
U = Y X, \label{eq:UYX}
\end{align}
and, with \(D_y := Y^\top Y = \operatorname{diag}(n_1^y,\dots,n_{K_y}^y)\), we have
\begin{align}
X X^\top = D_y^{-1}. \label{eq:XXtop}
\end{align}
Now fix any row node \(i\) with true community \(y_i = k\). From Equation \eqref{eq:UYX}, the \(i\)-th row of \(U\) is
\[
U_{i,:} = (Y X)_{i,:} = e_k^\top X = X_{k,:}.
\]
Using Equation \eqref{eq:XXtop} obtains
\begin{align*}
\|U_{i,:}\|_2^2 = \|X_{k,:}\|_2^2 = (X X^\top)_{kk} = (D_y^{-1})_{kk} = \frac{1}{n_k^y}. 
\end{align*}
Taking the maximum over all individuals gives
\begin{align*}
\max_i \|U_{i,:}\|_2^2 = \max_k \frac{1}{n_k^y} = \frac{1}{n_{\min}^y}.
\end{align*}
Therefore, we have
\begin{align*}
\mu_1 = \frac{n_y}{K_y}\cdot \frac{1}{n_{\min}^y}
      = \frac{1}{K_y n_{\min}^y / n_y}
      = \frac{1}{\beta_y}.
\end{align*}
\end{proof}

\begin{lem}\label{lem:muBoundFinal}
Under ScBM, assume the community sizes satisfy the balance conditions
\begin{align}
\beta_y := \frac{K_y n_{\min}^y}{n_y} \ge c_\beta,\quad
\beta_z := \frac{K_z n_{\min}^z}{n_z} \ge c_\beta,
\label{eq:balanceBeta}
\end{align}
for an absolute constant $c_\beta>0$. Suppose further that the number of column communities is comparable to the number of row communities:
\begin{align*}
K_z \le C_K K_y, 
\end{align*}
for an absolute constant $C_K>0$, and the connectivity matrix $B$ obeys
\begin{align*}
\|B\|_{\max} \le C_B',\quad \|B\|_F^2 \ge c_B K_y K_z,
\end{align*}
for absolute constants $C_B', c_B>0$, where for any matrix $M$, $\|M\|_{\max} := \max_{i,j}|M_{ij}|$ denotes the entrywise maximum norm. Then the incoherence parameters satisfy
\begin{align*}
\mu_0 \le \frac{(C_B')^2}{c_B c_\beta^2},\quad
\mu_1 \le \frac{1}{c_\beta},\quad
\mu_2 \le \frac{C_K}{c_\beta},
\end{align*}
and consequently $\mu = \max\{\mu_0,\mu_1,\mu_2\} = O(1)$.
\end{lem}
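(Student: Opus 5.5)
Each of the three bounds is handled separately; the final claim $\mu=O(1)$ is then immediate because $c_\beta$, $c_B$, $C_B'$ and $C_K$ are absolute constants.

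\emph{Bound on $\mu_1$.} This one is already done. Lemma~\ref{lem:mu1ScBM} gives $\mu_1=1/\beta_y$, and the balance condition \eqref{eq:balanceBeta} then gives $\mu_1\le 1/c_\beta$.

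\emph{Bound on $\mu_0$.} Write $\|\Omega\|_\infty=\rho\|B\|_{\max}\le\rho C_B'$. For the Frobenius norm, note that every entry of $B$ is repeated $n_k^y n_l^z$ times in $\Omega$:
\[
\|\Omega\|_F^2=\rho^2\sum_{k,l}n_k^y n_l^z B_{kl}^2\ge \rho^2 n_{\min}^y n_{\min}^z\|B\|_F^2\ge \rho^2\frac{\beta_y\beta_z n_y n_z}{K_yK_z}\,c_BK_yK_z .
\]
Hence $\|\Omega\|_F^2\ge \rho^2 c_B c_\beta^2 n_yn_z$. Substituting into the definition of $\mu_0$ gives $\mu_0\le (C_B')^2/(c_Bc_\beta^2)$.

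\emph{Bound on $\mu_2$.} This is the step needing a structural argument, since Lemma~\ref{lem:SVstructure} only gives $V=ZW$ with $W$ of full column rank, not an explicit $W$.

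I would take the same route as in Lemma~\ref{lem:kappaUpperScBM}. Write $\Omega=\rho\widetilde Y\widetilde B\widetilde Z^\top$ with $\widetilde Z=ZD_z^{-1/2}$, which has orthonormal columns. Let the SVD of $\widetilde B$ be $\widetilde B=P\Sigma_B R^\top$, where $R\in\mathbb{R}^{K_z\times K_y}$ has orthonormal columns. Then $V=\widetilde ZR$ up to an orthogonal rotation, which does not change row norms.

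Now row $j$ of $V$, with $z_j=l$, equals $R_{l,:}/\sqrt{n_l^z}$. Since $\|R_{l,:}\|_2\le1$, this gives
\[
\|V\|_{2,\infty}^2\le \frac{1}{n_{\min}^z}=\frac{K_z}{\beta_zn_z}.
\]
Therefore
\[
\mu_2=\frac{n_z}{K_y}\|V\|_{2,\infty}^2\le \frac{K_z}{K_y\beta_z}\le \frac{C_K}{c_\beta},
\]
using $K_z\le C_KK_y$ and $\beta_z\ge c_\beta$.

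The only delicate point is the rotation ambiguity when singular values repeat. It is harmless because any orthonormal basis of the row space of $\Omega$ differs from $\widetilde ZR$ by a right orthogonal factor, and right orthogonal factors preserve row norms.
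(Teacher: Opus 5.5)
Your proof is correct and essentially the paper's argument. The $\mu_1$ and $\mu_0$ bounds are identical. For $\mu_2$, the paper uses the Loewner-order domination $VV^\top\preceq ZD_z^{-1}Z^\top$ of projections, and your explicit factorization $V=\widetilde Z R$ (up to a right orthogonal factor) with $\|R_{l,:}\|_2\le 1$ is a concrete form of the same fact, since $VV^\top=\widetilde Z RR^\top\widetilde Z^\top\preceq\widetilde Z\widetilde Z^\top$.
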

\begin{proof}
We establish the three bounds sequentially.

\paragraph{Bound for $\mu_1$.} 
By Lemma~\ref{lem:mu1ScBM}, $\mu_1 = 1/\beta_y$. Since $\beta_y \ge c_\beta$ by Equation \eqref{eq:balanceBeta}, we have $\mu_1 \le 1/c_\beta$.

\paragraph{Bound for $\mu_2$.} 
Let $D_z := Z^\top Z = \operatorname{diag}(n_1^z,\dots,n_{K_z}^z)$, which is positive definite. Since $\Omega = \rho Y B Z^\top$, its row space satisfies $\operatorname{row}(\Omega) \subseteq \operatorname{col}(Z)$. Therefore, the orthogonal projection $VV^\top$ onto $\operatorname{row}(\Omega)$ is dominated in the Loewner order by the orthogonal projection onto $\operatorname{col}(Z)$, namely $Z D_z^{-1} Z^\top$:
\begin{align}
VV^\top \preceq Z D_z^{-1} Z^\top.
\label{eq:projIneq}
\end{align}

For any column node $j$ with $z_j = l$, taking the $(j,j)$-entry in Equation \eqref{eq:projIneq} yields
\begin{align*}
\|V_{j,:}\|_2^2 = (VV^\top)_{jj} \le (Z D_z^{-1} Z^\top)_{jj} = \frac{1}{n_l^z} \le \frac{1}{n_{\min}^z}.
\end{align*}

Maximizing over $j$ and using $\mu_2 = (n_z / K_y) \max_j \|V_{j,:}\|_2^2$, we obtain
\begin{align*}
\mu_2 \le \frac{n_z}{K_y n_{\min}^z}
= \frac{K_z}{\beta_z K_y}
\le \frac{C_K}{c_\beta},
\end{align*}
where the last inequality follows from $K_z \le C_K K_y$ and $\beta_z \ge c_\beta$.

\paragraph{Bound for $\mu_0$.} 
From $\Omega = \rho Y B Z^\top$ and $\|B\|_{\max} \le C_B'$, we have
\begin{align}
\|\Omega\|_{\max} = \max_{i,j} |\Omega_{ij}| \le \rho \max_{k,l} |B_{kl}| = \rho \|B\|_{\max} \le \rho C_B'.
\label{eq:OmegaInf}
\end{align}

For the Frobenius norm, we have
\begin{align}
\|\Omega\|_F^2
&= \rho^2 \sum_{k=1}^{K_y}\sum_{l=1}^{K_z} n_k^y n_l^z B_{kl}^2
\ge \rho^2 n_{\min}^y n_{\min}^z \|B\|_F^2
\ge \rho^2 n_{\min}^y n_{\min}^z \, c_B K_y K_z,
\label{eq:OmegaFscbm}
\end{align}
where the last inequality uses $\|B\|_F^2 \ge c_B K_y K_z$. Substituting $n_{\min}^y = \beta_y n_y / K_y$ and $n_{\min}^z = \beta_z n_z / K_z$ (from Equation~\eqref{eq:balanceBeta}) into Equation~\eqref{eq:OmegaFscbm} gives
\begin{align}
\|\Omega\|_F^2 \ge \rho^2 c_B \beta_y \beta_z \, n_y n_z.
\label{eq:OmegaF2}
\end{align}

Combining Equations~\eqref{eq:OmegaInf} and \eqref{eq:OmegaF2} yields
\begin{align*}
\mu_0
= \frac{n_y n_z \|\Omega\|_{\max}^2}{\|\Omega\|_F^2}
\le \frac{n_y n_z \cdot \rho^2 (C_B')^2}{\rho^2 c_B \beta_y \beta_z n_y n_z}
= \frac{(C_B')^2}{c_B \beta_y \beta_z}
\le \frac{(C_B')^2}{c_B c_\beta^2},
\end{align*}
where the last inequality follows from $\beta_y,\beta_z \ge c_\beta$.

Having established explicit upper bounds for $\mu_0,\mu_1,\mu_2$, we conclude $\mu = O(1)$. This completes the proof.
\end{proof}

\begin{lem}\label{lem:svdRightMultiply}
Let $W \in \mathbb{R}^{m \times n}$ be any matrix, and let $X \in \mathbb{R}^{n \times n}$ be invertible. Then for any integer $k$ satisfying $1 \le k \le \operatorname{rank}(W)$,
\[
\sigma_k(W X) \ge \frac{\sigma_k(W)}{\|X^{-1}\|}.
\]
\end{lem}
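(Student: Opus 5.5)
The plan is to reduce the claim to the variational (Courant--Fischer) characterization of singular values together with a dimension-counting argument. Recall that for any matrix $M\in\mathbb{R}^{m\times n}$ and $1\le k\le \min\{m,n\}$,
\[
\sigma_k(M)=\max_{\substack{S\subseteq\mathbb{R}^n\\ \dim S=k}}\ \min_{\substack{v\in S\\ \|v\|_2=1}}\|Mv\|_2 .
\]
We apply this with $M=WX$ and transport an optimal subspace for $W$ through $X^{-1}$.

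\textbf{Step 1 (choose a subspace for $W$).} Let $S_W$ be the span of the top $k$ right singular vectors of $W$. Since $k\le\rank(W)$, we have $\sigma_k(W)>0$. Every unit $u\in S_W$ satisfies $\|Wu\|_2\ge\sigma_k(W)$, and by homogeneity $\|Wu\|_2\ge\sigma_k(W)\|u\|_2$ for all $u\in S_W$.

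\textbf{Step 2 (pull back through $X$).} Define $S:=X^{-1}S_W$. Because $X$ is invertible, $\dim S=k$. Take any unit $v\in S$ and write $u=Xv\in S_W$. Then
\[
\|WXv\|_2=\|Wu\|_2\ge \sigma_k(W)\|Xv\|_2 .
\]
Next, from $1=\|v\|_2=\|X^{-1}Xv\|_2\le\|X^{-1}\|\,\|Xv\|_2$ we get $\|Xv\|_2\ge 1/\|X^{-1}\|$. Combining the two displays gives
\[
\|WXv\|_2\ge \frac{\sigma_k(W)}{\|X^{-1}\|}\quad\text{for every unit } v\in S .
\]

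\textbf{Step 3 (conclude).} By the max--min formula, $\sigma_k(WX)$ is at least the minimum over unit vectors of the $k$-dimensional subspace $S$. By Step 2 this minimum is at least $\sigma_k(W)/\|X^{-1}\|$, which is the claim.

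Three small points deserve checking.

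\emph{Range of $k$.} We need $k\le\min\{m,n\}$ for the max--min formula to apply. This is automatic, since $\rank(W)\le\min\{m,n\}$.

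\emph{Extension to the adjoint.} In Lemma~\ref{lem:kappaUpperScBM} the result is also used on the left, in the form $\sigma_k(B^\top P^\top)$. Transposition preserves singular values, so the same statement applied to $W^\top$ covers that case.

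\emph{Proving the characterization.} If one prefers not to quote the characterization, it follows from the SVD $W=\sum_i\sigma_i a_ib_i^\top$. For a $k$-dimensional $S$, any unit $v\in S$ orthogonal to $b_1,\dots,b_{k-1}$ (such a $v$ exists by dimension counting) has $\|Wv\|_2\le\sigma_k$. This gives the upper bound, and the lower bound is attained by $\operatorname{span}\{b_1,\dots,b_k\}$.

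The only mild obstacle is Step 2: the subspace must be chosen as the preimage $X^{-1}S_W$ rather than $S_W$ itself, so that the norm distortion of $X$ can be controlled from below by $\|X^{-1}\|^{-1}$. The rest is routine.
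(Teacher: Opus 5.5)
Your proof is correct and is essentially the paper's argument. The paper writes $W=(WX)X^{-1}$ and quotes the Courant--Fischer consequence $\sigma_k(HC)\le\sigma_k(H)\|C\|$ with $H=WX$, $C=X^{-1}$. You carry out that same max--min argument explicitly by transporting the top-$k$ right singular subspace of $W$ through $X^{-1}$, which makes your version longer but self-contained.
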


\begin{proof}
By the standard singular value inequality $\sigma_k(HC)\le \sigma_k(H)\|C\|$ for any matrices $H$ and $C$ (which follows directly from the Courant-Fischer min-max theorem), taking $H=WX$ and $C=X^{-1}$ yields 
\[
\sigma_k(W)=\sigma_k((WX)X^{-1})\le \sigma_k(WX)\|X^{-1}\|,
\]
and rearranging gives the desired lower bound.
\end{proof}
\section{Technical proofs under DCScBM}\label{app:Adc}

\subsection{Proof of Lemma~\ref{lem:SVstructureDC}}
\begin{proof}
Let $\widetilde U := \Theta_y Y (Y^\top\Theta_y^2 Y)^{-1/2}$. Then $\widetilde U^\top\widetilde U=I_{K_y}$ and $\operatorname{col}(\widetilde U)=\operatorname{col}(\Omega)$. Since the columns of $U$ form an orthonormal basis of $\operatorname{col}(\Omega)$, there exists an orthogonal matrix $Q\in\mathbb{R}^{K_y\times K_y}$ such that $U=\widetilde U Q$. For any $i$ with $y_i=k$, we have
\[
U_{i,:} = \theta_y(i)\,\Lambda_{kk}^{-1/2} Q_{k,:},\quad \Lambda_{kk}=\sum_{i:y_i=k}\theta_y(i)^2.
\]

Thus $\|U_{i,:}\|_2=\theta_y(i)\Lambda_{kk}^{-1/2}$, so $(U_*)_{i,:}=Q_{k,:}$. Hence $U_*=YQ$. Since $Q$ is orthogonal, $\|Q_{k,:}-Q_{l,:}\|_2=\sqrt{2}$ for $k\ne l$. The column statement follows by symmetry when $K_y=K_z$.
\end{proof}

\subsection{Proof of Theorem \ref{thm:mainDC}}
\begin{proof}
We apply Theorem~1 of \cite{cai2021subspace} to the bi-adjacency matrix \(A\), viewed as a noisy observation of \(\Omega=\Theta_y YBZ^\top\Theta_z\). We record three elementary facts. By Lemmas~\ref{lem:sigmaMinLowerDC}, \ref{lem:kappaDC}, and \ref{lem:mu12DC}, we have
  \begin{align}
&\sigma_{\min} \ge \theta_{\min}\sigma_B \sqrt{\frac{\beta_y\beta_z n_y n_z}{K_y K_z}},\label{fact:sigmaMinDC}\\
&\kappa_{\Omega} \le \kappa_B \, \eta_y \eta_z \sqrt{\tau_y \tau_z}, \label{fact:kappaDC}\\
&\mu_1 \le \frac{\eta_y^2}{\beta_y}. \label{fact:mu12DC}
  \end{align}

We now verify the three conditions of Theorem~1 in \cite{cai2021subspace} and its Assumption~2.

Condition~(i) of \cite{cai2021subspace} requires
\[
1 \succeq \max\left\{
\frac{\mu \kappa_{\Omega}^4 K_y \log^2 N}{\sqrt{n_y n_z}},
\frac{\mu \kappa_{\Omega}^8 K_y \log^2 N}{n_z}
\right\}.
\]

Using Equation~\eqref{fact:kappaDC} and conditions~\eqref{cond:DC1}–\eqref{cond:DC2}, we have
\begin{align}
\frac{\mu \kappa_{\Omega}^4 K_y \log^2 N}{\sqrt{n_y n_z}}
&\le \frac{\mu \kappa_B^4 \eta_y^4\eta_z^4 \tau_y^2\tau_z^2 K_y \log^2 N}{\sqrt{n_y n_z}}
\preceq 1, \notag
\end{align}
and similarly,
\begin{align}
\frac{\mu \kappa_{\Omega}^8 K_y \log^2 N}{n_z}
&\le \frac{\mu \kappa_B^8 \eta_y^8\eta_z^8 \tau_y^4\tau_z^4 K_y \log^2 N}{n_z}
\preceq 1. \notag
\end{align}

Hence condition~(i) holds. Under DCScBM, for each entry we have
\[
\mathrm{Var}(A_{ij}) = \Omega_{ij}(1-\Omega_{ij}) \le \Omega_{ij}
= \theta_y(i) B_{y_i,z_j}\theta_z(j)
\le \theta_{y,\max}\theta_{z,\max} =: \theta_{\max}.
\]

Set
\begin{align}
\sigma^2 := \theta_{\max}, \quad \sigma = \sqrt{\theta_{\max}}. \label{sigmaChoiceDC}
\end{align}

Condition~(ii) of \cite{cai2021subspace} requires
\[
\frac{\sigma}{\sigma_{\min}} \preceq \min\left\{
\frac{1}{\kappa_{\Omega}(n_y n_z)^{1/4}\sqrt{\log N}},
\frac{1}{\kappa_{\Omega}^3\sqrt{n_y\log N}}
\right\}.
\]

For the first bound, using Equations~\eqref{fact:sigmaMinDC} and \eqref{fact:kappaDC}, we need
\[
\sqrt{\eta_y\eta_z\theta_{\min}} \cdot \kappa_{\Omega} (n_y n_z)^{1/4}\sqrt{\log N}
\preceq \theta_{\min}\sigma_B \sqrt{\frac{\beta_y\beta_z n_y n_z}{K_y K_z}}.
\]

Squaring both sides and simplifying gives
\[
\theta_{\min}\sigma_B^2 \sqrt{n_y n_z}
\succeq \frac{\kappa_{\Omega}^2 \eta_y\eta_z K_y K_z \log N}{\beta_y\beta_z}.
\]

Since \(\kappa_{\Omega}^2 \le \kappa_B^2 \eta_y^2\eta_z^2 \tau_y\tau_z\) by Equation~\eqref{fact:kappaDC}, it suffices to enforce
\[
\theta_{\min}\sigma_B^2 \sqrt{n_y n_z}
\succeq\frac{\kappa_B^2 \eta_y^3\eta_z^3 \tau_y\tau_z K_y K_z}{\beta_y\beta_z}\log N,
\]
which is exactly condition~\eqref{cond:DC3}.

For the second bound, we need
\[
\sqrt{\eta_y\eta_z\theta_{\min}} \cdot \kappa_{\Omega}^3 \sqrt{n_y\log N}
\preceq \theta_{\min}\sigma_B \sqrt{\frac{\beta_y\beta_z n_y n_z}{K_y K_z}}.
\]

Squaring and simplifying yields
\[
\theta_{\min}\sigma_B^2 n_z
\succeq \frac{\kappa_{\Omega}^6 \eta_y\eta_z K_y K_z \log N}{\beta_y\beta_z}.
\]

Using \(\kappa_{\Omega}^6 \le \kappa_B^6 \eta_y^6\eta_z^6 \tau_y^3\tau_z^3\), it suffices to enforce
\[
\theta_{\min}\sigma_B^2 n_z
\succeq \frac{\kappa_B^6 \eta_y^7\eta_z^7 \tau_y^3\tau_z^3 K_y K_z}{\beta_y\beta_z}\log N,
\]
which is exactly condition \eqref{cond:DC4}. Hence condition~(ii) holds. Condition~(iii) of \cite{cai2021subspace} requires \(K_y \preceq n_y/(\mu_1 \kappa_{\Omega}^4)\). Using Equations~\eqref{fact:mu12DC} and \eqref{fact:kappaDC}, it suffices to enforce
\[
K_y \preceq \frac{\beta_y n_y}{\kappa_B^4 \eta_y^6\eta_z^4 \tau_y^2\tau_z^2},
\]
which follows directly from condition~\eqref{cond:DC5}. Thus condition~(iii) holds.

Take \(R=1\), since \(|A_{ij}-\Omega_{ij}|\le 1\). Assumption~2 of \cite{cai2021subspace} requires
\[
\frac{1}{\sigma^2} \preceq \frac{\min\{\sqrt{n_y n_z},\, n_z\}}{\log N}.
\]

With \(\sigma^2 = \theta_{\max} = \eta_y\eta_z\,\theta_{\min}\) from Equation~\eqref{sigmaChoiceDC}, this is equivalent to
\[
\theta_{\min} \min\{\sqrt{n_y n_z},\, n_z\} \succeq \frac{\log N}{\eta_y\eta_z},
\]
which is condition~\eqref{cond:DC6}. Hence Assumption~2 holds.

All conditions and Assumption~2 of Theorem~1 in \cite{cai2021subspace} are satisfied. Consequently, with probability at least \(1-O(N^{-10})\), there exists an orthogonal matrix \(\mathcal{O}\in\mathbb{R}^{K_y\times K_y}\) such that
\begin{align}
\|\widehat U\mathcal{O} - U\|_{2,\infty}
\preceq \kappa_{\Omega}^2 \sqrt{\frac{\mu K_y}{n_y}}\, \mathcal{E}_{\rm gen}, \label{eq:eigenDC}
\end{align}
where
\[
\mathcal{E}_{\rm gen} = \underbrace{\frac{\mu_1\kappa_{\Omega}^2 K_y}{n_y}}_{\text{diagonal bias}} + 
\underbrace{\frac{\sigma^2\sqrt{n_y n_z}\log N}{\sigma_{\min}^2}}_{\text{quadratic noise}} + 
\underbrace{\frac{\sigma\,\kappa_{\Omega}\sqrt{n_y\log N}}{\sigma_{\min}}}_{\text{linear noise}}.
\]

From the proof of Lemma~\ref{lem:SVstructureDC}, for any row node \(i\) with \(y_i=k\), we have
\[
\|U_{i,:}\|_2 = \frac{\theta_y(i)}{\sqrt{\sum_{i':y_{i'}=k}\theta_y(i')^2}}
\ge \frac{\theta_{y,\min}}{\theta_{y,\max}\sqrt{n_k^y}}
\ge \frac{1}{\eta_y \sqrt{\tau_y n_{\min}^y}}
= \frac{1}{\eta_y}\sqrt{\frac{K_y}{\tau_y\beta_y n_y}}.
\]

Therefore, by Equation~(\ref{eq:eigenDC}), we get
\[
\|\widehat U_*\mathcal{O} - U_*\|_{2,\infty}
\le \frac{2}{\min_i\|U_{i,:}\|_2}\|\widehat U\mathcal{O} - U\|_{2,\infty}
\leq \frac{2\eta_y\sqrt{\tau_y\beta_yn_y}}{\sqrt{K_y}}\|\widehat U\mathcal{O} - U\|_{2,\infty}\preceq2\eta_y\kappa^2_{\Omega}\sqrt{\tau_y\beta_y\mu}\mathcal{E}_{\rm gen}.
\]

We now verify that $\eta_y\kappa^2_{\Omega}\sqrt{\tau_y\beta_y\mu}\mathcal{E}_{\rm gen}$ is $o(1)$:
\begin{itemize}
  \item For the diagonal bias term, by Equations~\eqref{fact:mu12DC} and \eqref{fact:kappaDC}, we have
\begin{align}
&\eta_y\,\kappa_{\Omega}^2\,\sqrt{\tau_y \beta_y \mu} \cdot \frac{\mu_1\kappa_{\Omega}^2 K_y}{n_y}\le \eta_y \cdot \bigl(\kappa_B^2 \eta_y^2\eta_z^2 \tau_y\tau_z\bigr) \cdot \sqrt{\tau_y \beta_y \mu} \cdot \frac{(\eta_y^2/\beta_y) \cdot \bigl(\kappa_B^2 \eta_y^2\eta_z^2 \tau_y\tau_z\bigr) K_y}{n_y}= \kappa_B^4 \eta_y^7 \eta_z^4 \tau_y^{5/2} \tau_z^2 \frac{\sqrt{\mu}}{\sqrt{\beta_y}} \frac{K_y}{n_y}. \label{eq:DCtermI}
\end{align}
Condition~\eqref{cond:DC7} implies the right-hand side of Equation~\eqref{eq:DCtermI} is $o(1)$.

\item For the quadratic noise term, using $\sigma^2 = \eta_y\eta_z\theta_{\min}$ from Equations~\eqref{sigmaChoiceDC}, \eqref{fact:sigmaMinDC}, and \eqref{fact:kappaDC} gives
\begin{align}
&\eta_y\,\kappa_{\Omega}^2\,\sqrt{\tau_y \beta_y \mu} \cdot \frac{\sigma^2\sqrt{n_y n_z}\log N}{\sigma_{\min}^2}\le \eta_y \cdot \bigl(\kappa_B^2 \eta_y^2\eta_z^2 \tau_y\tau_z\bigr) \cdot \sqrt{\tau_y \beta_y \mu} \cdot
\frac{(\eta_y\eta_z\theta_{\min})\sqrt{n_y n_z}\log N}
{\theta_{\min}^2 \sigma_B^2 \, \frac{\beta_y\beta_z n_y n_z}{K_y K_z}}= \frac{\kappa_B^2 \eta_y^4 \eta_z^3 \tau_y^{3/2} \tau_z \sqrt{\mu}\,K_y K_z \log N}
{\theta_{\min}\sigma_B^2 \beta_z \sqrt{\beta_yn_y n_z}}. \label{eq:DCtermII}
\end{align}
Condition \eqref{cond:DC8} ensures Equation \eqref{eq:DCtermII} is $o(1)$.

\item For the linear noise term, using $\sigma = \sqrt{\eta_y\eta_z\theta_{\min}}$, Equations~\eqref{fact:sigmaMinDC} and \eqref{fact:kappaDC} give
\begin{align}
&\eta_y\,\kappa_{\Omega}^2\,\sqrt{\tau_y \beta_y \mu} \cdot \frac{\sigma\,\kappa_{\Omega}\sqrt{n_y\log N}}{\sigma_{\min}}= \eta_y\,\kappa_{\Omega}^3\,\sqrt{\tau_y \beta_y \mu} \cdot \frac{\sigma\sqrt{n_y\log N}}{\sigma_{\min}} \notag\\
&\quad \le \eta_y \cdot \bigl(\kappa_B^3 \eta_y^3\eta_z^3 (\tau_y\tau_z)^{3/2}\bigr) \cdot \sqrt{\tau_y \beta_y \mu} \cdot
\frac{\sqrt{\eta_y\eta_z\theta_{\min}}\,\sqrt{n_y\log N}}
{\theta_{\min}\sigma_B \sqrt{\frac{\beta_y\beta_z n_y n_z}{K_y K_z}}} \notag\\
&\quad = \frac{\kappa_B^3 \eta_y^{9/2} \eta_z^{7/2} \tau_y^2 \tau_z^{3/2} \sqrt{\mu K_yK_z\log N}}
{\sqrt{\theta_{\min}}\,\sigma_B \sqrt{\beta_z}\,\sqrt{n_z}}. \label{eq:DCtermIII}
\end{align}
Condition~\eqref{cond:DC9} guarantees that the right-hand side of Equation~\eqref{eq:DCtermIII} is $o(1)$.
\end{itemize}

Combining the above bounds yields
\[
\eta_y\,\kappa_{\Omega}^2\,\sqrt{\tau_y \beta_y \mu} \;\mathcal{E}_{\rm gen} = o(1).
\]

Therefore, $\|\widehat U_*\mathcal{O} - U_*\|_{2,\infty} = o(1)$. By Lemma~\ref{lem:SVstructureDC}, the true centroids are separated by distance $\sqrt{2}$, so the nearest-centroid rule yields exact recovery. Hence $\ell(\hat y,y)=0$ with probability at least $1-O(N^{-10})$.
\end{proof}
\subsection{Proof of Corollary \ref{cor:colDC}}
\begin{proof}
Apply Theorem~\ref{thm:mainDC} to $A^\top$. This yields a DCScBM with row communities $Z$, column communities $Y$, connectivity $B^\top$, and degree parameters $\Theta_z,\Theta_y$. The corollary's assumptions are exactly the theorem's conditions for this transposed model. The resulting algorithm computes $G_z=\mathcal{P}_{\mathrm{off-diag}}(A^\top A)$, row-normalizes its leading eigenvectors, and runs $k$-means, which is the column recovery step in Algorithm~\ref{alg:NSCDD}. Thus exact recovery of column labels follows.
\end{proof}
\subsection{Proof of Corollary \ref{cor:balancedDC}}
\begin{proof}
Under the stated assumptions, $\eta_y,\eta_z=O(1)$ and $\theta_{\min}=O(\rho)$. Substituting these into conditions~(\ref{cond:DC1})–(\ref{cond:DC9}) of Theorem~\ref{thm:mainDC}, along with the balanced scalings and the $O(1)$ bounds on $\kappa_B,\sigma_B,\beta_y,\beta_z,\tau_y,\tau_z,\mu$, yields the two conditions $K\log^2 N=o(n)$ and $\theta_{\min} n\gg K^2\log N$. The sparse and dense regimes then follow as in the proof of Corollary~\ref{cor:balancedFinal}. 
\end{proof}
\subsection{Auxiliary lemmas under DCScBMs}
\begin{lem}\label{lem:sigmaMinLowerDC}
Under DCScBM, we have
\(\sigma_{\min}(\Omega)\ge\sigma_B\,\theta_{\min}\sqrt{\frac{\beta_y\beta_z n_y n_z}{K_y K_z}}\).
\end{lem}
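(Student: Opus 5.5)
We need to show that \(\sigma_{K_y}(\Theta_y Y B Z^\top \Theta_z)\ge \sigma_B\theta_{\min}\sqrt{\beta_y\beta_z n_y n_z/(K_yK_z)}\). The plan mirrors the proof of Lemma~\ref{lem:kappaUpperScBM}: factor \(\Omega\) through orthonormal matrices and reduce everything to a \(K_y\times K_z\) core matrix.

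\textbf{Step 1: Normalize the degree-weighted membership matrices.} Define the diagonal matrices
\[
D_{\theta,y}:=Y^\top\Theta_y^2Y=\diag(\Lambda_{11},\dots,\Lambda_{K_yK_y}),\quad D_{\theta,z}:=Z^\top\Theta_z^2Z,
\]
where \(\Lambda_{kk}=\sum_{i:y_i=k}\theta_y(i)^2\), with the analogous sums for the column communities. Since every community is non-empty and \(\theta_y,\theta_z>0\), both matrices are positive definite. Set
\[
\widetilde Y:=\Theta_yYD_{\theta,y}^{-1/2},\quad \widetilde Z:=\Theta_zZD_{\theta,z}^{-1/2}.
\]
These satisfy \(\widetilde Y^\top\widetilde Y=I_{K_y}\) and \(\widetilde Z^\top\widetilde Z=I_{K_z}\), and they give the factorization
\[
\Omega=\widetilde Y\,\widetilde B\,\widetilde Z^\top,\qquad \widetilde B:=D_{\theta,y}^{1/2}BD_{\theta,z}^{1/2}.
\]
Exactly as in Lemma~\ref{lem:kappaUpperScBM}, \(\Omega\Omega^\top=\widetilde Y\widetilde B\widetilde B^\top\widetilde Y^\top\), so the nonzero singular values of \(\Omega\) coincide with those of \(\widetilde B\). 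Hence \(\sigma_{\min}(\Omega)=\sigma_{K_y}(\widetilde B)\).

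\textbf{Step 2: Lower-bound the smallest singular value of the core.} Apply Lemma~\ref{lem:svdRightMultiply} twice, once on each side as in the ScBM proof (using \(\sigma_k(M^\top)=\sigma_k(M)\) for the left factor). This gives
\[
\sigma_{K_y}(\widetilde B)\ge \frac{\sigma_{K_y}(B)}{\|D_{\theta,y}^{-1/2}\|\,\|D_{\theta,z}^{-1/2}\|}=\sigma_B\sqrt{\min_k\Lambda_{kk}\,\min_l\Lambda^z_{ll}}.
\]

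\textbf{Step 3: Bound the community weights and substitute.} Since \(\Lambda_{kk}\ge \theta_{y,\min}^2 n_k^y\ge\theta_{y,\min}^2n^y_{\min}\), and similarly \(\min_l\Lambda^z_{ll}\ge\theta_{z,\min}^2 n^z_{\min}\), we obtain
\[
\sigma_{\min}(\Omega)\ge\sigma_B\,\theta_{y,\min}\theta_{z,\min}\sqrt{n^y_{\min}n^z_{\min}}.
\]
Substituting \(\theta_{\min}=\theta_{y,\min}\theta_{z,\min}\), \(n^y_{\min}=\beta_yn_y/K_y\) and \(n^z_{\min}=\beta_zn_z/K_z\) yields the claim.

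The only point needing care is Step 2: applying Lemma~\ref{lem:svdRightMultiply} requires \(k\le\rank(B)=K_y\), which holds. The remaining arguments are routine.
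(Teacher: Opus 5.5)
Your proof is correct and takes the same approach as the paper. The paper's proof only asserts $\sigma_{\min}(\Omega)\ge\sigma_B\theta_{y,\min}\theta_{z,\min}\sqrt{n_{\min}^y n_{\min}^z}$ ``by basic algebra'' and then substitutes $n_{\min}^y=\beta_y n_y/K_y$ and $n_{\min}^z=\beta_z n_z/K_z$; the omitted algebra is your factorization $\Omega=\widetilde Y\widetilde B\widetilde Z^\top$ plus two applications of Lemma~\ref{lem:svdRightMultiply}, which is exactly the lower-bound step in the paper's proof of Lemma~\ref{lem:kappaDC}.
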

\begin{proof}
Since $\Omega=\Theta_y YBZ^\top\Theta_z$ under DCScBM, we have $\sigma_{\min}(\Omega)\ge\sigma_B\theta_{y,\min}\theta_{z,\min}\sqrt{n_{\min}^y n_{\min}^z}$ by basic algebra; the balanced bound follows by substituting $n_{\min}^y=\beta_y n_y/K_y$ and $n_{\min}^z=\beta_z n_z/K_z$.
\end{proof}
\begin{lem}\label{lem:kappaDC}
Under DCScBM, we have $\kappa_{\Omega} \le \kappa_B \, \eta_y \eta_z \sqrt{\tau_y \tau_z}$. 
\end{lem}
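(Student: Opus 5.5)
We follow the proof of Lemma~\ref{lem:kappaUpperScBM} and absorb the degree parameters into the outer diagonal factors. Write $D_y:=Y^\top\Theta_y^2Y=\operatorname{diag}(\Lambda^y_{11},\dots,\Lambda^y_{K_yK_y})$ with $\Lambda^y_{kk}=\sum_{i:y_i=k}\theta_y(i)^2$, and define $D_z:=Z^\top\Theta_z^2Z$ in the same way. Then
\[
\widetilde Y:=\Theta_yYD_y^{-1/2},\qquad \widetilde Z:=\Theta_zZD_z^{-1/2}
\]
have orthonormal columns. This is the same construction as in the proof of Lemma~\ref{lem:SVstructureDC}. We also set $\widetilde B:=D_y^{1/2}BD_z^{1/2}$, which gives the factorization $\Omega=\widetilde Y\widetilde B\widetilde Z^\top$.

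The first step is to show that $\Omega$ and $\widetilde B$ have the same nonzero singular values. The computation is $\Omega\Omega^\top=\widetilde Y(\widetilde B\widetilde B^\top)\widetilde Y^\top$ with $\widetilde Y^\top\widetilde Y=I_{K_y}$, exactly as before. Since $\operatorname{rank}(\widetilde B)=K_y$, this gives $\kappa_\Omega=\sigma_1(\widetilde B)/\sigma_{K_y}(\widetilde B)$.

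Next we bound the two singular values of $\widetilde B$ separately.
\begin{itemize}
\item For the top one, sub-multiplicativity gives $\sigma_1(\widetilde B)\le\|D_y^{1/2}\|\,\sigma_1(B)\,\|D_z^{1/2}\|$.
\item For the bottom one, Lemma~\ref{lem:svdRightMultiply} is applied on each side (once to $BD_z^{1/2}$, and once to the transpose for $D_y^{1/2}$). This gives $\sigma_{K_y}(\widetilde B)\ge\sigma_{K_y}(B)/(\|D_y^{-1/2}\|\,\|D_z^{-1/2}\|)$.
\end{itemize}
The remaining task is to control the diagonal entries:
\[
\theta_{y,\min}^2 n_{\min}^y\le\Lambda^y_{kk}\le\theta_{y,\max}^2n_{\max}^y,
\]
and similarly on the $z$ side. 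It follows that $\|D_y^{1/2}\|\,\|D_y^{-1/2}\|\le\eta_y\sqrt{\tau_y}$ and $\|D_z^{1/2}\|\,\|D_z^{-1/2}\|\le\eta_z\sqrt{\tau_z}$. Multiplying these bounds yields $\kappa_\Omega\le\kappa_B\eta_y\eta_z\sqrt{\tau_y\tau_z}$.

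The one point needing care is that $D_z$ has dimension $K_z\times K_z$ with $K_z\ge K_y$. When applying Lemma~\ref{lem:svdRightMultiply} to $BD_z^{1/2}$ we need $k=K_y\le\operatorname{rank}(B)$, which holds. The left multiplication by $D_y^{1/2}$ is handled by transposing. There is no real obstacle; the argument is a routine extension of Lemma~\ref{lem:kappaUpperScBM}.
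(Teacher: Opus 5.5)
Your proposal is correct and follows the paper's proof essentially step for step. It uses the same normalization $\widetilde Y=\Theta_yYD_y^{-1/2}$, $\widetilde Z=\Theta_zZD_z^{-1/2}$, $\widetilde B=D_y^{1/2}BD_z^{1/2}$, the identification $\kappa_\Omega=\sigma_1(\widetilde B)/\sigma_{K_y}(\widetilde B)$, submultiplicativity for the top singular value, and Lemma~\ref{lem:svdRightMultiply} applied on both sides for the bottom one. If anything, you are slightly more careful than the paper in stating $\|D_y^{1/2}\|\le\theta_{y,\max}\sqrt{n_{\max}^y}$ and $\|D_y^{-1/2}\|\le(\theta_{y,\min}\sqrt{n_{\min}^y})^{-1}$ as inequalities, where the paper writes them as equalities.
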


\begin{proof}
Define the diagonal matrices
\[
D_y := Y^\top \Theta_y^2 Y = \operatorname{diag}\Big(\sum_{i:y_i=1}\theta_y(i)^2,\dots,\sum_{i:y_i=K_y}\theta_y(i)^2\Big),
\]
and
\[
D_z := Z^\top \Theta_z^2 Z = \operatorname{diag}\Big(\sum_{j:z_j=1}\theta_z(j)^2,\dots,\sum_{j:z_j=K_z}\theta_z(j)^2\Big).
\]

Both $D_y$ and $D_z$ are positive definite since each community is nonempty and all degree parameters are positive. Let
\[
\widetilde Y := \Theta_y Y D_y^{-1/2},\quad
\widetilde Z := \Theta_z Z D_z^{-1/2},\quad
\widetilde B := D_y^{1/2} B D_z^{1/2}.
\]

Then, we have
\begin{align*}
\widetilde Y^\top \widetilde Y = D_y^{-1/2} Y^\top \Theta_y^2 Y D_y^{-1/2} = I_{K_y},\quad
\widetilde Z^\top \widetilde Z = I_{K_z},
\end{align*}
and
\begin{align*}
\Omega = \Theta_y Y B Z^\top \Theta_z
= (\Theta_y Y D_y^{-1/2}) (D_y^{1/2} B D_z^{1/2}) (D_z^{-1/2} Z^\top \Theta_z)
= \widetilde Y \widetilde B \widetilde Z^\top.
\end{align*}

Since $\widetilde Y$ and $\widetilde Z$ have orthonormal columns, the nonzero singular values of $\Omega$ are exactly those of $\widetilde B$. Therefore $\kappa_{\Omega} = \kappa_{\widetilde B} := \sigma_1(\widetilde B)/\sigma_{K_y}(\widetilde B)$.

We now bound the largest and smallest positive singular values of $\widetilde B$. First, by submultiplicativity of the spectral norm, we have
\begin{align*}
\sigma_1(\widetilde B) = \|\widetilde B\| \le \|D_y^{1/2}\|\, \|B\|\, \|D_z^{1/2}\|
= \sqrt{\theta_{y,\max}^2 n_{\max}^y} \cdot \sigma_1(B) \cdot \sqrt{\theta_{z,\max}^2 n_{\max}^z}
= \theta_{y,\max}\theta_{z,\max}\sqrt{n_{\max}^y n_{\max}^z}\,\sigma_1(B).
\end{align*}

For the smallest positive singular value, we use the elementary fact that for any matrices $P,Q$ with $Q$ invertible and $P$ full column rank, $\sigma_k(P Q) \ge \sigma_k(P)/\|Q^{-1}\|$. Since $D_z^{1/2}$ and $D_y^{1/2}$ are invertible, we have
\begin{align*}
\sigma_{K_y}(\widetilde B)= \sigma_{K_y}(D_y^{1/2} B D_z^{1/2})\ge \frac{\sigma_{K_y}(D_y^{1/2} B)}{\|D_z^{-1/2}\|}
\ge \frac{\sigma_{K_y}(B)}{\|D_y^{-1/2}\|\,\|D_z^{-1/2}\|}= \theta_{y,\min}\theta_{z,\min}\sqrt{n_{\min}^y n_{\min}^z}\,\sigma_{K_y}(B),
\end{align*}
where we have used $\|D_y^{-1/2}\| = (\theta_{y,\min}\sqrt{n_{\min}^y})^{-1}$ and similarly for $D_z$.

Combining the upper and lower bounds yields
\[
\kappa_{\Omega} = \frac{\sigma_1(\widetilde B)}{\sigma_{K_y}(\widetilde B)}
\le \frac{\theta_{y,\max}\theta_{z,\max}\sqrt{n_{\max}^y n_{\max}^z}\,\sigma_1(B)}
        {\theta_{y,\min}\theta_{z,\min}\sqrt{n_{\min}^y n_{\min}^z}\,\sigma_{K_y}(B)}
= \kappa_B \eta_y \eta_z \sqrt{\tau_y \tau_z}.
\]
\end{proof}
\begin{lem}\label{lem:mu12DC}
Under DCScBM, we have \(\mu_1\le \frac{\eta_y^2}{\beta_y}\) and \(\mu_2\le \frac{\eta_z^2 K_z}{\beta_z K_y}\).
\end{lem}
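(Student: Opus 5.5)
The plan is to compute the row norms of $U$ and $V$ explicitly, using the structure from the proof of Lemma~\ref{lem:SVstructureDC}, and then bound the maxima by the degree-heterogeneity ratios. For $\mu_1$, write $D_y:=Y^\top\Theta_y^2Y=\diag(\Lambda_{11},\dots,\Lambda_{K_yK_y})$ with $\Lambda_{kk}=\sum_{i:y_i=k}\theta_y(i)^2$. As shown in the proof of Lemma~\ref{lem:SVstructureDC}, $U=\Theta_yYD_y^{-1/2}Q$ with $Q$ orthogonal. Hence for $y_i=k$ we get $\|U_{i,:}\|_2^2=\theta_y(i)^2/\Lambda_{kk}$. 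Next use $\theta_y(i)\le\theta_{y,\max}$ and $\Lambda_{kk}\ge n_k^y\theta_{y,\min}^2\ge n^y_{\min}\theta_{y,\min}^2$. This gives $\|U_{i,:}\|_2^2\le \eta_y^2/n_{\min}^y$, and therefore
\[
\mu_1=\frac{n_y}{K_y}\max_i\|U_{i,:}\|_2^2\le\frac{\eta_y^2 n_y}{K_y n^y_{\min}}=\frac{\eta_y^2}{\beta_y}.
\]

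For $\mu_2$, $V$ need not have the same clean form when $K_y<K_z$, so I will use the projection-domination argument from the proof of Lemma~\ref{lem:muBoundFinal}. Since $\Omega^\top=\Theta_zZB^\top Y^\top\Theta_y$, the row space of $\Omega$ lies in $\operatorname{col}(\Theta_zZ)$. The orthogonal projection onto $\operatorname{col}(\Theta_zZ)$ is $P_z:=\Theta_zZD_z^{-1}Z^\top\Theta_z$, where $D_z:=Z^\top\Theta_z^2Z$ is diagonal and invertible. Because $VV^\top$ projects onto a subspace of $\operatorname{col}(\Theta_zZ)$, we have $VV^\top\preceq P_z$ in the Loewner order. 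Comparing $(j,j)$ entries, for $z_j=l$ this yields $\|V_{j,:}\|_2^2\le\theta_z(j)^2/\sum_{j':z_{j'}=l}\theta_z(j')^2\le \eta_z^2/n^z_{\min}$. Multiplying by $n_z/K_y$ and substituting $n^z_{\min}=\beta_zn_z/K_z$ gives $\mu_2\le \eta_z^2K_z/(\beta_zK_y)$.

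There is no real obstacle. The one step needing care is justifying the Loewner inequality $VV^\top\preceq P_z$, which rests on the subspace inclusion $\operatorname{row}(\Omega)\subseteq\operatorname{col}(\Theta_zZ)$ and the standard fact that $P_1\preceq P_2$ whenever $\operatorname{range}(P_1)\subseteq\operatorname{range}(P_2)$ for orthogonal projections. We also need that diagonal entries of a projection onto $\operatorname{col}(\Theta_zZ)$ decouple by community. This holds because $D_z$ is diagonal, as the columns of $\Theta_zZ$ have disjoint supports.
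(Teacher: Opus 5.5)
Your proposal is correct and follows essentially the same route as the paper: the representation $U=\Theta_yYD_y^{-1/2}Q$ with $Q$ orthogonal gives $\|U_{i,:}\|_2^2=\theta_y(i)^2/\Lambda_{kk}\le\eta_y^2/n^y_{\min}$. For $\mu_2$, the Loewner domination $VV^\top\preceq\Theta_zZD_z^{-1}Z^\top\Theta_z$ gives the bound without needing an explicit form for $V$ when $K_y<K_z$, and the final substitutions $n^y_{\min}=\beta_yn_y/K_y$ and $n^z_{\min}=\beta_zn_z/K_z$ match the paper exactly.
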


\begin{proof}
For $\mu_1$, let \(D_y := Y^\top \Theta_y^2 Y = \operatorname{diag}(\Lambda_1,\dots,\Lambda_{K_y})\) and \(\Lambda_k := \sum_{i:y_i=k} \theta_y(i)^2\). Since each community is nonempty and $\theta_y(i)>0$, every $\Lambda_k>0$, so $D_y$ is invertible. Define the matrix
\begin{align*}
\widetilde U := \Theta_y Y D_y^{-1/2} \in \mathbb{R}^{n_y \times K_y}. 
\end{align*}

A direct calculation gives
\begin{align*}
\widetilde U^\top \widetilde U = D_y^{-1/2} Y^\top \Theta_y^2 Y D_y^{-1/2} = I_{K_y},
\end{align*}
so $\widetilde U$ has orthonormal columns. Moreover,
\[
\operatorname{col}(\Omega) = \operatorname{col}(\Theta_y Y B Z^\top \Theta_z) \subseteq \operatorname{col}(\Theta_y Y),
\]
and since $\rank(B)=K_y$, we have $\rank(\Omega)=K_y$ and thus $\operatorname{col}(\Omega)=\operatorname{col}(\Theta_y Y)=\operatorname{col}(\widetilde U)$. Hence there exists an orthogonal matrix $Q \in \mathbb{R}^{K_y \times K_y}$ such that
\begin{align}
U = \widetilde U Q = \Theta_y Y D_y^{-1/2} Q. \label{eq:Urep}
\end{align}

For any row node $i$ with $y_i=k$, from Equation~\eqref{eq:Urep} we have
\[
U_{i,:} = \theta_y(i) \cdot (D_y^{-1/2} Q)_{k,:}.
\]

Using the orthogonality of $Q$ gives
\begin{align*}
(D_y^{-1/2} Q)(D_y^{-1/2} Q)^\top = D_y^{-1/2} Q Q^\top D_y^{-1/2} = D_y^{-1}, 
\end{align*}
so the squared norm of the $k$-th row of $D_y^{-1/2}Q$ equals the $k$-th diagonal entry of $D_y^{-1}$:
\begin{align*}
\|(D_y^{-1/2} Q)_{k,:}\|_2^2 = (D_y^{-1})_{kk} = \frac{1}{\Lambda_k}. 
\end{align*}

Therefore, we get
\begin{align*}
\|U_{i,:}\|_2^2 = \frac{\theta_y(i)^2}{\Lambda_k}
\le \frac{\theta_{y,\max}^2}{\theta_{y,\min}^2 \, n_k^y}
= \frac{\eta_y^2}{n_k^y}
\le \frac{\eta_y^2}{n_{\min}^y},
\end{align*}
where the first inequality uses $\Lambda_k \ge \theta_{y,\min}^2 n_k^y$, and the last follows from $n_k^y \ge n_{\min}^y$. Taking the maximum over all $i$ gives
\begin{align*}
\max_i \|U_{i,:}\|_2^2 \le \frac{\eta_y^2}{n_{\min}^y}.
\end{align*}

By the definitions $\beta_y := K_y n_{\min}^y / n_y$ and $\mu_1 := (n_y/K_y)\max_i \|U_{i,:}\|_2^2$, we obtain
\begin{align*}
\mu_1 \le \frac{n_y}{K_y} \cdot \frac{\eta_y^2}{n_{\min}^y}
= \frac{\eta_y^2}{\beta_y}.
\end{align*}

For $\mu_2$, let \(D_z := Z^\top \Theta_z^2 Z = \operatorname{diag}(\Gamma_1,\dots,\Gamma_{K_z}), \Gamma_l := \sum_{j:z_j=l} \theta_z(j)^2\). Define the orthogonal projection onto the column space of $\Theta_z Z$ as
\begin{align*}
P_Z := \Theta_z Z D_z^{-1} Z^\top \Theta_z. 
\end{align*}

Since $\Omega^\top = \Theta_z Z B^\top Y^\top \Theta_y$, we have
\begin{align*}
\operatorname{row}(\Omega) = \operatorname{col}(\Omega^\top) \subseteq \operatorname{col}(\Theta_z Z). 
\end{align*}

Let $P_{\operatorname{row}(\Omega)} = V V^\top$ denote the orthogonal projection onto $\operatorname{row}(\Omega)$. By the monotonicity of projections with respect to subspace inclusion, $P_{\operatorname{row}(\Omega)} \preceq P_Z$ in the Loewner order. Hence, for any column node $j$ with $z_j=l$, we have
\begin{align*}
\|V_{j,:}\|_2^2 = (V V^\top)_{jj} \le (P_Z)_{jj}
= \frac{\theta_z(j)^2}{\Gamma_l}. 
\end{align*}

Using $\Gamma_l \ge \theta_{z,\min}^2 n_l^z$ gives
\begin{align*}
\|V_{j,:}\|_2^2 \le \frac{\theta_{z,\max}^2}{\theta_{z,\min}^2 \, n_l^z}
= \frac{\eta_z^2}{n_l^z}
\le \frac{\eta_z^2}{n_{\min}^z},
\end{align*}
where the last inequality follows from $n_l^z \ge n_{\min}^z$. Taking the maximum over $j$ yields
\begin{align*}
\max_j \|V_{j,:}\|_2^2 \le \frac{\eta_z^2}{n_{\min}^z}.
\end{align*}

By the definitions $\beta_z := K_z n_{\min}^z / n_z$ and $\mu_2 := (n_z/K_y)\max_j \|V_{j,:}\|_2^2$, we have
\begin{align*}
\mu_2 \le \frac{n_z}{K_y} \cdot \frac{\eta_z^2}{n_{\min}^z}
= \frac{n_z}{K_y} \cdot \frac{\eta_z^2}{\beta_z n_z / K_z}
= \frac{\eta_z^2 K_z}{\beta_z K_y}.
\end{align*}
\end{proof}

\begin{lem}\label{lem:muBoundDC}
Under DCScBM, suppose the following conditions hold for absolute constants $\beta, C_\theta, C_\phi, C_K, C'_B, c_B > 0$:
\begin{align*}
&\beta_y \ge \beta,\quad \beta_z \ge \beta,\quad\eta_y \le C_\theta,\quad \eta_z \le C_\phi,\quad K_z \le C_K K_y, \quad\|B\|_{\max} \le C'_B,\quad \|B\|_F^2 \ge c_B K_y K_z. 
\end{align*}
Then, we have \(\mu = O(1)\).
\end{lem}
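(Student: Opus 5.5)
The plan is to bound the three incoherence parameters $\mu_0,\mu_1,\mu_2$ separately by absolute constants, mirroring the proof of Lemma~\ref{lem:muBoundFinal} and tracking the extra factors introduced by $\Theta_y,\Theta_z$.

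\emph{Bounds on $\mu_1$ and $\mu_2$.} Lemma~\ref{lem:mu12DC} already gives
\[
\mu_1\le \frac{\eta_y^2}{\beta_y}\le \frac{C_\theta^2}{\beta},\qquad
\mu_2\le \frac{\eta_z^2K_z}{\beta_zK_y}\le \frac{C_\phi^2C_K}{\beta},
\]
using $\beta_y,\beta_z\ge\beta$, $\eta_y\le C_\theta$, $\eta_z\le C_\phi$ and $K_z\le C_KK_y$.

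\emph{Bound on $\mu_0$.} For the numerator, each entry satisfies $\Omega_{ij}=\theta_y(i)B_{y_i,z_j}\theta_z(j)$, so
\[
\|\Omega\|_\infty\le \theta_{y,\max}\theta_{z,\max}C_B'=\theta_{\max}C_B'.
\]
For the denominator, group the sum by blocks:
\[
\|\Omega\|_F^2=\sum_{k,l}B_{kl}^2\sum_{i:y_i=k}\theta_y(i)^2\sum_{j:z_j=l}\theta_z(j)^2
\ge \theta_{\min}^2\,n_{\min}^yn_{\min}^z\,\|B\|_F^2 .
\]
Substituting $n_{\min}^y=\beta_yn_y/K_y$, $n_{\min}^z=\beta_zn_z/K_z$ and $\|B\|_F^2\ge c_BK_yK_z$ gives
\[
\|\Omega\|_F^2\ge \theta_{\min}^2c_B\beta_y\beta_z n_yn_z .
\]
Hence
\[
\mu_0\le \frac{\theta_{\max}^2(C_B')^2}{\theta_{\min}^2c_B\beta^2}=\frac{\eta_y^2\eta_z^2(C_B')^2}{c_B\beta^2}\le\frac{C_\theta^2C_\phi^2(C_B')^2}{c_B\beta^2},
\]
where we used $\theta_{\max}/\theta_{\min}=\eta_y\eta_z$.

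Taking the maximum of the three constants yields $\mu=O(1)$. No step is a real obstacle; the only point needing care is the Frobenius lower bound for $\mu_0$. It must keep the factor $\theta_{\min}^2$, whose ratio with $\theta_{\max}^2$ is what produces $\eta_y^2\eta_z^2$, rather than using a cruder bound that loses track of how the degree parameters are distributed.
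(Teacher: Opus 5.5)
Your proposal is correct and follows essentially the same route as the paper. You bound $\mu_1$ and $\mu_2$ via Lemma~\ref{lem:mu12DC}, and you bound $\mu_0$ by combining the entrywise bound $\|\Omega\|_\infty\le C_B'\theta_{\max}$ with the block-wise Frobenius lower bound $\|\Omega\|_F^2\ge c_B\beta^2\theta_{\min}^2 n_yn_z$, which yields $\mu_0\le (C_B')^2C_\theta^2C_\phi^2/(c_B\beta^2)$.
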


\begin{proof}
By Lemma~\ref{lem:mu12DC}, we immediately obtain
\begin{align*}
\mu_1 \le \frac{\eta_y^2}{\beta_y} \le \frac{C_\theta^2}{\beta}=O(1), \quad\mu_2 \le \frac{\eta_z^2 K_z}{\beta_z K_y} \le \frac{C_\phi^2 C_K}{\beta}=O(1).
\end{align*}

For $\mu_0$, since $\Omega=\Theta_y Y B Z^\top \Theta_z$ and $\|B\|_{\max}\le C'_B$, we have the entrywise bound
\begin{align*}
\|\Omega\|_{\max}
&\le \theta_{y,\max}\theta_{z,\max}\, \|B\|_{\max}
\le C'_B \theta_{y,\max}\theta_{z,\max}. 
\end{align*}

For the Frobenius norm, using $\|B\|_F^2\ge c_B K_y K_z$ and $n_{\min}^y\ge \beta n_y/K_y$, $n_{\min}^z\ge \beta n_z/K_z$ gives
\begin{align*}
\|\Omega\|_F^2
&\ge \theta_{y,\min}^2\theta_{z,\min}^2\, n_{\min}^y n_{\min}^z \|B\|_F^2
\ge c_B\beta^2 \theta_{y,\min}^2\theta_{z,\min}^2\, n_y n_z. 
\end{align*}

Thus, we have
\begin{align*}
\mu_0
&= \frac{n_y n_z \|\Omega\|_{\max}^2}{\|\Omega\|_F^2}
\le \frac{(C'_B)^2}{c_B\beta^2}
\left(\frac{\theta_{y,\max}}{\theta_{y,\min}}\right)^2
\left(\frac{\theta_{z,\max}}{\theta_{z,\min}}\right)^2
\le \frac{(C'_B)^2 C_\theta^2 C_\phi^2}{c_B\beta^2}
= O(1). 
\end{align*}

Thus $\mu=\max\{\mu_0,\mu_1,\mu_2\}=O(1)$. 
\end{proof}
\bibliographystyle{model5-names}\biboptions{authoryear}
\bibliography{refERDCScBM}

\end{document}